\documentclass[USenglish,oneside,twocolumn]{article}

\usepackage[utf8]{inputenc}
\usepackage[big]{dgruyter_NEW}

\usepackage{amsmath}
\usepackage{amssymb}

\usepackage[T1]{fontenc}    
\usepackage{hyperref}       
\usepackage{url}            
\usepackage{booktabs}       
\usepackage{nicefrac}       
\usepackage{microtype}      

\usepackage{float}

\usepackage{hyphenat}
\usepackage{xspace}

\usepackage{algorithm}
\usepackage[noend]{algpseudocode}

\makeatletter
\def\plist@algorithm{Alg.\space}
\makeatother

\usepackage{enumitem}

\usepackage{amsthm}
\theoremstyle{plain}
\newcounter{thmcounter}
\newtheorem{theorem}[thmcounter]{Theorem}
\newtheorem{lemma}[thmcounter]{Lemma}
\newtheorem{definition}[thmcounter]{Definition}
\newtheorem{corollary}[thmcounter]{Corollary}

\newcommand{\hide}[1]{}

\newcommand{\xhdr}[1]{\vspace{1.7mm}\noindent{{\bf #1.}}}

\newcommand{\Secref}[1]{Sec.~\ref{#1}}

\newcommand{\eqnref}[1]{Eq.~\ref{#1}}

\newcommand{\Thmref}[1]{Thm.~\ref{#1}}

\newcommand{\Lemmaref}[1]{Lemma~\ref{#1}}

\newcommand{\Tabref}[1]{Table~\ref{#1}}

\newcommand{\Figref}[1]{Fig.~\ref{#1}}

\newcommand{\Appref}[1]{Appendix~\ref{#1}}

\newcommand{\Algref}[1]{Alg.~\ref{#1}}

\hyphenation{
}

\usepackage{appendix}

\usepackage{bbm} 
\usepackage{bm}
\usepackage{mathtools}
\DeclarePairedDelimiter\abs{\lvert}{\rvert}
\DeclarePairedDelimiter\norm{\lVert}{\rVert}

\usepackage{amsmath}

\DeclareMathOperator{\sign}{sign}

\DeclareMathOperator{\Lap}{Lap}

\DeclareMathOperator*{\argmin}{arg\,min}

\usepackage[normalem]{ulem} 

\newcommand{\R}{\mathbb{R}}

\newcommand{\defeq}{\mathrel{\vcentcolon=}}

\DeclareMathOperator{\Range}{Range}

\newcommand{\x}{\bm{x}}
\newcommand{\xp}{\bm{x'}}


\usepackage{cryptocode}
\usepackage{wrapfig}

\newcommand\restr[2]{{
  \left.\kern-\nulldelimiterspace 
  #1 
  \right|_{#2} 
  }}

\DeclareMathOperator{\odpc}{ODPComposition}
\DeclareMathOperator{\odpfc}{ODPFilterComposition}
\DeclareMathOperator{\im}{M_{\text{iter}}}

\DeclareMathOperator{\comp}{COMP}
\DeclareMathOperator{\optdelta}{OptDel}
\DeclareMathOperator{\ERMOutput}{ERMOutputPert}
\DeclareMathOperator{\LogRegTest}{LogRegWithTest}

\newcommand{\pluseq}{\mathrel{+}=}
\newcommand{\minuseq}{\mathrel{-}=}

\usepackage{csvsimple}
\usepackage{booktabs,subcaption}

 

  
\begin{document}

  \author*[1]{Valentin Hartmann}

  \author[2]{Vincent Bindschaedler}

  \author[3]{Alexander Bentkamp}

  \author[4]{Robert West}

  \affil[1]{EPFL, E-mail: \url{valentin.hartmann@epfl.ch}}

  \affil[2]{University of Florida, E-mail: \url{vbindsch@cise.ufl.edu}}

  \affil[3]{State Key Laboratory of Computer Science, Institute of Software, Chinese Academy of Sciences \& Vrije Universiteit Amsterdam, E-mail: \url{bentkamp@ios.ac.cn}}

  \affil[4]{EPFL, E-mail: \url{robert.west@epfl.ch}}

  \title{\huge Privacy accounting \(\varepsilon\)conomics: Improving differential privacy composition via a posteriori bounds}

  \runningtitle{Privacy accounting \(\varepsilon\)conomics}

   \begin{abstract} 
 {Differential privacy (DP) is a widely used notion for reasoning about privacy when publishing aggregate data. 
In this paper, we observe that certain DP mechanisms are amenable to {\em a posteriori} privacy analysis that exploits the fact that some outputs leak less information about the input database than others. 
To exploit this phenomenon, we introduce output differential privacy (ODP) and a new composition experiment, and leverage these new constructs to obtain significant privacy budget savings and improved privacy--utility tradeoffs under composition.
All of this comes at no cost in terms of privacy; we do not weaken the privacy guarantee.

To demonstrate the applicability of our {\em a posteriori} privacy analysis techniques, we analyze two well-known mechanisms: the Sparse Vector Technique and the Propose-Test-Release framework. We then show how our techniques can be used to save privacy budget in more general contexts: when a differentially private iterative mechanism terminates before its maximal number of iterations is reached, and when the output of a DP mechanism provides unsatisfactory utility. Examples of the former include iterative optimization algorithms, whereas examples of the latter include training a machine learning model with a large generalization error. Our techniques can be applied beyond the current paper to refine the analysis of existing DP mechanisms or guide the design of future mechanisms.}
  \end{abstract}
  
  \keywords{differential privacy, differential privacy composition}



\maketitle

\section{Introduction}
Differential privacy (DP) is a formal notion of privacy for aggregate data releases from databases. Its definition characterizes the extent of what the output of a randomized aggregation mechanism \(M\) that is invoked on a database reveals about individual database records. The guarantee is given in terms of the indistinguishability of neighboring databases, that is, databases \(\x\) and \(\xp\) where \(\x\) can be obtained from \(\xp\) by either adding/removing one record or by changing the values of one record:
\begin{definition}[Differential Privacy \cite{dwork2006calibrating,dwork2006our}]\label{def:dp}
A randomized algorithm \(M\) is \emph{\((\varepsilon,\delta)\)\hyp differentially private} if for all pairs of neighboring databases \(\x, \xp\in\mathcal{D}\) and for all \(S\subseteq \Range(M)\),
\begin{equation*}
    \Pr(M(\x)\in S)\leq e^{\varepsilon} \Pr(M(\xp)\in S) + \delta.
\end{equation*}
\end{definition}
In the definition, \(\mathcal{D}\) denotes the space of databases. If \(\delta=0\), we call \(M\) an \emph{\(\varepsilon\)\hyp differentially private} mechanism and say that \(M\) fulfills \emph{pure differential privacy}.

The DP guarantee holds over all possible sets of outputs. There is a good reason for this: we do not want to end up in a situation where we get unlucky with the database or the randomness of the mechanism \(M\) and leak more information about a database record than we intended to when releasing the output of \(M\). However, in this paper we show that when composing mechanisms, i.e., invoking a sequence of mechanisms instead of a single one, one can exploit DP guarantees that only hold w.r.t.\ proper subsets of \(\Range(M)\). We capture the collection of these subset\hyp specific guarantees in the \emph{output differential privacy} (ODP) guarantee of \(M\), which consists of a partition of \(\Range(M)\) and privacy guarantees associated with each set in the partition. By adapting mechanisms later in the sequence to the privacy guarantees associated with the outputs of previously invoked mechanisms, one can improve utility in two ways: (1) by reducing the amount of noise that is required for guaranteeing privacy, or (2) by increasing the number of mechanisms that are invoked. All of this is achieved while retaining the same standard DP guarantee for the sequence of mechanisms.

We emphasize and expand on this crucial last point: in this paper we do {\em not} weaken the DP guarantee, we do {\em not} use or propose a relaxed definition of privacy, and we do {\em not} leak any additional private information by applying our techniques. In fact, all the mechanisms we consider satisfy the traditional definition of DP (Def.~\ref{def:dp}). Instead, we simply observe that for some DP mechanisms, some outputs happen to leak less private information than other outputs. We show how to exploit this fact to improve the privacy--utility tradeoff offered by these mechanisms in practice.

To make the concept of ODP more concrete, we start with an example. Let \(f:\mathcal{D}\rightarrow\R\) be a function that maps databases to values in \(\R\) and has sensitivity \(1\), i.e., \(\abs{f(\x)-f(\xp)}\leq 1\) for all neighboring databases \(\x,\xp\in\mathcal{D}\). Then the Laplace mechanism that releases \(f(\x) + \Lap(1/\varepsilon)\), the value of \(f\) plus noise drawn from the Laplace distribution with parameter \(1/\varepsilon\), fulfills \(\varepsilon\)\hyp DP \cite{dwork2006calibrating}. With the Laplace mechanism as a building block we define the toy mechanism \(M_{\text{toy}}\) that takes as input a database \(\x\) as follows:
\begin{enumerate}
    \item Flip an unbiased coin \(c\).
    \item If \(c\) came up heads, return \(f(\x) + \Lap(1/\varepsilon)\).
    \item If \(c\) came up tails, return \(\bot\).
\end{enumerate}
Here \(\bot\) is a symbol that is independent of \(\x\).
If \(c\) comes up heads, the \(\varepsilon\)\hyp differentially private Laplace mechanism is invoked, whose output depends on \(\x\) and might contain (a limited amount of) information about individual database records.
If \(c\) comes up tails, however, the output is independent of \(\x\) and thus does not contain any information about individual records in \(\x\). The fact that \(c\) comes up tails also does not reveal any information about \(\x\) since the coin flip is independent of \(\x\) as well. Thus, an adversary learns nothing about \(\x\) if they receive \(\bot\) as the output of \(M_{\text{toy}}\). This means that in the case of a \(\bot\)\hyp output the adversary should be allowed to receive the result of a second \(\varepsilon\)\hyp differentially private mechanism if the overall privacy budget is \(\varepsilon\). In the case where an output is produced via the Laplace mechanism, however, the adversary should not receive a second output.

While this is a toy example where the output \(\bot\) serves no practical purpose, we show examples of well\hyp known mechanisms that exhibit the same behavior --- some outputs leak more private information than others --- notably the Sparse Vector Technique \cite{dwork2009complexity,svt2017} and the mechanisms from the Propose-Test-Release framework \cite{dwork2009differential}.

\subsection{Our contributions}
In this paper we introduce the concept of \emph{output differential privacy} (ODP), which can be used to more accurately describe the leakage of private information of mechanisms whose different outputs reveal different amounts of information about the database they are invoked on. Since ODP is an extension of DP, there exists a trivial ODP guarantee for every DP mechanism. However, our framework only yields improvements for mechanisms with non\hyp trivial ODP guarantees. This class of mechanisms includes the well\hyp known Sparse Vector Technique (SVT) and the mechanisms from the Propose\hyp Test\hyp Release (PTR) framework, but also mechanisms that can be derived from DP mechanisms with only trivial ODP guarantees (\Secref{sec:iteration_length} and \ref{sec:recovering_budget}), even in a black\hyp box fashion (\Secref{sec:iteration_length}). When composing mechanisms with non\hyp trivial ODP guarantees with other DP mechanisms, the more fine\hyp grained ODP guarantees can be used to improve the utility of the composition over using the coarse DP guarantees. Utility here is measured in terms of the noise required to be added or the maximal number of allowed mechanism invocations to not exceed a given DP guarantee. This is achieved via a novel composition protocol that keeps track of the actual leakage of the mechanism's outputs instead of using the leakage of the worst\hyp case output, in a way that preserves standard DP.

\xhdr{How to benefit from ODP}
For simplicity assume that we only compose one ODP mechanism \(M_1\) with one other DP mechanism \(M_2\). After having produced an output \(s_1\) via \(M_1\), we check how much \(s_1\) would reveal about the worst\hyp case database that \(M_1\) could have been invoked on. For a worst\hyp case output this bound will not be better than the regular DP bound. For a non\hyp worst\hyp case output such as the \(\bot\) from the example of \(M_{\text{toy}}\), however, we have a better bound on the leakage than the DP bound and need to subtract less from the remaining privacy budget. This means that there is more privacy budget left to spend on \(M_2\), and hence \(M_2\) can produce a less noisy and more accurate result. We give some concrete examples for this in \Secref{sec:analysis_existing}.
Alternatively, we could decide to spend the saved privacy budget on invoking a third mechanism \(M_3\).
In some cases it might not be of interest to invoke more than one mechanism on the database, e.g., when the database serves only as training data for one particular machine learning (ML) model. In such cases saved privacy budget can be spent on mechanism invocations on other databases to which individuals from the first database might have contributed as well. Examples for this are user databases for different products of the same company or the results of different surveys from the same city.

\xhdr{Useful for small to medium length compositions}
There has been a lot of research on better asymptotic composition bounds (advanced composition bounds) when the number of composed mechanisms is large (see \Secref{sec:related}), whereas our framework yields improvements for as little as two composed mechanisms, up to a small to medium number of mechanisms. For a more thorough discussion, see \Appref{sec:adv_comp}. There we also discuss the potential for an advanced composition theorem within our framework.

\xhdr{A formally verified composition theorem}
Motivated by mistakes in previous composition theorems \cite{svt2017,rogers2016privacy}, we have formally verified the proof of our ODP composition theorem in the proof assistant Lean \cite{demoura2015lean}.
Proof assistants are software tools
to develop and check formal mathematical proofs. They are used in various areas of computer science --- e.g., to verify
algorithms and data structures, programming language semantics, security protocols, or hardware specifications. We make the formal proof available online and hope that it can help future formalization endeavors of DP mechanisms or theorems.
In this paper we are not concerned with measurability and assume that all sets that we deal with are measurable. However, in the formal proof of the composition theorem we also show measurability.

\subsection{Organization of the paper}
We start by summarizing prior work and describing how it relates to ODP in \Secref{sec:related}. We then formally introduce our ODP framework, consisting of definitions and a novel composition protocol, in \Secref{sec:method}. The privacy proof of the composition protocol is deferred to the appendix. As already mentioned, the ODP framework can be applied to PTR and the SVT, which we describe in detail in \Secref{sec:analysis_existing}.
In \Secref{sec:iteration_length} we demonstrate how ODP can be used to save privacy budget in iterative mechanisms with a non\hyp fixed number of iterations. ODP also allows to recover already spent privacy budget in case the output of a DP mechanism is unsatisfactory, as we show in \Secref{sec:recovering_budget}.
We conclude in \Secref{sec:conclusions}.
The appendix contains the discussion of a possible extension to advanced composition (\Appref{sec:adv_comp}) and most of the proofs.

\section{Related work}
\label{sec:related}
One of the two major components of the ODP framework is a new composition experiment that improves utility in certain cases.
Extending privacy guarantees from a single mechanism to sequences of mechanisms has been an important subject of study from the early days of DP research. The first result for the composition of mechanisms is the simple composition theorem \cite{dwork2006our}, which states that a mechanism that invokes \(k\) \((\varepsilon,\delta)\)\hyp DP mechanisms fulfills \((k\varepsilon,k\delta)\)\hyp DP. This statement also holds for pure DP with \(\delta=0\) and cannot be improved upon if pure DP is also required for the composition. However, Dwork et al.\ \cite{dwork2010boosting} later proved the advanced composition theorem, which shows that for the cost of an increase in the \(\delta\)\hyp part of the composition guarantee, the \(\varepsilon\)\hyp part can be decreased to \(\mathcal{O}(\varepsilon^2 k + \varepsilon\sqrt{k})\). Since then, optimal composition theorems both for the case of homogeneous composition \cite{kairouz2017composition} (where all composed mechanisms have the same \((\varepsilon,\delta)\)\hyp guarantee) and heterogeneous composition \cite{murtagh2016complexity} (where they may have different \((\varepsilon,\delta)\)\hyp guarantees) have been found.
The optimality of these composition theorems holds w.r.t.\ general \((\varepsilon,\delta)\)\hyp DP mechanisms, that is, if the DP guarantees of the mechanisms are fixed, but the data analyst is free to choose any mechanisms that fulfill these DP guarantees. By restricting the choice of mechanisms, tighter composition bounds can be given. To this end, relaxations of DP such as concentrated DP \cite{dwork2016concentrated}, the Rényi\hyp divergence based zero-concentrated DP \cite{bun2016concentrated} and Rényi DP \cite{mironov2017renyi}, and \(f\)\hyp DP \cite{dong2019gaussian} have been introduced. These definitions can capture the composition behavior of specific mechanisms more precisely.
What these improvements have in common with advanced composition is that they only give an improved level of privacy over simple composition if the number of composed mechanisms is large enough (see \Tabref{table:composition_cutoff}).

The most flexible setting that classic composition theorems consider is one where the number of mechanisms to be invoked and their DP guarantees are fixed ahead of time, but where under these constraints the data analyst may adaptively choose the mechanism to be invoked in each step and the database to invoke it on based on the outputs of the previous mechanisms. This is formalized in a so\hyp called composition experiment \cite{dwork2010boosting}.
The ODP composition experiment that we propose gives the data analyst a level of freedom that goes beyond what is possible in the classic setting: the data analyst may adaptively choose not only mechanisms and databases, but also neither the number of mechanisms nor their DP guarantees need to be fixed ahead of time. This setting has been previously studied by Rogers et al.\ \cite{rogers2016privacy}. They introduce privacy filters, which are functions that can be used as stopping rules to prevent a given privacy budget from being exceeded. This is essentially the same as how we prevent the choice of mechanisms whose invocation would exceed the privacy budget in our composition experiment. In fact, we can even almost equivalently reformulate our composition experiment using a new variant of privacy filters (see \Appref{sec:adv_comp}). As opposed to us, Rogers et al.\ give advanced composition results that are asymptotically (in the number of queries) better than our simple composition results. These results have been improved upon by showing that composition results obtained via Rényi DP also hold for the privacy filter setting \cite{feldman2021individual,lecuyer2021practical}. However, our composition results are still superior whenever the number of queries is small to medium.

The observation that some outputs of DP mechanisms leak less private information than others and that in these cases one only has to account for the leakage of the actual output has been exploited before, though only in the design of quite specific types of mechanisms and not for developing a general framework as we do. Propose\hyp Test\hyp Release (PTR) \cite{dwork2009differential} is a method for designing DP mechanisms based on robust statistics. PTR mechanisms consist of chains of mechanisms of a particular type whose different outputs leak different amounts of private information, and the authors exploit the fact that only certain sequences of outputs are possible to give better DP guarantees. The Sparse Vector Technique (SVT) \cite{dwork2009complexity,svt2017} is a technique for releasing the (binary) results of a sequence of threshold queries with DP, in a way that each positive result contributes a certain amount to the private leakage, but all negative results together only contribute a fixed amount, which can be used to output arbitrarily many negative results with a fixed privacy budget. The DP mechanism for top\hyp \(k\) selection by Durfee and Rogers \cite{durfee2019practical} can be seen as a combination of the ideas of PTR and the SVT. The goal of their mechanism is to return the top\hyp \(k\) elements from a database. However, the mechanism might return less than \(k\) elements. In that case, it can be invoked again multiple times until \(k\) elements have been returned, with an additional cost in \(\delta\) but without any additional cost in \(\varepsilon\).
We dedicate \Secref{sec:analysis_existing} to PTR and the SVT, where we show how with ODP we can reduce the privacy budget that these mechanisms use up when composing them with other mechanisms.

Dwork and Rothblum \cite{dwork2016concentrated} formalize the distribution of the amount of leakage of private information of mechanisms over their outputs via the so\hyp called privacy loss random variable. They --- and later Sommer et al.\ \cite{sommer2019privacy} --- use the fact that it is unlikely that a mechanism will, over many iterations, always produce an output with a high privacy loss to show improved composition bounds. These are, however, {\em a priori} bounds that do not take into account the privacy loss of the actually produced outputs.

Ligett et al.\ \cite{ligett2019accuracy} do compute the privacy loss of the actually produced outputs when computing noisy expected risk minimization (ERM) models. They consider the setting where a model does not need to fulfill a predetermined privacy requirement, but instead its loss should not exceed a predetermined value. They propose algorithms to compute the most private model that still fulfills the loss requirement. The authors introduce \emph{ex\hyp post DP} to measure the privacy of a model, which is the special case of our ODP definition when \(\delta=0\). This is why all algorithms proposed in their paper are compatible with our new composition theorem. As opposed to us, Ligett et al.\ do not provide a way to go from ex\hyp post DP to standard DP. Our ODP framework thus widens the applicability of their mechanisms.

We are not the first to employ automated reasoning to
verify differential privacy.
While we restrict ourselves to verifying our abstract composition
theorem, others have gone a step further
and developed tools to verify differential privacy of
concrete programs.
Barthe et al.\ \cite{barthe2013probabilistic} developed a specialized Hoare logic, later extended by Barthe and other
colleagues \cite{barthe2016proving},
and implemented this logic in the toolbox CertiPriv, based on
the Coq proof assistant \cite{bertot2013interactive}.
An alternative approach by Barthe et al.\ \cite{barthe2014proving} transforms probabilistic programs into nonprobabilistic programs such that proving the transformed program to fulfill a certain specification establishes differential privacy of the original program. 
Later approaches \cite{zhang2017lightdp,wang2019proving,zhang2020testing,wang2020checkdp,bichsel2018dpfinder} rely on the SMT solver Z3 \cite{demoura2008z3},
the MaxSMT solver $\nu$Z \cite{bjorner2015nuZ}, or the probabilistic analysis tool PSI \cite{gehr2016psi}
to minimize the manual effort necessary to prove or disprove differential privacy.
Wang et al.'s tool DPGen \cite{wang2021dpgen} can even transform programs violating differential privacy into
differentially private ones.
Recent work by Bichsel et al.\ \cite{bichsel2021dpsniper} uses machine learning to detect differential privacy violations.

\section{Output differential privacy}
\label{sec:method}
In this section we introduce our ODP framework. It is an extension of DP: it contains DP as a special case, but allows for more precise, output\hyp specific privacy accounting.

\begin{definition}[Output Differential Privacy (ODP)]
\label{def:odp}
Let \(\mathcal{O}\) be a set, and let \(\mathcal{P}=\{P_k\}_{k\in\mathcal{K}}\) be a partition of \(\mathcal{O}\), where \(\mathcal{K}\) is a countable index set. Let \(\mathcal{E}:\mathcal{P}\rightarrow \R_{\geq 0}\) be a function that assigns to each set in the partition a non\hyp negative value, and let \(\delta \geq 0\). A randomized mechanism \(M\) with output set \(\mathcal{O}\) is called \emph{\((\mathcal{P},\mathcal{E},\delta)\)\hyp output differentially private} (\((\mathcal{P},\mathcal{E},\delta)\)\hyp ODP) if for all \(S\subseteq \mathcal{O}\) and for all neighboring databases \(\x,\xp\):
\begin{equation*}
    \Pr(M(\x) \in S) \leq \delta + \sum_{k\in\mathcal{K}} e^{\mathcal{E}(P_k)} \Pr(M(\xp) \in S\cap P_k),
\end{equation*}
where the probability space is over the coin flips of the mechanism \(M\).
If \(M\) is \((\mathcal{P},\mathcal{E},\delta)\)\hyp output differentially private for some \(\mathcal{E}\) and \(\delta\), we call \(\mathcal{P}\)
an \emph{ODP partition} for \(M\).
\end{definition}

As an example, consider the mechanism \(M_{\text{toy}}\) from the introduction. An ODP partition for \(M_{\text{toy}}\) would be \(P_1 = \R,\ P_2 = \{\bot\}\), with \(\mathcal{E}(P_1)=\varepsilon,\ \mathcal{E}(P_2)=0\) and \(\delta=0\).

Note that the assumption of the countability of \(\mathcal{P}\) is a technical one that is required in the proof of our composition theorem (\Thmref{thm:odpcomp}), but not a restriction in practice due to the finiteness (and thus countability) of computer representations.

The following results allow us to convert a DP guarantee to an ODP guarantee (\Lemmaref{thm:dp_to_odp}) and an ODP guarantee to a DP guarantee (\Lemmaref{thm:odp_to_dp}):

\begin{lemma}
\label{thm:dp_to_odp}
Let \(M\) be an \((\varepsilon,\delta)\)\hyp differentially private mechanism. Then \(M\) is \((\mathcal{P},\mathcal{E},\delta)\)\hyp output differentially private for any partition \(\mathcal{P}\) of \(\Range(M)\) and the constant function \(\mathcal{E}\equiv\varepsilon\).
\end{lemma}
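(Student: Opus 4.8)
The plan is to unfold both definitions and observe that the ODP inequality collapses to the ordinary DP inequality as soon as \(\mathcal{E}\) is constant. We instantiate the output set \(\mathcal{O}\) in \Defref{def:odp} as \(\Range(M)\), so that the quantification over \(S\subseteq\mathcal{O}\) in the ODP definition coincides with the quantification over \(S\subseteq\Range(M)\) in \Defref{def:dp}. Fix an arbitrary pair of neighboring databases \(\x,\xp\) and an arbitrary \(S\subseteq\Range(M)\).

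Since \(\mathcal{P}=\{P_k\}_{k\in\mathcal{K}}\) is a partition of \(\Range(M)\), the sets \(\{S\cap P_k\}_{k\in\mathcal{K}}\) are pairwise disjoint with union \(S\); because \(\mathcal{K}\) is countable, countable additivity of the probability measure gives \(\sum_{k\in\mathcal{K}}\Pr(M(\xp)\in S\cap P_k)=\Pr(M(\xp)\in S)\). Substituting \(\mathcal{E}(P_k)=\varepsilon\) for every \(k\) into the right-hand side of the ODP inequality and pulling the constant factor \(e^{\varepsilon}\) out of the sum yields \(\delta+\sum_{k\in\mathcal{K}}e^{\mathcal{E}(P_k)}\Pr(M(\xp)\in S\cap P_k)=\delta+e^{\varepsilon}\Pr(M(\xp)\in S)\). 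The target bound \(\Pr(M(\x)\in S)\le\delta+e^{\varepsilon}\Pr(M(\xp)\in S)\) is then precisely the \((\varepsilon,\delta)\)-DP guarantee of \(M\) instantiated at \(S\), which holds by hypothesis. As \(\x,\xp\) and \(S\) were arbitrary, \(M\) is \((\mathcal{P},\mathcal{E},\delta)\)-ODP.

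There is essentially no hard step: the argument is a rearrangement plus one application of countable additivity. The only points requiring care are the identification \(\mathcal{O}=\Range(M)\) (so the two quantifier ranges match) and the appeal to countable additivity, which is exactly why the index set \(\mathcal{K}\) was assumed countable in \Defref{def:odp}; in the Lean formalization this is also the place where one discharges the measurability of the sets \(S\cap P_k\).
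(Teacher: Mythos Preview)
Your proof is correct and is essentially the same as the paper's: the paper simply states that the lemma ``follows directly from the definitions of DP and ODP,'' and your argument is precisely the unfolding of those definitions that makes this immediate. Your explicit use of countable additivity to collapse the partitioned sum is exactly the triviality the paper leaves implicit.
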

\begin{proof}
Follows directly from the definitions of DP and ODP.
\end{proof}

\begin{lemma}
\label{thm:odp_to_dp}
Let \(M\) be a \((\mathcal{P},\mathcal{E},\delta)\)\hyp output differentially private mechanism. Then \(M\) is \((\sup_{P\in\mathcal{P}} \mathcal{E}(P), \delta)\)\hyp differentially private.
\end{lemma}
\begin{proof}
Let \(\varepsilon^*=\sup_{P\in\mathcal{P}} \mathcal{E}(P)\). Let \(\x,\xp\) be neighboring databases and let \(S\subseteq\Range(M)\). Then
\begin{align*}
    \Pr(M(\x)\in S) &\leq \delta + \sum_{k\in\mathcal{K}} e^{\mathcal{E}(P_k)} \Pr(M(\xp) \in S\cap P_k)\\
    &\leq \delta + \sum_{k\in\mathcal{K}} e^{\varepsilon^*} \Pr(M(\xp) \in S\cap P_k)\\
    &= \delta + e^{\varepsilon^*} \Pr(M(\xp) \in S).
\end{align*}
\end{proof}

We sometimes want to build up an ODP guarantee from privacy guarantees that only hold w.r.t.\ subsets of the output set of a mechanism. We call such guarantees \textit{subset differential privacy} guarantees, and show how they can be combined into an ODP guarantee (\Lemmaref{thm:sum_deltas}). However, as we show in \Secref{sec:comp_direct}, this does not always result in an optimal ODP guarantee.

\begin{definition}[Subset Differential Privacy]
Let \(\mathcal{O}\) be a set and \(R\subseteq\mathcal{O}\) a subset of \(\mathcal{O}\). Let \(\varepsilon\geq 0\) and \(\delta\geq 0\). A randomized mechanism \(M\) with output set \(\mathcal{O}\) is called \emph{\((R,\varepsilon,\delta)\)\hyp subset differentially private} if for all \(S\subseteq R\) and for all neighboring databases \(\x,\xp\):
\begin{equation*}
    \Pr(M(\x) \in S) \leq e^{\varepsilon} \Pr(M(\xp) \in S) + \delta,
\end{equation*}
where the probability space is over the coin flips of the mechanism \(M\).
\end{definition}

\begin{lemma}
\label{thm:sum_deltas}
Let \(\mathcal{O}\) be a set, and let \(\mathcal{P}=\{P_k\}_{k\in\mathcal{K}}\) be a partition of \(\mathcal{O}\), where \(\mathcal{K}\) is a countable index set. Let \(M\) be a randomized mechanism with output set \(\mathcal{O}\) and let \(\mathcal{E}:\mathcal{P}\mapsto \R_{\geq 0}\) and \(\Delta:\mathcal{P}\mapsto \R_{\geq 0}\) be functions such that \(M\) is \((P_k,\mathcal{E}(P_k),\Delta(P_k))\)\hyp subset differentially private for all \(k\in\mathcal{K}\). Then \(M\) is \((\mathcal{P},\mathcal{E},\sum_{k\in\mathcal{K}}\Delta(P_k))\)\hyp output differentially private.
\end{lemma}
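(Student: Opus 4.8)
The plan is to unfold both definitions and exploit the fact that $\mathcal{P}$ partitions $\mathcal{O}$ to split the probability mass of any event into a countable sum over the partition blocks, then apply the subset DP guarantee block by block and re-sum. Concretely, I fix neighboring databases $\x,\xp$ and an arbitrary $S\subseteq\mathcal{O}$, and the goal is the ODP inequality
\[
    \Pr(M(\x)\in S)\leq \sum_{k\in\mathcal{K}}\Delta(P_k) + \sum_{k\in\mathcal{K}} e^{\mathcal{E}(P_k)}\Pr(M(\xp)\in S\cap P_k).
\]

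First I would observe that, since $\{P_k\}_{k\in\mathcal{K}}$ is a partition of $\mathcal{O}$, the sets $\{S\cap P_k\}_{k\in\mathcal{K}}$ are pairwise disjoint and their union is $S$; hence by countable additivity of the probability measure, $\Pr(M(\x)\in S)=\sum_{k\in\mathcal{K}}\Pr(M(\x)\in S\cap P_k)$. Next, for each fixed $k$, note that $S\cap P_k\subseteq P_k$, so the hypothesis that $M$ is $(P_k,\mathcal{E}(P_k),\Delta(P_k))$\hyp subset differentially private applies with the event $S\cap P_k$, giving
\[
    \Pr(M(\x)\in S\cap P_k)\leq e^{\mathcal{E}(P_k)}\Pr(M(\xp)\in S\cap P_k)+\Delta(P_k).
\]
Summing this inequality over all $k\in\mathcal{K}$ — which is valid since every term is non\hyp negative, so the sums are well\hyp defined in $[0,\infty]$ and the inequality is preserved — yields exactly the desired ODP bound.

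The only thing requiring any care is the countable manipulation of series: that $\Pr(M(\x)\in S)$ decomposes as the countable sum over the blocks (this is where countability of $\mathcal{K}$ is used, as flagged in the remark after \Defref{def:odp}), and that summing a countable family of inequalities between non\hyp negative quantities is legitimate. Measurability of $S\cap P_k$ is assumed throughout the paper, so it is not an obstacle here. I do not expect any genuine difficulty; the lemma is essentially a bookkeeping consequence of additivity, and the one substantive point — why the worst\hyp case $\sum_k\Delta(P_k)$ cannot in general be improved — is explicitly deferred to \Secref{sec:comp_direct} and is not part of this statement.
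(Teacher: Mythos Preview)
Your proof is correct and follows exactly the same approach as the paper: decompose $\Pr(M(\x)\in S)$ over the partition blocks by countable additivity, apply the subset DP bound on each block, and sum. The paper's proof is the same three\hyp line chain of equalities and inequalities, just without the surrounding commentary on countability and non\hyp negativity.
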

\begin{proof}
Let \(S\subseteq\mathcal{O}\) and let \(\x,\xp\) be neighboring databases. Then
\begin{align*}
    &\Pr(M(\x) \in S) = \sum_{k\in\mathcal{K}} \Pr(M(\x) \in S\cap P_k)\\
    &\leq \sum_{k\in\mathcal{K}} \left[e^{\mathcal{E}(P_k)} \Pr(M(\xp) \in S\cap P_k) + \Delta(P_k)\right]\\
    &= \sum_{k\in\mathcal{K}}\Delta(P_k) + \sum_{k\in\mathcal{K}} e^{\mathcal{E}(P_k)} \Pr(M(\xp) \in S\cap P_k).
\end{align*}
\end{proof}

\subsection{Composition of ODP mechanisms}
As mentioned in the introduction, ODP can be used to give better utility when composing mechanisms. As in classical adaptive composition \cite{dwork2010boosting}, we model composition as a game between a data curator (in the real world this would be the entity with access to the private databases) and an adversary (in the real world a data analyst) that is allowed to spend a total privacy budget of \((\varepsilon_t, \delta_t)\) (in terms of DP) on mechanism invocations (\Algref{alg:type1_comp}). In each round, the adversary chooses a mechanism \(M_i\) and a pair of neighboring databases \(\x^{i,0},\x^{i,1}\). Based on a bit \(b\) that is only known to the data curator, the data curator returns \(M_i(\x^{i,b})\). The adversary may base their choices in an iteration on the mechanism outputs it has seen in previous iterations. The goal of the adversary is to infer \(b\) from the mechanism outputs. Our goal is to ensure that this is not possible with high confidence, by bounding how much the output distribution under \(b=0\) can differ from the output distribution under \(b=1\).
Note that this is a hypothetical game that is required for the privacy analysis. In the real world there is no bit \(b\), and only one private database in each round, on which the mechanism in that round is invoked.

As opposed to previous composition experiments, we require the adversary to not only choose each mechanism \(M_i\) such that its DP guarantee (as computed via \Lemmaref{thm:odp_to_dp}) does not exceed the remaining privacy budget, but to also return an ODP partition \(\mathcal{P}_i\) for \(M_i\). Due to \Lemmaref{thm:dp_to_odp}, each DP mechanism has a trivial associated ODP partition, thus this requirement does not exclude any DP mechanisms. However, if the ODP partition is non-trivial, such as for the mechanisms in \Secref{sec:analysis_existing}, \ref{sec:iteration_length} and \ref{sec:recovering_budget}, the partition can be used to save privacy budget: let \(k\in\mathcal{K}_i\) be such that the output of \(M_i\) falls into the set \(P_{i,k}\) of \(M_i\)'s ODP partition. If \(\mathcal{E}_i(P_{i,k})<\sup_{P\in\mathcal{P}_i}\mathcal{E}_i(P)\), then the incurred \(\varepsilon\)\hyp cost is smaller than the \(\varepsilon\) of the mechanism's DP guarantee.

Something that sets \Algref{alg:type1_comp} apart from the classic composition experiment, but has been used in combination with the privacy filters and odometers of Rogers et al.\ \cite{rogers2016privacy}, is that the privacy guarantees of the mechanisms do not have to be fixed ahead of time. Instead, the adversary can adaptively, i.e., based on the outputs of previous mechanisms, choose the privacy parameters of the next mechanism that they want to invoke. This also means that the adversary can adaptively choose the number of iterations: if they want to spend all of the privacy budget in the first \(I'<I\) iterations, they can choose a mechanism that always produces the same output independently of the database and thus is \((0,0)\)\hyp DP for the remaining \(I-I'\) iterations.
Like Rogers et al., we limit the maximal number of iterations by a fixed number \(I\). This is purely for technical reasons and not a limitation in practice, since \(I\) can be chosen arbitrarily large.
Note that our composition experiment can almost equivalently be formulated using a new variant of privacy filters (see \Appref{sec:adv_comp}). We choose the formulation in \Algref{alg:type1_comp} throughout most of the paper for a more easily accessible presentation.

We show that our composition scheme provides DP:
\makeatletter
\algnewcommand{\StateCont}[1]{\Statex \hskip\ALG@thistlm #1}
\makeatother
\begin{algorithm}[tb]
\caption{\(\odpc(\mathcal{A},\varepsilon_t,\delta_t,b)\)}
\label{alg:type1_comp}
\begin{algorithmic}[1]
\State {Select coin tosses \(R^b_\mathcal{A}\) for \(\mathcal{A}\) uniformly at random.}
\State{Set remaining budget \(\varepsilon_r=\varepsilon_t,\ \delta_r=\delta_t\).}
\For{\(i=1,\dots,I\)}
    \State{\(\mathcal{A} = \mathcal{A}(R^b_\mathcal{A},\{A^b_j\}_{j=1}^i)\) chooses}
    \StateCont{~\(\bullet\)~ neighboring databases \(\x^{i,0},\x^{i,1}\),}
    \StateCont{~\(\bullet\)~ a triple \((\mathcal{P}_i=\{P_{i,k}\}_{k\in\mathcal{K}_i},\mathcal{E}_i,\delta_i)\), and}
    \StateCont{~\(\bullet\)~ a mechanism \(M_i\)}
    \StateCont{such that}
    \StateCont{~\(\bullet\)~ \(\sup_{P\in\mathcal{P}_i}\mathcal{E}_i(P)\leq\varepsilon_r\),} \StateCont{~\(\bullet\)~ \(\delta_i\leq\delta_r\), and }
    \StateCont{~\(\bullet\)~ \(M_i\) is \((\mathcal{P}_i,\mathcal{E}_i,\delta_i)\)\hyp ODP}
    \State{Sample \(A^b_i = M_i(\x^{i,b})\)}
    \State{Let \(k\) be such that \(A^b_i\in P_{i,k}\)}\label{algline:odp1}
    \State{\(\varepsilon_r \minuseq \mathcal{E}_i(P_{i,k})\)}\label{algline:odp2}
    \State{\(\delta_r \minuseq \delta_i\)}\label{algline:odp3}
    \State{\(\mathcal{A}\) receives \(A^b_i\)}
\EndFor
\State{\Return{view \(V^b = (R^b_\mathcal{A}, A^b_1, \dots, A^b_I)\)}}
\end{algorithmic}
\end{algorithm}

\begin{theorem}
\label{thm:odpcomp}
For every adversary \(\mathcal{A}\) and for every set of views \(\mathcal{V}\) of \(\mathcal{A}\) returned by \Algref{alg:type1_comp} we have that
\begin{equation*}
    \Pr(V^0\in\mathcal{V}) \leq e^{\varepsilon_t} \Pr(V^1\in\mathcal{V}) + \delta_t.
\end{equation*}
\end{theorem}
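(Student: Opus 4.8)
The plan is to prove the statement by induction on the number of rounds~$I$. First I would dispense with the adversary's own coins: since $R^0_\mathcal{A}$ and $R^1_\mathcal{A}$ are drawn from the same distribution and independently of~$b$, it suffices to fix $R^b_\mathcal{A}=r$ and prove $\Pr(V^0\in\mathcal{V}\mid R^0_\mathcal{A}=r)\leq e^{\varepsilon_t}\Pr(V^1\in\mathcal{V}\mid R^1_\mathcal{A}=r)+\delta_t$ for every~$r$, and then average; once $r$ is fixed the adversary's choices in every round are a deterministic function of the outputs seen so far. So the claim to prove inductively is: for every $\varepsilon,\delta\geq 0$, every such deterministic adversary running for $I$ rounds with starting budget $(\varepsilon,\delta)$, and every set $\mathcal{V}$ of output sequences, $\Pr((A^0_1,\dots,A^0_I)\in\mathcal{V})\leq e^{\varepsilon}\Pr((A^1_1,\dots,A^1_I)\in\mathcal{V})+\delta$. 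The base case $I=0$ is immediate.

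The technical heart is a decomposition lemma for ODP mechanisms: if $M$ is $(\mathcal{P},\mathcal{E},\delta)$-ODP then for any neighboring $\x,\xp$, writing $P=M(\x)$ and $Q=M(\xp)$ as measures on $\mathcal{O}$, there is a measure $\mu$ on $\mathcal{O}$ with $\mu\leq P$, with $P(\mathcal{O})-\mu(\mathcal{O})\leq\delta$, and with $\int_{P_k}f\,d\mu\leq e^{\mathcal{E}(P_k)}\int_{P_k}f\,dQ$ for every $k$ and every nonnegative measurable~$f$. One builds $\mu$ by taking, cell by cell, the pointwise minimum of the density of $P$ and of $e^{\mathcal{E}(P_k)}$ times the density of $Q$ with respect to $P+Q$; the cell-wise domination then holds by construction, and the lost mass $P(\mathcal{O})-\mu(\mathcal{O})$ equals $P(B)-\sum_k e^{\mathcal{E}(P_k)}Q(B\cap P_k)$, where $B$ is the union over cells of the set on which $P$'s density exceeds $e^{\mathcal{E}(P_k)}$ times $Q$'s; applying the ODP inequality to $S=B$ bounds this by~$\delta$. (Following the paper, I assume throughout that the cells, and hence the function $a\mapsto\mathcal{E}(P_{k(a)})$, are measurable; this is where countability of $\mathcal{P}$ is used.)

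For the inductive step, assume the claim for $I-1$ rounds and peel off the first round. The adversary picks neighboring $\x^{1,0},\x^{1,1}$, a triple $(\mathcal{P}_1,\mathcal{E}_1,\delta_1)$, and a $(\mathcal{P}_1,\mathcal{E}_1,\delta_1)$-ODP mechanism $M_1$, subject to $\sup_{P\in\mathcal{P}_1}\mathcal{E}_1(P)\leq\varepsilon$ and $\delta_1\leq\delta$; these two constraints are exactly what makes the residual budget $(\varepsilon-\mathcal{E}_1(P_{1,k}),\delta-\delta_1)$ passed to the remaining $I-1$ rounds after an output in cell $P_{1,k}$ nonnegative, so the induction hypothesis applies to the sub-experiment. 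Writing $\mathcal{V}_a=\{(a_2,\dots,a_I):(a,a_2,\dots,a_I)\in\mathcal{V}\}$, $q_b(a)=\Pr((A^b_2,\dots,A^b_I)\in\mathcal{V}_a\mid A^b_1=a)$, and letting $k(a)$ be the cell index with $a\in P_{1,k(a)}$, the induction hypothesis gives $q_0(a)\leq e^{\varepsilon-\mathcal{E}_1(P_{1,k(a)})}q_1(a)+(\delta-\delta_1)$. Applying the decomposition lemma to $M_1$ writes $M_1(\x^{1,0})=\mu+\rho$ with $\rho(\mathcal{O})\leq\delta_1$, with $\mu$ cell-wise dominated by $e^{\mathcal{E}_1(P_{1,k})}$ times $Q:=M_1(\x^{1,1})$, and with $\mu(\mathcal{O})\leq 1$. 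Then
\begin{align*}
\Pr((A^0_1,\dots,A^0_I)\in\mathcal{V}) &= \int q_0\,d\mu + \int q_0\,d\rho\\
&\leq \int q_0\,d\mu + \delta_1\\
&\leq \int\!\big(e^{\varepsilon-\mathcal{E}_1(P_{1,k(a)})}q_1(a)+(\delta-\delta_1)\big)\,d\mu(a)+\delta_1\\
&\leq \sum_k e^{\varepsilon-\mathcal{E}_1(P_{1,k})}\!\int_{P_{1,k}}\!q_1\,d\mu+\delta,
\end{align*}
using $q_0\leq 1$ and $\mu(\mathcal{O})\leq 1$. On each cell, cell-wise domination yields $\int_{P_{1,k}}q_1\,d\mu\leq e^{\mathcal{E}_1(P_{1,k})}\int_{P_{1,k}}q_1\,dQ$, so the exponentials telescope and, summing over the countably many cells, the last expression is at most $e^{\varepsilon}\int q_1\,dQ+\delta=e^{\varepsilon}\Pr((A^1_1,\dots,A^1_I)\in\mathcal{V})+\delta$, closing the induction.

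The step I expect to be the main obstacle is keeping the $\delta$-accounting tight. Using the ``for all $S$'' form of the ODP inequality directly in the inductive step would multiply the first round's $\delta_1$ by an $e^{\varepsilon}$-type factor, so the per-round $\delta_i$ would not merely sum to $\delta_t$. The decomposition lemma is precisely what prevents this: it isolates the $\delta_1$-mass as a purely additive $\rho$-term that is never multiplied by an exponential, while the surviving part $\mu$ is dominated cell-wise so that its $e^{\mathcal{E}_1(P_{1,k})}$ factors cancel the $e^{-\mathcal{E}_1(P_{1,k(a)})}$ produced by the induction hypothesis, leaving a uniform $e^{\varepsilon}$. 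Setting up that lemma and its interplay with the induction hypothesis --- together with the measurability bookkeeping that the Lean development also discharges --- is the delicate part of the argument.
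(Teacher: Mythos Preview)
Your proposal is correct and follows essentially the same route as the paper: fix the adversary's coins, induct on the number of rounds, and at the inductive step isolate a ``bad mass'' of size at most~$\delta_1$ so that the remaining part is cell-wise dominated and the exponentials telescope. The only difference is packaging: where you build the split $P=\mu+\rho$ explicitly via the density $\min(p,\,e^{\mathcal{E}(P_k)}q)$ on each cell, the paper takes the positive part $\mu_+$ of the Hahn decomposition of the signed measure $S\mapsto P(S)-\sum_k e^{\mathcal{E}(P_k)}Q(S\cap P_k)$ and combines it with the ``$\land 1$'' trick $\Pr(\cdot)\leq[e^{\varepsilon_t-\mathcal{E}(P_k)}\Pr(\cdot)]\land 1+(\delta_t-\delta)$; your~$\rho$ and the paper's~$\mu_+$ have the same total mass and play the same role, so the two arguments are the same in substance.
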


We defer the proof to \Appref{sec:proof_main_thm}, where we first show the theorem statement for a composition length of \(I=2\). We define sets of views \(\mathcal{V}^k\) where the outputs of \(M_1\) come from the same set \(P_k\) of \(M_1\)'s ODP partition. Then we apply an extension of a proof of the simple composition theorem for \((\varepsilon,\delta)\)\hyp mechanisms by Dwork and Lei \cite[Lemma~28]{dwork2009differential} to such sets \(\mathcal{V}_k\). By taking unions over \(\mathcal{V}_k\)'s, we can analyze arbitrary sets of views. For this we make use of the countability of ODP partitions. The general theorem statement finally follows by induction.

\subsection{Formal verification of the composition theorem}

We have formally verified the proof of \Thmref{thm:odpcomp} in the proof assistant
Lean \cite{demoura2015lean}.
The formal proof is available online\footnote{\url{https://doi.org/10.6084/m9.figshare.19330649}}.
The effort of formalizing our proof has paid off:
During the process, we
discovered an error in a previous version of \Thmref{thm:odpcomp}
that all authors and reviewers had previously missed.
At first glance, in \Algref{alg:type1_comp}, it might seem as if the $\delta_i$
that is subtracted from $\delta_r$ on line~\ref{algline:odp3}
could also depend on $P_{i,k}$.
A counterexample shows that this is a fallacy, 
which was subtly hidden in a previous version of our proof.

Unlike our pen-and-paper proof in \Appref{sec:proof_main_thm}, the mechanized version
discusses all questions of measurability. The \texttt{measurability} tactic
of Lean's mathematical library
could resolve many measurability proofs automatically, but some of them had to be
carried out manually. For example, showing that \Algref{alg:type1_comp} is measurable (as 
a function from the sample space to the resulting view)
requires an induction over the number of iterations, which is out of reach of automation.
Apart from this hurdle, Lean's mathematical library \cite{mathlib2020} was surprisingly
mature for our purposes, given that it is still relatively young.

\section{ODP analysis of existing mechanisms}
\label{sec:analysis_existing}
In this section, we analyze two well\hyp known DP mechanisms using our ODP framework.

\subsection{Sparse Vector Technique}

\begin{algorithm}[tb]
\caption{\(M_{\text{SVT}}(\x,Q,\Delta,T_1,T_2,\dots,c)\)}
\label{alg:svt}
\begin{algorithmic}[1]
\State{\(\rho \sim \Lap(\Delta/\varepsilon_1)\)}
\State{\(\text{count}=0\)}
\For{\(q_i\in Q\)}
    \If{\(q_i=\text{STOP}\) \textbf{or} \(\text{count}\geq c\)}
        \State{BREAK}
    \EndIf
    \State{\(\nu_i \sim \Lap(2c\Delta/\varepsilon_2)\)}
    \If{\(q_i(\x) + \nu_i \geq T_i + \rho\)}
        \State{Output \(a_i = \top\)}
        \State{\(\text{count} \pluseq 1\)}
    \Else
        \State{Output \(a_i = \bot\)}
    \EndIf
\EndFor
\end{algorithmic}
\end{algorithm}

The Sparse Vector Technique (SVT) \cite{dwork2009complexity,svt2017} is a method for releasing the results of a sequence of threshold comparisons with DP. There are multiple variants of SVT. We work with the improved variant of the mechanism due to Lyu et al.\ \cite[Alg.~1]{svt2017}; see Alg.~\ref{alg:svt} (\(M_{\text{SVT}}\)). In this variant, the data analyst sends a stream \(Q\) of adaptively chosen \(\R\)\hyp valued queries \(q_i\) and thresholds \(T_i\) to the data curator, who adds the same noise \(\rho\) to the thresholds, and different noise values \(\nu_i\) to the query results. For each query \(i\) they then return the result of the comparison \(q_i(\x) + \nu_i \geq T_i + \rho\): if the inequality holds, \(\top\) is returned, otherwise \(\bot\) is returned. The data curator then moves on to the next query until the stream ends --- which, as opposed to Lyu et al., we make explicit by letting the data analyst send a \(\text{STOP}\) query ---, or until a prespecified number \(c\) of \(\top\)\hyp outputs has been produced. What sets the SVT apart from other DP mechanisms is that for a fixed privacy budget an arbitrary number of \(\bot\) outputs can be produced; however, only a limited number of \(\top\) outputs. Lyu et al.\ show that \(M_{\text{SVT}}\) fulfills \((\varepsilon_1 + \varepsilon_2)\)\hyp DP. From their proof it can be seen that all \(\bot\)\hyp outputs together contribute \(\varepsilon_1\) to the privacy guarantee and each of the at most \(c\) \(\top\)\hyp outputs contributes \(\varepsilon_2/c\). Thus, intuitively, we should be able to save privacy budget if less than \(c\) \(\top\)\hyp outputs are produced. By slightly modifying the proof by Lyu et al., we can show the following lemma:
\begin{lemma}
\label{thm:svt}
For each integer \(c'\in [0,c]\), let \(S^{\text{SVT}}_{c'}\) be the set of outputs of \(M_{\text{SVT}}\) with \(c'\) \(\top\)\hyp entries. Let further \(\mathcal{P}=\{S^{\text{SVT}}_{c'}\}_{c'=0}^c\) and let, for \(0\leq c'\leq c\),
\begin{equation*}
    \mathcal{E}(S^{\text{SVT}}_{c'}) = \varepsilon_1 + \frac{c'}{c} \varepsilon_2
\end{equation*}
Then \(M_{\text{SVT}}\) is \((\mathcal{P},\mathcal{E},0)\)\hyp ODP.
\end{lemma}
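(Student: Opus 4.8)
The plan is to adapt the proof that $M_{\text{SVT}}$ is $(\varepsilon_1+\varepsilon_2)$-DP due to Lyu et al.\ so that it produces a \emph{pointwise} privacy-loss bound that scales with the number of $\top$-entries of the realized output, and then to lift this to an ODP guarantee via \Lemmaref{thm:sum_deltas}. Every output of $M_{\text{SVT}}$ is a finite string over $\{\top,\bot\}$ with at most $c$ occurrences of $\top$, so $\mathcal{P}=\{S^{\text{SVT}}_{c'}\}_{c'=0}^{c}$ really is a partition of the output set; it therefore suffices to show that, for each $c'$, the mechanism is $(S^{\text{SVT}}_{c'},\,\varepsilon_1+\tfrac{c'}{c}\varepsilon_2,\,0)$-subset differentially private. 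Then \Lemmaref{thm:sum_deltas}, applied with the identically-zero $\delta$-function, immediately yields the claimed $(\mathcal{P},\mathcal{E},0)$-ODP guarantee.

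To establish the subset guarantee, I would first fix a single output string $a=(a_1,\dots,a_\ell)$ with exactly $c'$ entries equal to $\top$; along the unique computation path producing $a$, the (possibly adaptively chosen) query/threshold stream is determined, so the queries can be treated as fixed. Conditioning on the threshold noise $\rho$, the per-query events become conditionally independent because the $\nu_i$ are independent, so $\Pr[M_{\text{SVT}}(\x)=a\mid\rho]=\prod_{i}p_i(\x,\rho)$, where $p_i(\x,\rho)$ is the probability over $\nu_i$ that the $i$-th comparison yields $a_i$. The core of the argument is a coupling step that compares $\x$ at threshold noise $\rho$ with the neighbor $\xp$ at shifted threshold noise $\rho+\Delta$: for a $\bot$-entry, $\abs{q_i(\x)-q_i(\xp)}\le\Delta$ makes the downward event $\{q_i(\x)+\nu_i<T_i+\rho\}$ only more likely after the shift, so $p_i(\x,\rho)\le p_i(\xp,\rho+\Delta)$ at no multiplicative cost; for a $\top$-entry, the same shift leaves a residual gap of at most $2\Delta$, and the Laplace tail-ratio bound for $\nu_i\sim\Lap(2c\Delta/\varepsilon_2)$ costs a factor $e^{2\Delta\cdot\varepsilon_2/(2c\Delta)}=e^{\varepsilon_2/c}$. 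Taking the product over the $\ell$ queries—only the $c'$ $\top$-entries being charged—gives $\Pr[M_{\text{SVT}}(\x)=a\mid\rho]\le e^{(c'/c)\varepsilon_2}\Pr[M_{\text{SVT}}(\xp)=a\mid\rho+\Delta]$.

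Integrating this over $\rho$ against the density of $\Lap(\Delta/\varepsilon_1)$, substituting $\rho\mapsto\rho-\Delta$, and using that shifting this density by $\Delta$ costs a factor $e^{\varepsilon_1}$, yields the pointwise bound $\Pr[M_{\text{SVT}}(\x)=a]\le e^{\varepsilon_1+(c'/c)\varepsilon_2}\Pr[M_{\text{SVT}}(\xp)=a]$. Since the output space is discrete, summing over all $a\in S$ for any $S\subseteq S^{\text{SVT}}_{c'}$ gives $\Pr[M_{\text{SVT}}(\x)\in S]\le e^{\mathcal{E}(S^{\text{SVT}}_{c'})}\Pr[M_{\text{SVT}}(\xp)\in S]$, i.e.\ the desired subset-DP statement, and the lemma follows.

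The step I expect to be the main obstacle—and the one responsible for errors in earlier SVT analyses—is making the conditioning/coupling bookkeeping rigorous in the presence of the adaptive query stream and the two stopping rules (the explicit $\text{STOP}$ query and the $\text{count}\ge c$ cutoff): one must verify that conditioning on the realized output $a$ really does pin down the queries actually asked, that early termination does not introduce a "missing" factor that breaks the product factorization, and that the monotonicity inequality (for $\bot$) and the Laplace tail-ratio inequality (for $\top$) are applied with the correct direction of the $\Delta$-shift in each case. The remaining ingredients—the Laplace density and tail estimates, the change of variables, and the final summation over $a$—are routine.
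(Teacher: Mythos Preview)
Your proposal is correct and follows essentially the same route as the paper: a pointwise privacy-loss bound for each output $a$ with $c'$ $\top$-entries via the $\Delta$-shift coupling on $\rho$ (free for $\bot$-entries, $e^{\varepsilon_2/c}$ per $\top$-entry, $e^{\varepsilon_1}$ for the density shift), summed over outputs to give subset DP, then \Lemmaref{thm:sum_deltas}. The paper handles the adaptive/stopping bookkeeping you flag simply by observing that the factor $\Pr(q_{l+1}=\text{STOP}\mid a)$ appears identically in numerator and denominator and cancels.
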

\begin{proof}
Our proof follows closely the one of Lyu et al.\ \cite[Thm.~1]{svt2017}. The only differences are that we explicitly consider the query at which the stream \(Q\) stops and that we do not bound \(c'\) by \(c\), but work with the exact value.
Let \(\x, \xp\) be neighboring databases and assume that the sensitivity of all queries \(q_i\) is bounded by \(\Delta\). Let
\begin{align*}
    f_i(\x,z) &= \Pr(q_i(\x) + \nu_i < T_i + z),\\
    g_i(\x,z) &= \Pr(q_i(\x) + \nu_i \geq T_i + z).
\end{align*}
We have that
\begin{align}
    g_i(\x,z-\Delta) &=\Pr(q_i(\x) + \nu_i\geq T_i + z - \Delta)\nonumber\\
    &\leq \Pr(q_i(\xp) + \Delta + \nu_i\geq T_i + z - \Delta)\nonumber\\
    &= \Pr(q_i(\xp) + \nu_i\geq T_i + z - 2\Delta)\nonumber\\
    &\leq e^{\varepsilon_2/c} \Pr(q_i(\xp) + \nu_i\geq T_i + z).\label{eq:bound_gi}
\end{align}
Let \(l>0,\ c'\leq c\) and let \(a\in\{\bot,\top\}^l\cap S^{\text{SVT}}_{c'}\) be an output of \(M_{\text{SVT}}\) of length \(l\) that contains \(c'\) \(\top\)'s. Let \(I_{\bot}=\{i\mid a_i=\bot\},\ I_{\top}=\{i\mid a_i=\top\}\). Then
\begin{equation}
    \label{eq:number_of_tops}
    \abs{I_{\top}}=c'.
\end{equation}
By \(\Pr(q_{l+1}=\text{ST}\mid a)\) denote the probability that the data analyst chooses the STOP query as the next query after having received the \(l\) outputs in \(a\), and let \(z_-=z-\Delta\). We have:

\begin{align*}
    &\frac{\Pr(M_{\text{SVT}}(\x) = a)}{\Pr(M_{\text{SVT}}(\xp) = a)}\\
    &= \frac{\Pr(M_{\text{SVT}}(\x) = a\mid q_{l+1}=\text{ST})\Pr(q_{l+1}=\text{ST}\mid a)}{\Pr(M_{\text{SVT}}(\xp) = a\mid q_{l+1}=\text{ST})\Pr(q_{l+1}=\text{ST}\mid a)}\\
    &= \frac{\Pr(M_{\text{SVT}}(\x) = a\mid q_{l+1}=\text{ST})}{\Pr(M_{\text{SVT}}(\xp) = a\mid q_{l+1}=\text{ST})}\\
    &= \frac{\int\limits_{-\infty}^{\infty}\Pr(\rho=z) \prod\limits_{i\in I_{\bot}} f_i(\x,z) \prod\limits_{i\in I_{\top}} g_i(\x,z) dz}{\int\limits_{-\infty}^{\infty}\Pr(\rho=z) \prod\limits_{i\in I_{\bot}} f_i(\xp,z) \prod\limits_{i\in I_{\top}} g_i(\xp,z) dz}\\
    &= \frac{\int\limits_{-\infty}^{\infty}\Pr(\rho=z_-) \prod\limits_{i\in I_{\bot}} f_i(\x,z_-) \prod\limits_{i\in I_{\top}} g_i(\x,z_-) dz}{\int\limits_{-\infty}^{\infty}\Pr(\rho=z) \prod\limits_{i\in I_{\bot}} f_i(\xp,z) \prod\limits_{i\in I_{\top}} g_i(\xp,z) dz}\\
    &\leq \frac{\int\limits_{-\infty}^{\infty}e^{\varepsilon_1}\Pr(\rho=z) \prod\limits_{i\in I_{\bot}} f_i(\xp,z) \prod\limits_{i\in I_{\top}} g_i(\x,z_-) dz}{\int\limits_{-\infty}^{\infty}\Pr(\rho=z) \prod\limits_{i\in I_{\bot}} f_i(\xp,z) \prod\limits_{i\in I_{\top}} g_i(\xp,z) dz}\\
    &\overset{\text{(\ref{eq:bound_gi})}}{\leq} \frac{\int\limits_{-\infty}^{\infty}e^{\varepsilon_1}\Pr(\rho=z) \prod\limits_{i\in I_{\bot}} f_i(\x,z) \prod\limits_{i\in I_{\top}} e^{\varepsilon_2/c}g_i(\x,z) dz}{\int\limits_{-\infty}^{\infty}\Pr(\rho=z) \prod\limits_{i\in I_{\bot}} f_i(\xp,z) \prod\limits_{i\in I_{\top}} g_i(\xp,z) dz}\\
    &\overset{\text{(\ref{eq:number_of_tops})}}{=} e^{\varepsilon_1}(e^{\varepsilon_2/c})^{c'} = e^{\varepsilon_1 + \frac{c'}{c} \varepsilon_2}.
\end{align*}
Since this bound hold for every element \(a\in S^{\text{SVT}}_{c'}\), it also holds for all subsets of \(S^{\text{SVT}}_{c'}\). Hence, \(M_{\text{SVT}}\) is \((S^{\text{SVT}}_{c'},e^{\varepsilon_1 + \frac{c'}{c} \varepsilon_2},0)\)\hyp subset differentially private for every \(c'\leq c\). The ODP bound then follows from \Lemmaref{thm:sum_deltas}.
\end{proof}

\xhdr{Application}
A common use case for SVT is the differentially private release of only those entries of a vector with large magnitude, instead of the entire vector \cite{li2019privacy,zhang2021wide}. This can be desirable for multiple reasons: to be able to release the entries with less noise, since the privacy budget needs to be divided between fewer entries; to release only those values of a histogram that are large enough in magnitude so that they will not be dominated by the added noise; or to reduce communication costs in a distributed setting. Assume that \(f\) is a function that takes as input the private database and returns the vector of interest. The queries would then be \(q_i(\x) = \abs{f_i(\x)}\), i.e., the absolute value of the \(i\)\hyp th entry of the vector \(f(\x)\). Only if the corresponding output \(a_i\) is \(\top\), a differentially private version of \(f_i(\x)\) is released.
When analyzing SVT with DP \cite[Sec.~4.1]{svt2017}, the available privacy budget \(\varepsilon_t=\varepsilon_1+\varepsilon_2+\varepsilon_3\) is divided into a budget \(\varepsilon_1+\varepsilon_2\) for SVT itself, and a budget \(\varepsilon_3\) for the release of the vector entries. When using the Laplace mechanisms with the same variance to perturb the vector entries, a privacy budget of \(\varepsilon_3/c\) is available for each of the at most \(c\) entries that get released. However, it is not guaranteed that \(c\) entries will surpass the threshold and thus get released. ODP allows us to do the following: first invoke SVT. Let \(c'\leq c\) be the number of entries that surpassed the threshold. We have now used up a privacy budget of \(\varepsilon_1 + \frac{c'}{c}\varepsilon_2\) and thus have a privacy budget of \(\varepsilon'_3=\varepsilon_3 + \frac{c-c'}{c}\varepsilon_2\) available for the release of the \(c'\) vector entries. Assume that \(c'<c\). Then we have \(\varepsilon'_3>\varepsilon_3\). Further, we only need to split this budget between \(c'\) instead of between \(c\) vector entries and thus have a budget of \(\varepsilon'_3/c'\) per entry. This is not possible with a pure DP analysis, where we always have to assume the worst\hyp case of \(c\) released vector entries. These two sources of budget saving --- from SVT itself via \Lemmaref{thm:svt}, and from making use of the knowledge of the actual number of released entries --- lead to less noise in the released entries. Adapting the noise per entry to the actual number of released entries would also be possible with privacy filters \cite{rogers2016privacy}, but saving budget from SVT itself is only possible with ODP.

In \Figref{fig:svt} we numerically demonstrate the advantage of using ODP when releasing a sparse vector via SVT\footnote{The code for generating the plots in this paper can be found at \url{https://doi.org/10.6084/m9.figshare.19330649}.}. The vector in our example has \(100\) entries that each have a value of either \(0\) or \(1000\). We assume that at most \(20\) of the entries lie above the threshold \(T=500\), i.e., \(c=20\). The total privacy budget is \(\varepsilon=1\), split into a budget \(\varepsilon_1+\varepsilon_2=1/2\) for determining the indices of the large vector entries and a budget \(\varepsilon_3=1/2\) for releasing the corresponding values. For splitting budget between \(\varepsilon_1\) and \(\varepsilon_2\) we choose the optimal ratio \(\varepsilon_1/\varepsilon_2 = 1/(2c)^{2/3}\) according to Lyu et al.\ \cite{svt2017}. We further assume that the queries \(q_i(\x) = \abs{f_i(\x)}\) have sensitivity \(\Delta=1\).
Since \(c\) is only an upper bound on the number of large entries, the actual number of large entries can be lower. The ODP analysis can exploit cases with less than \(c\) large entries for adding less noise to the values of the released entries, while with the standard analysis of SVT always the same amount of noise has to be added.  For different numbers of entries with value \(1000\) we compare the amount of noise added to the released vector entries with and without an ODP analysis. Without an ODP analysis this is a fixed amount, whereas with an ODP analysis the amount of noise depends on the number of released vector entries, which itself depends on the outcome of a random comparison. \Figref{fig:svt} therefore shows an upper bound on the expected amount of noise for the ODP analysis based on Chebyshev's inequality, where the expectation is taken over the randomness of \(\rho\) and the \(\nu_i\) in \Algref{alg:svt}.

\subsection{Propose-Test-Release}

Propose-Test-Release (PTR) by Dwork and Lei \cite{dwork2009differential} is a framework for designing differentially private mechanisms based on the idea of connecting DP and robust statistics. Using PTR, the authors derive mechanisms for differentially privately estimating the median, the trimmed mean, the interquartile range and the coefficients of a linear regression model. The structure of PTR mechanisms is that they first propose a bound on the local sensitivity of the query function and then differentially privately test whether the local sensitivity lies below this bound. If the local sensitivity lies below the proposed bound, the result of the query with noise adapted to the bound is returned. Otherwise \(\bot\) (no response) is returned. In the case of a \(\bot\)\hyp output, the local sensitivity was high, meaning that the query result likely would have not been robust to changes in individual data points, and thus not very useful anyway. The basic building blocks of mechanisms based on PTR are \emph{\((\varepsilon,\delta)\)\hyp PTR functions}, or short, PTR functions. They get as input a database \(\x\) and a value \(s\) that may be used for computing a proposed bound on the local sensitivity. For example, Dwork and Lei propose a PTR function for computing the \(\alpha\)\hyp trimmed mean, i.e., the mean of an empirical distribution when discarding the upper and lower \(\alpha/2\)\hyp quantile. The local sensitivity, that is, the sensitivity w.r.t.\  a concrete database, of the \(\alpha\)\hyp trimmed mean depends on the distance between the \(\alpha/2\)- and the \((1-\alpha/2)\)\hyp quantile. The authors provide a mechanism for computing this distance (a generalization of the mechanism \(M_{\text{IQR}}\) below), and use the output of this mechanism as the second input \(s\) to the PTR function for the \(\alpha\)\hyp trimmed mean.
Formally, PTR functions are defined as follows:

\begin{figure}[t!]
    \centering
    \includegraphics[width=0.48\textwidth]{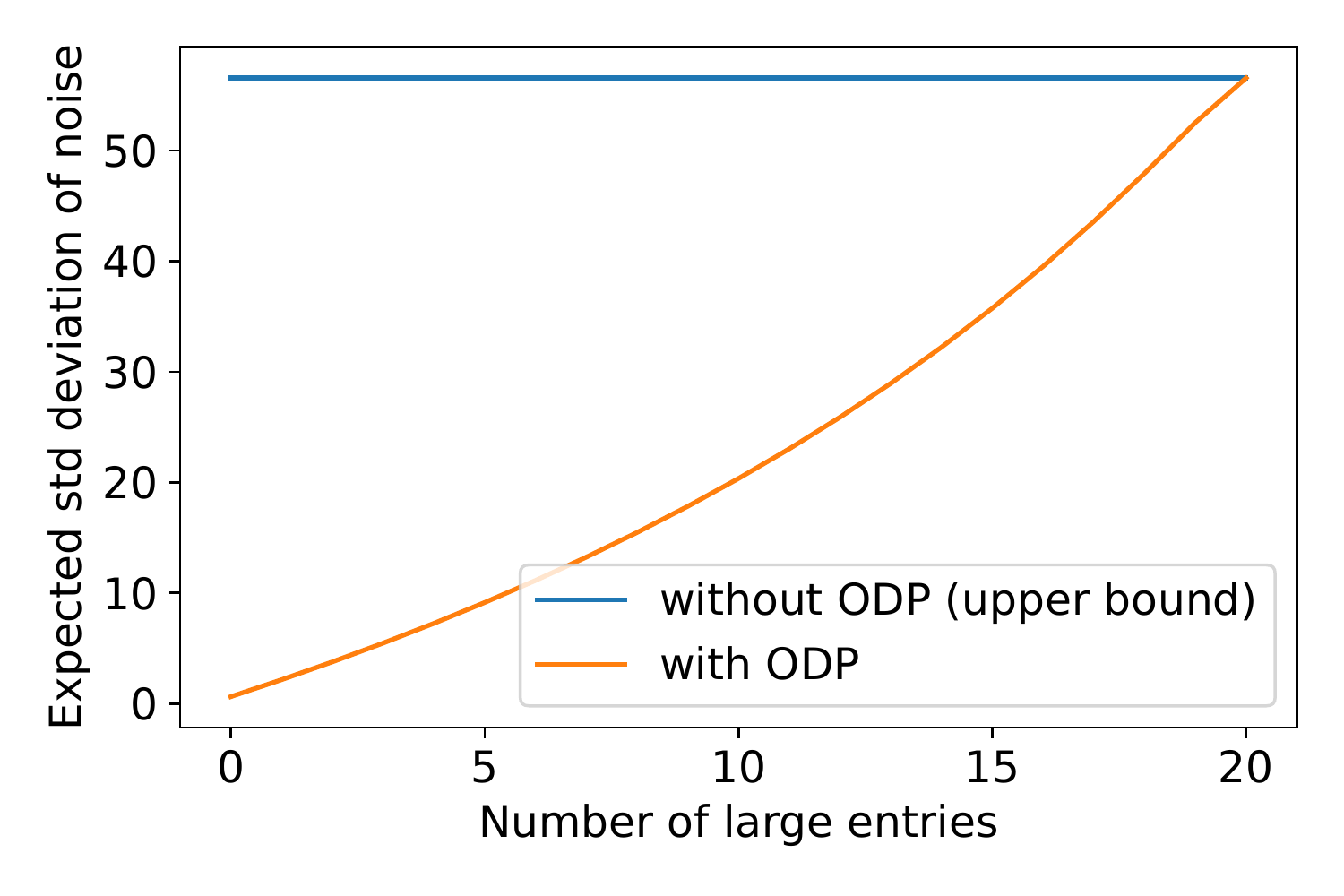}
    \caption{Noise required with and without an ODP analysis when releasing the values of at most \(20\) large entries of a 100\hyp dimensional sparse vector with a total privacy budget of \(\varepsilon=1\). We vary the number of large entries in the vector. Large entries have value \(1000\), small entries have value \(0\), and the threshold for releasing an entry is \(500\).}
    \label{fig:svt}
\end{figure}

\begin{definition}[\((\varepsilon,\delta)\)\hyp PTR function]
A function \(T(\x,s):\mathcal{D}\times(\mathcal{C}\cup\{\bot\})\rightarrow\mathcal{C}'\cup\{\bot\}\) is \emph{\((\varepsilon,\delta)\)\hyp PTR} if
\begin{enumerate}
    \item \(\Pr(T(\x,\bot)=\bot) = 1\) for all \(\x\in\mathcal{D}\).
    \item For all \(s\in \mathcal{C}\), \(\x,\xp\in\mathcal{D}\) neighboring,
    \begin{align*}
        \Pr(T(\x,s)=\bot) &\leq e^{\varepsilon} \Pr(T(\xp,s)=\bot),\\
        \Pr(T(\x,s)\neq\bot) &\leq e^{\varepsilon} \Pr(T(\xp,s)\neq\bot).
    \end{align*}\label{enum:ptr_eps}
    \item There exists \(G(T,\x)\subseteq \mathcal{C}\) such that if \(s\in G(T,\x)\), then for all \(\xp\) neighboring \(\x\) and all \(C'\subseteq\mathcal{C}'\),
    \begin{equation*}
        \Pr(T(\x,s)\in C') \leq e^{2\varepsilon} \Pr(T(\xp,s)\in C');
    \end{equation*}
    \item for all \(\x\in\mathcal{D}\), if \(s\notin G(T,\x)\): \(\Pr(T(\x,s)\neq\bot)\leq\delta\).
\end{enumerate}
\end{definition}
Let \(T\) be a PTR function. If \(s=\bot\) and thus no bound on the local sensitivity can be computed from \(s\), \(T\) returns \(\bot\). The conditions in \ref{enum:ptr_eps} allow for improved privacy bounds when composing PTR functions where the next function is only invoked if the previous one did not return \(\bot\) (see the example of \(M_{\text{IQR}}\) below). In that case, all PTR function invocations except from the last one only increase the privacy budget spending by \(\varepsilon\) instead of \(2\varepsilon\). Not every proposed sensitivity bound is large enough for every dataset to ensure privacy when adding noise according to this bound. The bounds that are admissible for a dataset \(\x\) are described by the set \(G(T,\x)\). If a proposed bound is too small, then with probability at least \(1-\delta\) the \(\bot\)\hyp symbol will be returned to prevent revealing too much information about \(\x\).

As can easily be seen, a PTR function \(T\) fulfills \((2\varepsilon,\delta)\)\hyp DP. Letting \(P_1=\mathcal{C}',\ P_2=\{\bot\}\) and \(\mathcal{E}(P_1)=2\varepsilon,\ \mathcal{E}(P_2)=\varepsilon\), \(T\) also fulfills \((\{P_1,P_2\},\mathcal{E},\delta)\)\hyp ODP. Hence we can save privacy budget in the case of a \(\bot\)\hyp output. We exemplify this for the mechanism proposed by Dwork and Lei for approximating the interquartile range (IQR) of an empirical distribution on \(\R\), that is, the difference between the 75th and the 25th percentile, which serves as a measure of scale. We use \(M_{\text{IQR}}\) to denote this mechanism.
\(M_{\text{IQR}}\) works by discretizing \(\R\) into buckets and proposing a local sensitivity of the discretized IQR. If the IQR has at most the proposed sensitivity, a noisy version of the IQR is released, otherwise \(\bot\) is released. To avoid an unlucky choice of the discretization, \(M_{\text{IQR}}\) uses two discretizations, where the second one is a shifted version of the first one. If the output produced by the mechanism resulting from the first dicretization (denoted by \(M^{\text{D}}_1\)) is not \(\bot\), this output is returned and the computation ends. Otherwise the mechanism resulting from the second discretization (denoted by \(M^{\text{D}}_2\)) is invoked and its output is returned. The authors show that, for \(j=1,2\), \(M^{\text{D}}_j\) is a PTR function. Without having formalized the concept of ODP yet, they use the fact that a \(\bot\)\hyp output of \(M^{\text{D}}_1\) contains less information about the database than a non\hyp \(\bot\)\hyp output, combined with the fact that \(M^{\text{D}}_2\) is only invoked if \(M^{\text{D}}_1\)'s output is \(\bot\) to show that \(M\) fulfills \((3\varepsilon,\delta)\)\hyp DP (instead of the naive \(((2+2)\varepsilon,(1+1)\delta)\)\hyp DP), because one never has to account for a non\hyp \(\bot\)\hyp output of both \(M^{\text{D}}_1\) and \(M^{\text{D}}_2\). The authors show that the same reasoning can be applied to more general compositions of PTR functions to save privacy budget also in the computation of, e.g., the median or of regression parameters.

\xhdr{ODP analysis: Treating \(M^{\text{D}}_1\) and \(M^{\text{D}}_2\) as separate mechanisms}
By treating each PTR function that makes up one of their mechanisms as a separate mechanism and composing them via Alg.~\ref{alg:type1_comp}, we can save \(\varepsilon\)\hyp budget beyond the improved analysis by Dwork and Lei, but require additional \(\delta\)\hyp budget. In the example of \(M_{\text{IQR}}\) we either only invoke \(M^{\text{D}}_1\) (if the output is in \(\R\)) or invoke both \(M^{\text{D}}_1\) and \(M^{\text{D}}_2\). Since \(M^{\text{D}}_1\) and \(M^{\text{D}}_2\) are PTR functions, they fulfill \((\{\R,\{\bot\}\},\mathcal{E},\delta)\)\hyp ODP with \(\mathcal{E}(\R)=2\varepsilon\) and \(\mathcal{E}(\bot)=\varepsilon\). One of the following three cases will occur:
\begin{enumerate}
    \item \(M^{\text{D}}_1\) returns \(r\) for some \(r\in\R\) (and \(M^{\text{D}}_2\) never gets invoked);\label{enum:svt_case_1}
    \item \(M^{\text{D}}_1\) returns \(\bot\) and \(M^{\text{D}}_2\) returns \(r\) for some \(r\in\R\);\label{enum:svt_case_2}
    \item both \(M^{\text{D}}_1\) and \(M^{\text{D}}_2\) return \(\bot\).\label{enum:svt_case_3}
\end{enumerate}
In case~\ref{enum:svt_case_1} we spend a privacy budget of \((2\varepsilon,\delta)\), in case~\ref{enum:svt_case_2} we spend a budget of \((3\varepsilon,2\delta)\), and in case~\ref{enum:svt_case_3} we spend a budget of \((2\varepsilon,2\delta)\). Compared with the DP analysis, in cases \ref{enum:svt_case_1} \& \ref{enum:svt_case_3} we save a budget of \((\varepsilon,0)\), but we spend an additional budget of \((0,\delta)\) in cases \ref{enum:svt_case_2} \& \ref{enum:svt_case_3}. By choosing the order of the two discretizations used for \(M^{\text{D}}_1\) and \(M^{\text{D}}_2\) uniformly at random, we can ensure that with probability at least \(1/2\) we will be in case~\ref{enum:svt_case_1} if at least one of the two mechanisms returns a value in \(\R\).

\xhdr{ODP analysis: Treating \(M_{\text{IQR}}\) as a single mechanism}
When treating \(M_{\text{IQR}}\) as a single mechanism, we can even strictly improve upon the original analysis in terms of ODP. We have the following lemma, which shows that we can save privacy budget if both discretizations result in a \(\bot\)\hyp output:

\begin{lemma}
\label{thm:ptr_single}
Let \(P_1=\R,\ P_2=\{\bot\}\) and \(\mathcal{E}(P_1)=3\varepsilon,\ \mathcal{E}(P_2)=2\varepsilon\). Then \(M_{\text{IQR}}\) is \((\{P_1,P_2\},\mathcal{E},\delta)\)\hyp ODP.
\end{lemma}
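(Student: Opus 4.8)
The plan is to establish two \emph{subset differential privacy} guarantees for $M_{\text{IQR}}$ and then glue them together with \Lemmaref{thm:sum_deltas}. Recall the structure of $M_{\text{IQR}}$: it runs the PTR function $M^{\text{D}}_1$ and returns that output if it lies in $\R$; only when $M^{\text{D}}_1$ returns $\bot$ is the second (shifted) PTR function $M^{\text{D}}_2$ invoked, and its output returned. The two invocations use independent coin tosses.

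For the block $P_1=\R$ I would simply invoke the improved analysis of Dwork and Lei \cite{dwork2009differential}, which already establishes that $M_{\text{IQR}}$ is $(3\varepsilon,\delta)$\hyp DP; restricting that inequality to sets $S\subseteq\R$ shows that $M_{\text{IQR}}$ is $(\R,3\varepsilon,\delta)$\hyp subset differentially private. (Intuitively, a real\hyp valued output is either produced directly by $M^{\text{D}}_1$ at cost $2\varepsilon$, or produced by $M^{\text{D}}_2$ after $M^{\text{D}}_1$ returned $\bot$ at cost $\varepsilon+2\varepsilon=3\varepsilon$, and the $\delta$ absorbs the event that a PTR test fails to reject an inadmissible local\hyp sensitivity proposal.)

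The new ingredient is the block $P_2=\{\bot\}$, where I claim that $M_{\text{IQR}}$ is $(\{\bot\},2\varepsilon,0)$\hyp subset differentially private, with \emph{no} additive $\delta$. By construction $M_{\text{IQR}}(\x)=\bot$ holds exactly when $M^{\text{D}}_1(\x)=\bot$ and $M^{\text{D}}_2(\x)=\bot$, and since the two invocations draw independent randomness,
\begin{equation*}
\Pr(M_{\text{IQR}}(\x)=\bot)=\Pr(M^{\text{D}}_1(\x)=\bot)\cdot\Pr(M^{\text{D}}_2(\x)=\bot).
\end{equation*}
A defining property of a PTR function is the $\delta$\hyp free bound $\Pr(M^{\text{D}}_j(\x)=\bot)\le e^{\varepsilon}\Pr(M^{\text{D}}_j(\xp)=\bot)$ for $j=1,2$ --- this is precisely why a $\bot$\hyp output of a PTR function costs only $\varepsilon$ --- so multiplying the two bounds gives $\Pr(M_{\text{IQR}}(\x)=\bot)\le e^{2\varepsilon}\Pr(M_{\text{IQR}}(\xp)=\bot)$. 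Feeding the two subset\hyp DP guarantees into \Lemmaref{thm:sum_deltas} with $\Delta(P_1)=\delta$ and $\Delta(P_2)=0$ then yields that $M_{\text{IQR}}$ is $(\{P_1,P_2\},\mathcal{E},\delta+0)=(\{P_1,P_2\},\mathcal{E},\delta)$\hyp ODP.

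The subtlety --- more than a genuine obstacle --- is the $\delta$\hyp accounting in the $\bot$ branch. The lazy route of plugging each $M^{\text{D}}_j$'s own $(\{\R,\{\bot\}\},\mathcal{E}_j,\delta)$\hyp ODP guarantee into a composition argument carries a $\delta$ from every factor and only gives $(\{P_1,P_2\},\mathcal{E},2\delta)$\hyp ODP. Recovering a single $\delta$, matching Dwork and Lei's DP guarantee while shrinking the $\mathcal{E}$\hyp cost of $\bot$ from $3\varepsilon$ to $2\varepsilon$, is what makes this a strict improvement; it hinges on using the sharp $\delta$\hyp free PTR inequality for the $\bot$ event together with the fact that $M_{\text{IQR}}$ emits $\bot$ only when \emph{both} discretizations do.
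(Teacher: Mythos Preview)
Your proof is correct and follows essentially the same route as the paper: establish $(\R,3\varepsilon,\delta)$-subset DP from the known $(3\varepsilon,\delta)$-DP guarantee of $M_{\text{IQR}}$, obtain $(\{\bot\},2\varepsilon,0)$-subset DP by factoring $\Pr(M_{\text{IQR}}(\x)=\bot)$ via independence and applying the $\delta$-free PTR inequality to each factor, and then combine via \Lemmaref{thm:sum_deltas}. Your closing paragraph on why the $\delta$-free treatment of the $\bot$ branch is the key point is additional commentary not present in the paper, but the argument itself is the same.
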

\begin{proof}
Since \(M_{\text{IQR}}\) is \((3\varepsilon,\delta)\)\hyp DP, it is in particular \((\R,3\varepsilon,\delta)\)\hyp subset differentially private. Let \(\x\) and \(\xp\) be neighboring databases. Since (1) the randomnesses of \(M^{\text{D}}_1\) and \(M^{\text{D}}_2\) are independent and (2) \(M^{\text{D}}_1\) and \(M^{\text{D}}_2\) are \((\varepsilon,\delta)\)\hyp PTR functions, it holds that
\begin{align*}
    &\Pr(M(\x)=\bot) = \Pr(M^{\text{D}}_1(\x)=\bot,M^{\text{D}}_2(\x)=\bot)\\
    &\overset{\text{(1)}}{=} \Pr(M^{\text{D}}_1(\x)=\bot)\Pr(M^{\text{D}}_2(\x)=\bot)\\
    &\overset{\text{(2)}}{\leq} e^{\varepsilon}\Pr(M^{\text{D}}_1(\xp)=\bot)e^{\varepsilon}\Pr(M^{\text{D}}_2(\xp)=\bot).
\end{align*}
Thus \(M_{\text{IQR}}\) is \((\{\bot\},2\varepsilon,0)\)\hyp subset differentially private. The statement of the lemma then follows from \Lemmaref{thm:sum_deltas}.
\end{proof}

\xhdr{Application}
There are multiple statistics that can be computed via PTR mechanisms. Typically one is not interested in only a single differentially private statistic of a dataset, but in multiple statistics. E.g., one might want to release a measure of the location of the data and a measure of its spread. For this task one could first invoke \(M_{\text{IQR}}\) to compute the IQR of the data and then another DP mechanism \(M_{\text{median}}\) to compute its median \cite{dwork2009differential}. The total privacy budget \((\varepsilon_t,\delta_t)\) would be divided into a budget \((3\varepsilon_1,\delta_1)\) for \(M_{\text{IQR}}\) and a budget \((\varepsilon_2,\delta_2)\) for \(M_{\text{median}}\), where \(3\varepsilon_1 + \varepsilon_2 = \varepsilon_t\) and \(\delta_1 + \delta_2 = \delta_t\). With a DP analysis, \(M_{\text{IQR}}\) would always use up \((3\varepsilon_1,\delta_1)\) from the budget. However, with ODP composition and when using our second ODP analysis (\Lemmaref{thm:ptr_single}), \(M_{\text{IQR}}\) only uses up \((2\varepsilon_1,\delta_1)\) in the case of a \(\bot\)\hyp output, and thus the budget remaining for \(M_{\text{median}}\) increases to \((\varepsilon_1+\varepsilon_2,\delta_2)\) in that case. The increased budget can be used to reduce the amount of noise added in the computation of the differentially private median, which makes the released estimate more accurate.

\section{ODP for mechanisms with variable numbers of iterations}
\label{sec:iteration_length}
There are many iterative algorithms for which the required number of iterations is not known beforehand. Instead, they are executed until a certain criterion is reached, e.g., the norm of the gradient in an optimization problem falls below a prespecified threshold or the validation loss starts increasing in the training of a machine learning (ML) model. A common technique to make the final output of iterative algorithms differentially private is by making the intermediate values in each iteration differentially private. In stochastic gradient descent, for instance, which is often used for training neural networks, this is typically achieved by adding Gaussian noise to each gradient \cite{abadi2016deep,bu2020deep}. The DP guarantees for the intermediate values are then combined using a composition theorem to get a DP guarantee for the final output. For this it is necessary to know beforehand --- i.e., before executing the iterative algorithm --- how many iterations will be performed. If the total privacy budget is fixed, a larger number of iterations means that less privacy budget can be used on each iteration, whereas with a smaller number of iterations more privacy budget is available for each iteration. If, as in many cases, the optimal number of iterations is not known {\em a priori}, the number will often be either overestimated or underestimated. If the number of iterations is overestimated, privacy budget is wasted on iterations that are not required. If it is underestimated, the iterative algorithm will halt before it has reached an optimal solution, e.g., an ML model might have a larger error than it could have with more iterations.

ODP allows us to escape this dilemma. Consider a data analyst that chooses a number \(k\) of iterations for the iterative algorithm. With standard DP analysis, the algorithm always runs for \(k\) iterations, even if it converges at an earlier iteration \(k'<k\). The privacy budget for the remaining \(k-k'\) iterations is thus wasted. With ODP, however, the algorithm can be stopped at iteration \(k'\) and the budget that was reserved for the remaining \(k-k'\) iterations can be used via ODP composition for other queries on the same database or on other databases that might share individuals who contributed data with the original database. This solves the problem of overestimating the number of iterations. To solve the problem of underestimating the number of iterations, the data analyst can purposely choose a large number for \(k\) that is likely to be an overestimate. This is not problematic anymore, since in the case where it was indeed an overestimate, the algorithm can again halt earlier, and the remaining privacy budget can be used for other tasks.

\begin{algorithm}[!tbh]
\caption{\(\im\big[\{k_i\}_{i=1}^n, \{M_k\}_{k=1}^{k_n}, \{h^{k_i}\}_{i=1}^{n-1}\big](\x)\)}
\label{alg:iterative_mechanism}
\begin{algorithmic}[1]
\For{\(k=1,\dots,k_n\)}
    \State{\(s_k=M_k((s_1,\dots,s_{k-1}),\x)\)}
    \If{\(k\in \{k_i\}_{i=1}^{n-1}\)}
        \State{Let \(i\) such that \(k=k_i\)}
        \If{\(h^{k_i}(s_1,\dots,s_{k_i})=1\)}
            \State{\Return{\((s_1,\dots,s_{k_i})\)}}
        \EndIf
    \EndIf
\EndFor
\State{\Return{\((s_1,\dots,s_{k_n})\)}}
\end{algorithmic}
\end{algorithm}

We consider an iterative mechanism \(M\) as defined in \Algref{alg:iterative_mechanism}.
Let \(k_1<k_2<\dots<k_n\) be numbers of iterations after which \(M\) might stop. For \(k=1,\dots,k_n\), let \(M_k\) be a differentially private mechanism that takes as input the output of previous mechanisms and a database. \(M_k\) is the mechanism executed in the \(k\)\hyp th iteration of \(M\). Define a set of binary functions
\begin{equation*}
    h^{k_i}:\prod_{k=1}^{k_i}\Range(M_k)\mapsto\{0,1\},
\end{equation*}
\(i=1,\dots,n-1,\) that act as stopping criteria. After \(k_i\) iterations, \(M\) evaluates \(h^{k_i}\) on the outputs \(s_1,\dots,s_{k_i}\) produced so far: if \(h^{k_i}(s_1,\dots,s_{k_i})=1\), \(M\) halts and returns \((s_1,\dots,s_{k_i})\); otherwise it continues. If \(h^{k_i}(s_1,\dots,s_{k_i})=0\) for all \(i<n\), \(M\) halts and returns \((s_1,\dots,s_{k_n})\) after iteration \(k_n\). Note that we assume for simplicity that \(h^{k_i}\) can be evaluated with only the information that has already been computed in a differentially private way, i.e., it only requires access to \(\x\) via \(M_1,\dots,M_{k_i}\) (this is, e.g., the case if \(h^{k_i}\) is based on the gradient norm in differentially private SGD or the validation score of an ML model on a public validation set). If this is not the case and the computation of \(h^{k_i}\) requires access to the database \(\x\), we can simply add an additional mechanism \(\tilde{M}_{k_i}\) after \(M_{k_i}\) that computes \(h^{k_i}\) in a differentially private way, and reindex the sequence of mechanisms such that \(M_{k_i}=\tilde{M}_{k_i}\).

For \(i=1,\dots,n\), let
\begin{equation*}
    P_i=\{s \mid s\in\Range(M), \abs{s}=k_i\},
\end{equation*}
where \(\abs{s}\) denotes the length of the vector \(s\), and let \(\mathcal{P}=\{P_i\}_{i=1}^n\). Throughout this section we always mean this partition \(\mathcal{P}\) when referring to the partition of the output space of an iterative mechanism. When we want to make the dependency on an iterative mechanism \(M'\) explicit, we write \(\mathcal{P}^{M'}=\{P_i^{M'}\}_{i=1}^n\). In \Secref{sec:comp_sum_deltas} we show a generic ODP bound for \(M\) with the ODP partition \(\mathcal{P}\) based on \Lemmaref{thm:sum_deltas} that is compatible with any DP composition theorem. In \Secref{sec:comp_direct} we show how this generic bound can be improved upon via a direct derivation that depends on the specific composition setting. Thus, \Secref{sec:comp_direct} also acts as a demonstration of how \Lemmaref{thm:sum_deltas} does not always yield an optimal ODP bound.

The iterative mechanism in \Algref{alg:iterative_mechanism} is defined by a sequence \(\{k_i\}_{i=1}^n\) of potential stopping points, a sequence \(\{M_k\}_{k=1}^{k_n}\) of mechanisms that are invoked in the different iterations, and a sequence \(\{h^{k_i}\}_{i=1}^{n-1}\) of stopping criteria. We denote an iterative mechanism, given by such a set of parameters, via \(\im\big[\{k_i\}_{i=1}^n, \{M_k\}_{k=1}^{k_n}, \{h^{k_i}\}_{i=1}^{n-1}\big]\).

\subsection{ODP bound based on \Lemmaref{thm:sum_deltas}}
\label{sec:comp_sum_deltas}
We derive an ODP bound for iterative mechanisms via \Lemmaref{thm:sum_deltas} and any composition theorem \(\comp\) that is compatible with \(M_1,\dots,M_{k_n}\), i.e., that can be applied to the sequence of mechanisms \(M_1,\dots,M_{k_n}\). This could, e.g., be the optimal composition theorem for adaptive, heterogeneous composition \cite{murtagh2016complexity} (see \Secref{sec:comp_direct}), for the most general class of mechanisms.

\begin{lemma}
\label{thm:odp_partition_iterative}
Let
\begin{equation*}
    M=\im\big[\{k_i\}_{i=1}^n, \{M_k\}_{k=1}^{k_n}, \{h^{k_i}\}_{i=1}^{n-1}\big]
\end{equation*}
be an iterative mechanism.
Let \(\comp\) be a DP composition theorem that is compatible with \(M_1,\dots,M_{k_n}\). \(\comp\) takes as input a sequence of mechanisms and a desired value \(\varepsilon\), and returns a value \(\delta\) such that the sequence fulfills \((\varepsilon,\delta)\)\hyp DP. For \(i=1,\dots,n\), let \(\varepsilon^{k_i}\geq 0\) be a desired \(\varepsilon\)\hyp value, and let
\begin{equation*}
    \delta^{k_i} = \comp(M_1,\dots,M_{k_i};\varepsilon^{k_i})
\end{equation*}
be the \(\delta\) returned by the composition theorem.
Then \(M\) fulfills \((\mathcal{P},\mathcal{E},\delta)\)\hyp ODP with
\begin{equation*}
    \mathcal{E}(P_i)=\varepsilon^{k_i}
\end{equation*}
and
\begin{equation*}
    \delta = \sum_{i=1}^n \delta^{k_i}
\end{equation*}
for all \(i=1,\dots,n\).
\end{lemma}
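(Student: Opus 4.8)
The plan is to reduce the statement to \Lemmaref{thm:sum_deltas}: I will show that for each $i\in\{1,\dots,n\}$ the mechanism $M$ is $(P_i,\varepsilon^{k_i},\delta^{k_i})$\hyp subset differentially private, and then invoke \Lemmaref{thm:sum_deltas} with the functions $\mathcal{E}(P_i)=\varepsilon^{k_i}$ and $\Delta(P_i)=\delta^{k_i}$ to obtain that $M$ is $(\mathcal{P},\mathcal{E},\sum_{i=1}^n\delta^{k_i})$\hyp ODP, which is exactly the claim.

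The key structural observation is that whether an execution of $M$ halts after exactly $k_i$ iterations — i.e.\ whether its output lands in $P_i$ — depends only on the first $k_i$ intermediate outputs $s_1,\dots,s_{k_i}$: each $M_k$ reads only $\x$ and $s_1,\dots,s_{k-1}$, and the stopping criteria $h^{k_1},\dots,h^{k_i}$ are functions of $s_1,\dots,s_{k_i}$ alone. So let $M_{1:k_i}$ denote the adaptive composition of $M_1,\dots,M_{k_i}$ (each invoked on $\x$ together with the outputs produced so far) that returns the tuple $(s_1,\dots,s_{k_i})\in\prod_{k=1}^{k_i}\Range(M_k)$. I would argue that the marginal distribution of $(s_1,\dots,s_{k_i})$ under $M(\x)$ equals the distribution of $M_{1:k_i}(\x)$, and moreover that for every $S\subseteq P_i$ the event $\{M(\x)\in S\}$ coincides — under the natural identification of a length\hyp$k_i$ tuple with the corresponding element of $\Range(M)$ — with the event $\{M_{1:k_i}(\x)\in S\}$: a tuple lying in $S\subseteq P_i$ already witnesses $h^{k_j}(s_1,\dots,s_{k_j})=0$ for $j<i$ and (if $i<n$) $h^{k_i}(s_1,\dots,s_{k_i})=1$, so whenever $M_{1:k_i}$ produces such a tuple, $M$ halts at iteration $k_i$ and outputs precisely it, and conversely any output of $M$ in $S$ arises this way.

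By hypothesis $\delta^{k_i}=\comp(M_1,\dots,M_{k_i};\varepsilon^{k_i})$, and since $\comp$ is compatible with $M_1,\dots,M_{k_n}$ (hence with the prefix $M_1,\dots,M_{k_i}$) the composed mechanism $M_{1:k_i}$ is $(\varepsilon^{k_i},\delta^{k_i})$\hyp DP. Therefore, for neighboring $\x,\xp$ and $S\subseteq P_i$,
\[
\Pr(M(\x)\in S)=\Pr(M_{1:k_i}(\x)\in S)\leq e^{\varepsilon^{k_i}}\Pr(M_{1:k_i}(\xp)\in S)+\delta^{k_i}=e^{\varepsilon^{k_i}}\Pr(M(\xp)\in S)+\delta^{k_i},
\]
which is exactly $(P_i,\varepsilon^{k_i},\delta^{k_i})$\hyp subset differential privacy. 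Applying \Lemmaref{thm:sum_deltas} then completes the proof.

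The main obstacle I expect is the bookkeeping in the middle step: pinning down rigorously that $\{M(\x)\in S\}$ for $S\subseteq P_i$ is, up to the identification of tuples, the same event as one concerning only the $k_i$\hyp fold composition $M_{1:k_i}$, and that the composition theorem $\comp$ genuinely applies to this adaptive sequence (each $M_k$ consuming the earlier outputs). Standard composition theorems are formulated for precisely this adaptive setting, so this is a matter of careful phrasing rather than of a substantive difficulty; one also has to treat the boundary case $i=n$ — where there is no criterion $h^{k_n}$ and $M$ simply runs all $k_n$ iterations — separately but by the same argument.
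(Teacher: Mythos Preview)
Your proposal is correct and follows essentially the same route as the paper's proof: establish $(P_i,\varepsilon^{k_i},\delta^{k_i})$\hyp subset DP for each $i$ via the identification $\Pr(M(\x)\in S)=\Pr(M_{1:k_i}(\x)\in S)$ for $S\subseteq P_i$, then apply \Lemmaref{thm:sum_deltas}. The only addition in the paper is an explicit remark that the stopping criteria $h^{k_i}$ may be assumed deterministic w.l.o.g.\ (by folding any randomness of $h^{k_i}$ into the output of $M_{k_i}$), which is the only place your ``depends only on $s_1,\dots,s_{k_i}$'' observation could fail; otherwise your argument and the paper's coincide.
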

\begin{proof}
We can assume w.l.o.g.\ that the stopping criteria \(h^{k_i}\) are deterministic. If they are not deterministic, we can let \(M_{k_i}\), in addition to its original output, return a sample from the distribution of \(h^{k_i}\)'s randomness, which \(h^{k_i}\) can then access. This does not change the DP guarantee of \(M_{k_i}\), since the distribution of \(h^{k_i}\)'s randomness is independent of the database.

Let \(i\in\{1,\dots,n\}\). Define \(M^{k_i}=(M_1,\dots,M_{k_i})\).
Because \(h^{k_i}\) is deterministic, it holds for any database \(\x\) and any \(S\subseteq P_i\) that
\begin{equation*}
    \Pr(M(\x)\in S) = \Pr(M^{k_i}(\x)\in S).
\end{equation*}
Since \(M^{k_i}\) is \((\varepsilon^{k_i},\delta^{k_i})\)\hyp differentially private, we have, for any neighboring databases \(\x,\xp\):
\begin{align*}
    \Pr(M(\x)\in S) &= \Pr(M^{k_i}(\x)\in S)\\
    &\leq e^{\varepsilon^{k_i}} \Pr(M^{k_i}(\xp)\in S) + \delta^{k_i}\\
    &= e^{\varepsilon^{k_i}} \Pr(M(\xp)\in S) + \delta^{k_i}.
\end{align*}
Hence \(M\) is \((P_{k_i},\varepsilon^{k_i},\delta^{k_i})\)\hyp subset differentially private. The statement of the lemma then follows from \Lemmaref{thm:sum_deltas}.
\end{proof}

\subsection{ODP bound via direct derivation}
\label{sec:comp_direct}

In this subsection we give an example that shows that we can get better ODP bounds for iterative mechanisms than the ones obtained by applying \Lemmaref{thm:odp_partition_iterative}. This comes at the cost of losing generality, since we cannot plug in any existing DP composition theorem anymore, but have to do a derivation from scratch. We consider the adaptive, heterogeneous composition of arbitrary differentially private mechanisms. Our proof extends the one by Murtagh and Vadhan \cite{murtagh2016complexity} for a composition setting without stopping rules to one with stopping rules. Adaptive, heterogenous composition means that, for \(k=1,2,\dots\), we assume that \(M_k\) fulfills \((\varepsilon_k,\delta_k)\)\hyp DP for some fixed \(\varepsilon_k\) and \(\delta_k\), and that \(M_k\) may depend on the outputs of \(M_1,\dots,M_{k-1}\), but we assume nothing beyond that.

Let
\begin{equation*}
    M=\im\big[\{k_i\}_{i=1}^n, \{M_k\}_{k=1}^{k_n}, \{h^{k_i}\}_{i=1}^{n-1}\big]
\end{equation*}
be an iterative mechanism, based on mechanisms \(\{M_k\}_{k=1}^{k_n}\) as described above.
For a function \(\mathcal{E}:\mathcal{P}\mapsto\R_{\geq 0}\) we define the smallest \(\delta\) such that \(M\) fulfills \((\mathcal{P},\mathcal{E},\delta)\)\hyp ODP as

\begin{align*}
    &\optdelta(\{k_i\}_{i=1}^n, \{M_k\}_{k=1}^{k_n}, \{h^{k_i}\}_{i=1}^{n-1}, \mathcal{E})\\
    &= \inf\{\delta\mid \im\big[\{k_i\}_{i=1}^n, \{M_k\}_{k=1}^{k_n}, \{h^{k_i}\}_{i=1}^{n-1}\big]\\
    &\hphantom{= \inf\{\delta\mid}\qquad \text{is } (\mathcal{P},\mathcal{E},\delta)\text{-ODP}\}.
\end{align*}

Given \(\mathcal{E}\), \(\{k_i\}_{i=1}^n\) and a fixed list of DP parameters \(\{(\varepsilon_k,\delta_k)\}_{k=1}^{k_n}\), we want to find the minimal \(\delta\) such that \(\im\big[\{k_i\}_{i=1}^n, \{M_k\}_{k=1}^{k_n}, \{h^{k_i}\}_{i=1}^{n-1}\big]\) fulfills \((\mathcal{P},\mathcal{E},\delta)\)\hyp ODP for any sequence \(\{M_k\}_{k=1}^{k_n}\) of mechanisms that fulfill \((\varepsilon_k,\delta_k)\)\hyp DP, \(k=1,\dots,k_n\), and any sequence of stopping criteria. We thus define
\begin{align*}
    &\optdelta(\{k_i\}_{i=1}^n, \{(\varepsilon_k,\delta_k)\}_{k=1}^{k_n}, \mathcal{E})\\
    &= \sup_{\substack{\{M_k\}_{k=1}^{k_n}\\ \{h^{k_i}\}_{i=1}^{n-1}}} \{\optdelta(\{k_i\}_{i=1}^n, \{M_k\}_{k=1}^{k_n}, \{h^{k_i}\}_{i=1}^{n-1}, \mathcal{E})\\
    &\hphantom{= \sup_{\substack{\{M_k\}_{k=1}^{k_n}\\ \{h^{k_i}\}_{i=1}^{n-1}}} \{}
    \mid M_k \text{ is } (\varepsilon_k,\delta_k)\text{-DP},\ k=1,\dots,k_n\}.
\end{align*}

The remainder of this subsection is devoted to deriving the expression for \(\optdelta(\{k_i\}_{i=1}^n, \{(\varepsilon_k,\delta_k)\}_{k=1}^{k_n}, \mathcal{E})\) in \Thmref{thm:iterative_opt_delta}. Further, following \Thmref{thm:iterative_opt_delta}, we compare this optimal value with the one that can be obtained via \Lemmaref{thm:odp_partition_iterative}.
All proofs from this subsection are deferred to \Appref{sec:proofs_comp_direct}.

Like Kairouz et al.\ \cite{kairouz2017composition} and Murtagh and Vadhan \cite{murtagh2016complexity}, we derive an expression for \(\optdelta(\{k_i\}_{i=1}^n, \{(\varepsilon_k,\delta_k)\}_{k=1}^{k_n}, \mathcal{E})\) by showing that it suffices to analyze a class of randomized response mechanisms and to then compute the optimal ODP bound for these randomized response mechansisms. Kairouz et al.\ show the following lemma:
\begin{lemma}[\cite{kairouz2017composition}]
\label{thm:rr_equivalence}
For \(\varepsilon,\delta\geq 0\), let the randomized response mechanism \(\tilde{M}_{(\varepsilon,\delta)}:\{0,1\}\rightarrow\{0,1,2,3\}\) be defined as (dropping the dependency on \((\varepsilon,\delta)\) for simplicity)
\begin{align*}
    &\Pr(\tilde{M}(0) = 0) = \delta &&\Pr(\tilde{M}(1) = 0) = 0\\
    &\Pr(\tilde{M}(0) = 1) = \frac{(1-\delta)e^{\varepsilon}}{1+e^{\varepsilon}} &&\Pr(\tilde{M}(1) = 1) = \frac{(1-\delta)}{1+e^{\varepsilon}}\\
    &\Pr(\tilde{M}(0) = 2) = \frac{(1-\delta)}{1+e^{\varepsilon}} &&\Pr(\tilde{M}(1) = 2) = \frac{(1-\delta)e^{\varepsilon}}{1+e^{\varepsilon}}\\
    &\Pr(\tilde{M}(0) = 3) = 0 &&\Pr(\tilde{M}(1) = 3) = \delta.
\end{align*}
Then for any mechanism \(M\) that is \((\varepsilon,\delta)\)\hyp DP and any pair of neighboring databases \(\x^0,\x^1\) there exists a function \(T\) such that \(T(\tilde{M}_{(\varepsilon,\delta)}(b))\) is identically distributed to \(M(\x^b)\) for \(b=0,1\).
\end{lemma}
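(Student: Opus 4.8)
\emph{Setup.} The statement says that $\tilde M_{(\varepsilon,\delta)}$ is the ``most informative'' $(\varepsilon,\delta)$-DP mechanism on a fixed pair of neighbours, so the plan is to exhibit the post-processing $T$ explicitly, working only with the two output laws $P_b$ = the distribution of $M(\x^b)$. As $T$ may be randomized, it suffices to produce probability measures $T(0),T(1),T(2),T(3)$ on $\Range(M)$ with
\[ P_0 = \delta\,T(0) + \tfrac{(1-\delta)e^{\varepsilon}}{1+e^{\varepsilon}}T(1) + \tfrac{1-\delta}{1+e^{\varepsilon}}T(2), \qquad P_1 = \tfrac{1-\delta}{1+e^{\varepsilon}}T(1) + \tfrac{(1-\delta)e^{\varepsilon}}{1+e^{\varepsilon}}T(2) + \delta\,T(3); \]
taking $T$ to be the kernel $j\mapsto T(j)$, these identities say precisely that pushing the law of $\tilde M(b)$ through $T$ yields $P_b$.

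\emph{The construction.} Dominate $P_0,P_1$ by $\lambda=P_0+P_1$ with densities $p_0,p_1$ and split $\Range(M)$ (up to $\lambda$-null sets) into $A=\{p_0>e^{\varepsilon}p_1\}$, $B=\{p_1>e^{\varepsilon}p_0\}$, and $C$ the remainder. Applying $M$'s DP inequality to $S=A$ and $S=B$ gives $\delta_0:=P_0(A)-e^{\varepsilon}P_1(A)\le\delta$ and $\delta_1:=P_1(B)-e^{\varepsilon}P_0(B)\le\delta$. I would then search for measures $D_0\le P_0$, $D_1\le P_1$, each of total mass $1-\delta$, with $e^{-\varepsilon}\le dD_0/dD_1\le e^{\varepsilon}$ (as densities); given these, set $\delta\,T(0):=P_0-D_0$ and $\delta\,T(3):=P_1-D_1$ (measures of mass $\delta$), and let $T(1),T(2)$ be the $\varepsilon$-randomized-response decomposition of $(D_0,D_1)$: with $\rho_b:=D_b/(1-\delta)$, put $T(1)=\tfrac12\bigl[(\rho_0+\rho_1)+\tfrac{1+e^{\varepsilon}}{e^{\varepsilon}-1}(\rho_0-\rho_1)\bigr]$ and $T(2)=\tfrac12\bigl[(\rho_0+\rho_1)-\tfrac{1+e^{\varepsilon}}{e^{\varepsilon}-1}(\rho_0-\rho_1)\bigr]$. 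The ratio bound on $D_0,D_1$ is exactly what makes $T(1),T(2)$ nonnegative, hence probability measures, and substituting back verifies the two displayed identities. To build $D_0,D_1$ I would first ``trim'': replace $p_0$ on $A$ by $e^{\varepsilon}p_1$ and $p_1$ on $B$ by $e^{\varepsilon}p_0$; this yields measures of masses $1-\delta_0$ and $1-\delta_1$ (both $\ge 1-\delta$) whose ratio lies in $[e^{-\varepsilon},e^{\varepsilon}]$ everywhere, and then remove the residual surpluses $\delta-\delta_0$ and $\delta-\delta_1$, drawn from the region where the ratio is strictly interior, to bring both masses down to exactly $1-\delta$.

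\emph{The hard part.} The crux is this last coordinated removal: since the DP inequalities give only $\delta_0,\delta_1\le\delta$ (not equality) and the two surplus budgets $\delta-\delta_0,\delta-\delta_1$ may differ, one has to show there is always enough ``interior'' mass to subtract exactly those amounts while keeping $dD_0/dD_1\in[e^{-\varepsilon},e^{\varepsilon}]$ --- which after bookkeeping comes down to $\delta\le 1$ together with the pointwise density inequalities defining $A,B,C$. A cleaner but less self-contained alternative would avoid the explicit kernel: show that the ``testing region'' $\{(\Pr(\tilde M(0)\in S),\Pr(\tilde M(1)\in S)):S\subseteq\{0,1,2,3\}\}$ is exactly the octagon cut out by the $(\varepsilon,\delta)$-DP inequalities for $S$ and $S^{c}$, that every $(\varepsilon,\delta)$-DP $M$ has a testing region for $\x^0,\x^1$ contained in it, and then invoke the randomization (Blackwell) comparison theorem for binary experiments to obtain $T$ --- the delicate step there being the correct form of that theorem for general output spaces.
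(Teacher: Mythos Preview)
The paper does not prove this lemma; it quotes it verbatim from Kairouz, Oh and Viswanath and immediately uses it as a black box. So there is no ``paper's proof'' to compare against, and your write-up is already more than the paper offers.

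On the substance of your sketch: the reduction to finding sub-probability measures $D_0\le P_0$, $D_1\le P_1$ of mass $1-\delta$ with pointwise ratio in $[e^{-\varepsilon},e^{\varepsilon}]$ is exactly right, and your formulas for $T(1),T(2)$ are correct (with the obvious removable singularity at $\varepsilon=0$, where the ratio constraint forces $\rho_0=\rho_1$ and one simply takes $T(1)=T(2)=\rho_0$). The one place you leave as a claim is the ``coordinated removal'', and your justification ``comes down to $\delta\le1$'' is too loose. Here is how to close it cleanly. After trimming you have $f=\min(p_0,e^{\varepsilon}p_1)$ and $g=\min(p_1,e^{\varepsilon}p_0)$ with $e^{-\varepsilon}g\le f\le e^{\varepsilon}g$ pointwise and $\int f=1-\delta_0\ge 1-\delta$, $\int g=1-\delta_1\ge 1-\delta$. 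Assume $\int f\ge\int g$ (else swap). First shrink $f$ alone to some $f'$ with $\int f'=\int g$: the pointwise slack for lowering $f$ while keeping $f'\ge e^{-\varepsilon}g$ is $f-e^{-\varepsilon}g$, whose integral is $\int f-e^{-\varepsilon}\int g\ge \int f-\int g$, so enough slack exists; and $f'\le f\le e^{\varepsilon}g$ is preserved automatically. Now $f',g$ have equal mass and ratio still in $[e^{-\varepsilon},e^{\varepsilon}]$, so set $D_0=\tfrac{1-\delta}{\int g}\,f'$ and $D_1=\tfrac{1-\delta}{\int g}\,g$; the scalar is at most $1$, so $D_0\le f'\le p_0$ and $D_1\le g\le p_1$, and the ratio is unchanged. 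That is the whole ``bookkeeping''.

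Your alternative via Blackwell's comparison theorem is also the route in the Kairouz et al.\ proof and is arguably cleaner; your caution about needing the theorem for general (not just finite) output spaces is well placed, but once $P_0,P_1$ are reduced to their densities with respect to $P_0+P_1$ this is the classical binary-experiment version and no exotic form is required.
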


Based on this result, we show that it suffices to compute the ODP guarantee of a mechanism whose iterations consist of invocations of randomized response mechanisms, in order to compute \(\optdelta(\{k_i\}_{i=1}^n, \{(\varepsilon_k,\delta_k)\}_{k=1}^{k_n}, \mathcal{E}))\):
\begin{lemma}
\label{thm:rr_comp_equivalence}
For any \(\{k_i\}_{i=1}^n\), \(\{(\varepsilon_k,\delta_k)\}_{k=1}^{k_n}\) and any \(\mathcal{E}\) we have that
\begin{align*}
    &\optdelta(\{k_i\}_{i=1}^n, \{(\varepsilon_k,\delta_k)\}_{k=1}^{k_n}, \mathcal{E})\\
    &= \sup_{\{h^{k_i}\}_{i=1}^{n-1}} \optdelta(\{k_i\}_{i=1}^n, \{\tilde{M}_{(\varepsilon_k,\delta_k)}\}_{k=1}^{k_n}, \{h^{k_i}\}_{i=1}^{n-1}, \mathcal{E}).
\end{align*}
\end{lemma}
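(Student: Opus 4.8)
The plan is to prove the two inequalities separately. For the direction ``$\ge$'', observe that each canonical randomized-response mechanism $\tilde{M}_{(\varepsilon_k,\delta_k)}$ is itself $(\varepsilon_k,\delta_k)$-DP when regarded as a mechanism on the two-element database space $\{0,1\}$, whose single neighboring pair is $(0,1)$. Hence, for every fixed sequence $\{h^{k_i}\}$ of stopping criteria, the pair $\bigl(\{\tilde{M}_{(\varepsilon_k,\delta_k)}\}_{k=1}^{k_n},\{h^{k_i}\}_{i=1}^{n-1}\bigr)$ is one of the tuples ranged over in the supremum defining $\optdelta(\{k_i\}_{i=1}^n,\{(\varepsilon_k,\delta_k)\}_{k=1}^{k_n},\mathcal{E})$, so $\optdelta(\{k_i\},\{\tilde{M}_{(\varepsilon_k,\delta_k)}\},\{h^{k_i}\},\mathcal{E}) \le \optdelta(\{k_i\},\{(\varepsilon_k,\delta_k)\},\mathcal{E})$; taking the supremum over $\{h^{k_i}\}$ yields ``$\ge$''.

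For ``$\le$'' I would fix an arbitrary sequence $\{M_k\}_{k=1}^{k_n}$ with each $M_k$ being $(\varepsilon_k,\delta_k)$-DP, stopping criteria $\{h^{k_i}\}$ (deterministic without loss of generality, by the argument in the proof of \Lemmaref{thm:odp_partition_iterative}), and an ordered pair $\x^0,\x^1$ of neighboring databases, and show that $M := \im[\{k_i\},\{M_k\},\{h^{k_i}\}]$ satisfies the ODP inequality at this pair with the $\delta$ coming from the right-hand side. First I would carry out the reduction of the composition to randomized response in the style of Murtagh and Vadhan \cite{murtagh2016complexity}: applying \Lemmaref{thm:rr_equivalence} to each $(\varepsilon_k,\delta_k)$-DP mechanism $s\mapsto M_k\bigl((s_1,\dots,s_{k-1}),s\bigr)$, for every prefix $s_1,\dots,s_{k-1}$, and to the pair $(\x^0,\x^1)$, produces (randomized) post-processing maps $T_k^{(s_1,\dots,s_{k-1})}$. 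Because the $\tilde{M}_{(\varepsilon_k,\delta_k)}$ are independent across iterations and of all the post-processing randomness, a short induction on the iteration index shows that the composite reconstruction $\Psi$ given by $s_k := T_k^{(s_1,\dots,s_{k-1})}\bigl(\tilde{M}_{(\varepsilon_k,\delta_k)}(b)\bigr)$ turns the length-$k_n$ transcript $(\tilde{M}_1(b),\dots,\tilde{M}_{k_n}(b))$ into a transcript $(s_1,\dots,s_{k_n})$ distributed exactly as the untruncated adaptive composition of $M_1,\dots,M_{k_n}$ on $\x^b$, for $b\in\{0,1\}$. Since the stopping length of $M(\x^b)$ is a deterministic function of that transcript, $M(\x^b)$ is distributed as $\mathrm{Trunc}\circ\Psi$ evaluated at $(\tilde{M}_1(b),\dots,\tilde{M}_{k_n}(b))$, where $\mathrm{Trunc}$ cuts a transcript off at the first $k_i$ with $h^{k_i}=1$.

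The remaining and more delicate step is to realize $\mathrm{Trunc}\circ\Psi$ as a \emph{length-preserving} post-processing of a genuine RR iterative mechanism. Let $\pi_i(r_1,\dots,r_{k_i})$ denote the probability, over the reconstruction randomness, that $\mathrm{Trunc}(\Psi(r_1,\dots,r_{k_n}))$ has length $k_i$; the key observation is that $\pi_i$ depends only on $r_1,\dots,r_{k_i}$, because the first $k_i$ reconstructed outputs do, so $\pi_i$ can be used as a legitimate (randomized) stopping rule. Concretely I would let $\tilde{h}^{k_i}$, on input $r_1,\dots,r_{k_i}$, fire with probability $\pi_i(r_1,\dots,r_{k_i})\big/\bigl(1-\sum_{j<i}\pi_j(r_1,\dots,r_{k_j})\bigr)$; a telescoping computation then shows that $M' := \im[\{k_i\},\{\tilde{M}_{(\varepsilon_k,\delta_k)}\},\{\tilde{h}^{k_i}\}]$ outputs, on input $b$, a transcript of length $k_i$ with exactly the same probability as $M(\x^b)$. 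Finally I would define $\Phi$ on a length-$k_i$ RR transcript $(r_1,\dots,r_{k_i})$ to reconstruct $(s_1,\dots,s_{k_i})$ by running $\Psi$ on the first $k_i$ coordinates with the reconstruction randomness conditioned on the event that this reconstruction is truncated at $k_i$; by construction $\Phi$ maps length-$k_i$ transcripts to length-$k_i$ transcripts and $\Phi(M'(b))\sim M(\x^b)$ for $b\in\{0,1\}$. Applying a post-processing property of ODP --- if $N$ is $(\mathcal{P},\mathcal{E},\delta)$-ODP and $\Phi$ sends each set of $\mathcal{P}$ into the corresponding set of a partition $\mathcal{P}'$ carrying the same $\mathcal{E}$-values, then $\Phi\circ N$ is $(\mathcal{P}',\mathcal{E},\delta)$-ODP, which follows by integrating the single ODP inequality over the level sets of the $[0,1]$-valued post-processing probabilities --- transfers the ODP guarantee of $M'$ at $(0,1)$ to $M$ at $(\x^0,\x^1)$. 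Running this for every ordered neighboring pair shows that $M$ is $(\mathcal{P},\mathcal{E},\delta^*)$-ODP with $\delta^* = \sup_{\{h^{k_i}\}}\optdelta(\{k_i\},\{\tilde{M}_{(\varepsilon_k,\delta_k)}\},\{h^{k_i}\},\mathcal{E})$; taking the supremum over $\{M_k\}$ and $\{h^{k_i}\}$ gives ``$\le$''.

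I expect the main obstacle to be precisely this last construction. Since the canonical mechanisms $\tilde{M}_{(\varepsilon_k,\delta_k)}$ output in $\{0,1,2,3\}$ and cannot be augmented with auxiliary randomness, the RR iterative mechanism's stopping decisions must be functions of the RR outputs alone, while they are required to reproduce the randomized stopping law of $M$ (randomized because the reconstruction maps $T_k^\sigma$ are randomized); the fact that the truncation-length probabilities $\pi_i$ are functions of only the first $k_i$ RR outputs is what rescues the argument, and all the residual correlation between the stopping event and the reconstructed transcript is absorbed into $\Phi$. A minor point to get right is that the supremum over $\{h^{k_i}\}$ on the right-hand side must be understood as ranging over all (possibly randomized) stopping criteria on the RR output space, and --- as throughout the paper --- measurability is assumed rather than verified.
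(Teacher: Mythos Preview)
Your argument is correct, but the paper takes a considerably shorter route to the ``$\le$'' direction. Both proofs handle ``$\ge$'' identically and begin ``$\le$'' the same way, building the adaptive reconstruction maps $\hat{T}_k$ (your $\Psi$) from \Lemmaref{thm:rr_equivalence}. The divergence is in how the RR iterative mechanism and the post-processing are packaged. The paper simply takes the stopping criteria of the RR mechanism to be $h^{k_i}\circ(\hat{T}_1,\dots,\hat{T}_{k_i})$ --- the original stopping rules composed with the reconstruction --- and then invokes \Lemmaref{thm:post-processing}, which is tailored to exactly this situation: it states that if
\[
M''=\im\big[\{k_i\},\{\tilde{M}_{(\varepsilon_k,\delta_k)}\},\{h^{k_i}\circ(\hat{T}_1,\dots,\hat{T}_{k_i})\}\big]
\]
is $(\mathcal{P}^{M''},\mathcal{E},\delta)$-ODP, then so is $M'=\im\big[\{k_i\},\{\hat{T}_k\circ\tilde{M}_{(\varepsilon_k,\delta_k)}\},\{h^{k_i}\}\big]$, and $M'$ has the same distribution as $M$ on the fixed pair $(\x^0,\x^1)$. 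The obstacle you work hard to circumvent --- that the stopping decision and the subsequent post-processing would need to share the randomness of $\hat{T}$ --- is precisely what \Lemmaref{thm:post-processing} is designed to absorb: its $f$ may be randomized, and the \emph{same} realization of $f$ appears both in the stopping criterion $h^{k_i}\circ f$ of the RR mechanism and in the per-step post-processing $f_k\circ M_k$. Your decoupled construction (calibrating $\tilde{h}^{k_i}$ to match the marginal stopping-length law, then defining $\Phi$ as a conditional reconstruction) reaches the same conclusion via a more general partition-preserving post-processing principle, at the cost of the telescoping computation and the conditional-law argument. What you gain is a self-contained proof that does not rely on the bespoke \Lemmaref{thm:post-processing}; what the paper gains is brevity.
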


For the proof we need the following post\hyp processing lemma, whose proof follows that of the post\hyp processing lemma of DP \cite[Prop.~2.1]{dwork2014differential}:
\begin{lemma}
\label{thm:post-processing}
Let \(f=(f_1,\dots,f_{k_n})\) be a randomized function, where \(f_k\) may depend on the output of \(f_1,\dots,f_{k-1}\), and let
\begin{equation*}
    M=\im\big[\{k_i\}_{i=1}^n, \{M_k\}_{k=1}^{k_n}, \{h^{k_i}\circ (f_1,\dots,f_{k_i})\}_{i=1}^{n-1}\big]
\end{equation*}
be an iterative mechanism that fulfills \((\mathcal{P}^M=\{P_i^M\}_{i=1}^n,\mathcal{E},\delta)\)\hyp ODP. Assume that the output of each mechanism \(M_k\), \(k=2,\dots,k_n\), only depends on the database but not on the outputs of \(M_1,\dots,M_{k-1}\). Then the iterative mechanism
\begin{equation*}
    M'=\im\big[\{k_i\}_{i=1}^n, \{f_k\circ M_k\}_{k=1}^{k_n}, \{h^{k_i}\}_{i=1}^{n-1}\big]
\end{equation*}
fulfills \((\mathcal{P}^{M'}=\{P_i^{M'}\}_{i=1}^n,\mathcal{E}',\delta)\)\hyp ODP, where
\begin{equation*}
    \mathcal{E}'(P_i^{M'})=\mathcal{E}(P_i^M),
\end{equation*}
for \(i=1,\dots,n\).
\end{lemma}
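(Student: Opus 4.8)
The plan is to exhibit $M'$ as a post-processing of $M$ and then run the (ODP analogue of the) post-processing argument of \cite[Prop.~2.1]{dwork2014differential}. The first move is a reduction to the case in which $f=(f_1,\dots,f_{k_n})$ is deterministic, carried out exactly like the determinization of stopping criteria in the proof of \Lemmaref{thm:odp_partition_iterative}: let each mechanism $M_k$ additionally output the coins used to evaluate $f_k$. Since those coins are independent of the database, this leaves the privacy of the individual $M_k$ unchanged, leaves $M$'s ODP guarantee intact with the same $\mathcal{E}$ and $\delta$, and still has $M_k$ ($k\ge 2$) ignore the previous outputs; it makes both the stopping criteria $h^{k_i}\circ(f_1,\dots,f_{k_i})$ of $M$ and the maps $f_k$ appearing in $M'$ deterministic. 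From here on $f$ is deterministic.

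Next I would couple a single run of $M$ with a single run of $M'$ on the same coin tosses of $M_1,\dots,M_{k_n}$. Here the hypothesis that $M_k$ ($k\ge 2$) ignores the outputs of $M_1,\dots,M_{k-1}$ is essential: it lets $M$ and $M'$ reuse the same raw outputs $t_k=M_k(\x)$. An induction on $k$ shows that the $k$-th output $s'_k$ of $M'$ equals $v_k:=f_k(v_1,\dots,v_{k-1},t_k)$, and that $(v_1,\dots,v_{k_i})$ is exactly the value $(f_1,\dots,f_{k_i})(t_1,\dots,t_{k_i})$ that $M$ feeds into $h^{k_i}$. Hence $M$ and $M'$ evaluate every stopping criterion on the same argument, stop after the same iteration $k_J$, and $M$ returns $(t_1,\dots,t_{k_J})$ exactly when $M'$ returns $\Psi(t_1,\dots,t_{k_J})$, where $\Psi(t_1,\dots,t_\ell):=(f_1(t_1),f_2(f_1(t_1),t_2),\dots,f_\ell(\dots,t_\ell))$. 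Thus $M'(\x)$ equals $\Psi(M(\x))$ in distribution, $\Psi$ is deterministic, and — crucially — $\Psi$ preserves lengths, so it maps $P_i^M$ into $P_i^{M'}$ and $\Psi^{-1}(P_i^{M'})=P_i^M$.

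The ODP guarantee then transfers routinely: for neighboring $\x,\xp$ and any $S'\subseteq\mathcal{O}'$, $\Pr(M'(\x)\in S')=\Pr(M(\x)\in\Psi^{-1}(S'))\le\delta+\sum_{i}e^{\mathcal{E}(P_i^M)}\Pr(M(\xp)\in\Psi^{-1}(S')\cap P_i^M)$, and length-preservation gives $\Psi^{-1}(S')\cap P_i^M=\Psi^{-1}(S'\cap P_i^{M'})$, so the right-hand side equals $\delta+\sum_i e^{\mathcal{E}(P_i^M)}\Pr(M'(\xp)\in S'\cap P_i^{M'})$, which is the $(\mathcal{P}^{M'},\mathcal{E}',\delta)$-ODP inequality since $\mathcal{E}'(P_i^{M'})=\mathcal{E}(P_i^M)$. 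I expect the main obstacle to be the first step rather than the last: because $M$'s ODP guarantee is assumed only for the randomized mechanism $M$ as a whole, one must be careful that folding the randomness of $f$ into the mechanisms' outputs genuinely preserves that guarantee (with the same $\mathcal{E}$ and $\delta$, not merely on average over that randomness) — the justification being that the added output is database-independent and hence costless, precisely the device already used in \Lemmaref{thm:odp_partition_iterative}. Everything after that is the standard deterministic post-processing computation.
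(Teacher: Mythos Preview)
Your proposal is correct and tracks the paper's proof closely. The deterministic-$f$ argument is identical in substance: define the preimage $S=\Psi^{-1}(S')$, apply the ODP hypothesis on $M$, and use that $\Psi$ preserves output length to match up $P_i^M$ with $P_i^{M'}$. For randomized $f$ the two writeups diverge cosmetically: the paper handles it \emph{after} the deterministic case by writing $f$ as a convex combination of deterministic functions and invoking that a mixture of ODP mechanisms is ODP, whereas you handle it \emph{before} by folding the coins of $f$ into each $M_k$'s output. Your caution about that determinization step is warranted, and the citation of \Lemmaref{thm:odp_partition_iterative} does not quite settle it: there the ODP bound is \emph{derived} from the DP of the individual $M_k$'s, so enlarging their outputs by database-independent coins is harmless by construction; here the ODP of $M$ is \emph{assumed} as a black box, and you must check separately that the augmented mechanism $\tilde M$ inherits the same $(\mathcal{P},\mathcal{E},\delta)$. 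It does---for each fixed coin value the pointwise contribution to the signed measure $\Pr(M(\x)\in\cdot)-\sum_i e^{\mathcal{E}(P_i^M)}\Pr(M(\xp)\in\cdot\cap P_i^M)$ factors as a database-dependent term times a database-independent coin weight, so its total positive part (hence the optimal $\delta$) is unchanged---but that is a short extra computation rather than a citation. The paper's one-line convex-combination step glosses over the mirror-image subtlety (the individual deterministic slices $M^c$ need not each be $(\mathcal{P},\mathcal{E},\delta)$-ODP, only on average over $c$), so neither treatment is fully spelled out; your version, with the concern explicitly flagged, is arguably the more honest of the two.
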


\Lemmaref{thm:post-processing} allows us to prove \Lemmaref{thm:rr_comp_equivalence}, and with \Lemmaref{thm:rr_comp_equivalence} we can prove the main result of this subsection.
In the theorem and its proof we write \(q_{1,\dots,k}\) for the first \(k\) elements of a vector \(q\).

\begin{theorem}
\label{thm:iterative_opt_delta}
Let \(\{k_i\}_{i=1}^n\) be numbers of iterations, let \(\{(\varepsilon_k,\delta_k)\}_{k=1}^{k_n}\) be DP parameters and let \(\mathcal{E}\) be a function that assigns \(\varepsilon\)\hyp values to outputs of different lengths. For \(i=1,\dots,n\), define, for every \(Q\in\{0,1,2,3\}^{k_i}\) and for \(b=0,1\),
\begin{equation*}
    \Pr_b^{k_i}(Q) = \Pr\big((\tilde{M}_{(\varepsilon_1,\delta_1)}(b),\dots, \tilde{M}_{(\varepsilon_{k_i},\delta_{k_i})}(b)) \in Q\big).
\end{equation*}
For sets \(Q_{k_i}\in\{0,1,2,3\}^{k_i}\) and \(Q_{k_j}\in\{0,1,2,3\}^{k_j}\) with \(i<j\), write, with a slight abuse of notation, \(Q_{k_i}\cap Q_{k_j}=\emptyset\) if \(q_{1,\dots,k_i}\neq q'_{1,\dots,k_i}\) for all \(q\in Q_{k_i},\ q'\in Q_{k_j}\).
Then
\begin{align}
    &\optdelta(\{k_i\}_{i=1}^n, \{(\varepsilon_k,\delta_k)\}_{k=1}^{k_n}, \mathcal{E})\nonumber\\
    &= \max_{\substack{Q_{k_i}\in\{0,1,2,3\}^{k_i}\\
    i=1,\dots,n\\
    Q_{k_j}\cap Q_{k_l}=\emptyset\text{ for all } j<l}}
    \sum_{i=1}^n \big[\Pr_0^{k_i}(Q_{k_i}) - e^{\mathcal{E}(P_i)} \Pr_1^{k_i}(Q_{k_i})\big].\label{eq:iterative_opt_delta}
\end{align}
\end{theorem}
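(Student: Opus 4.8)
\emph{Proof plan.} The strategy is to lean on \Lemmaref{thm:rr_comp_equivalence}, which already reduces the computation of $\optdelta(\{k_i\}_{i=1}^n,\{(\varepsilon_k,\delta_k)\}_{k=1}^{k_n},\mathcal{E})$ to iterative mechanisms whose per-iteration mechanisms are the randomized response mechanisms $\tilde{M}_{(\varepsilon_k,\delta_k)}$, maximized over the stopping criteria. It then remains to compute, for a fixed tuple of deterministic stopping criteria $\{h^{k_i}\}_{i=1}^{n-1}$, the smallest $\delta$ for which the resulting mechanism $M$ is $(\mathcal{P},\mathcal{E},\delta)$-ODP, and to see what taking the supremum over $\{h^{k_i}\}$ does to it. As in the proof of \Lemmaref{thm:odp_partition_iterative} I would assume the $h^{k_i}$ deterministic (absorbing any internal randomness into the output of $M_{k_i}$), and by the symmetry $0\leftrightarrow 3$, $1\leftrightarrow 2$ of $\tilde{M}_{(\varepsilon,\delta)}$ I would treat only the direction $M(0)$ versus $M(1)$.

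\emph{Decoupling the test set along the partition.} Unfolding \Defref{def:odp}, the smallest admissible $\delta$ for a fixed such $M$ is
\[
    \sup_S\Big[\Pr(M(0)\in S)-\sum_{i=1}^n e^{\mathcal{E}(P_i)}\Pr(M(1)\in S\cap P_i)\Big].
\]
Since $\{P_i\}_{i=1}^n$ partitions $\Range(M)$ by output length, I would set $S_i=S\cap P_i$; the bracketed quantity then depends on the $S_i$ separately, so the supremum splits as
\[
    \sum_{i=1}^n\sup_{S_i\subseteq P_i}\big[\Pr(M(0)\in S_i)-e^{\mathcal{E}(P_i)}\Pr(M(1)\in S_i)\big].
\]

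\emph{From outputs to runs.} Next I would identify an output of length $k_i$ with the prefix $q\in\{0,1,2,3\}^{k_i}$ of a run of the randomized responses at which $M$ halts, i.e.\ $h^{k_j}(q_{1,\dots,k_j})=0$ for $j<i$ and, for $i<n$, $h^{k_i}(q)=1$; write $A_i$ for the set of such prefixes and $A_n$ for the complementary ``never stopped early'' region. The $A_i$ are \emph{prefix-disjoint} --- no element of one is a prefix of an element of another --- because a run halts at the first $k_j$ with $h^{k_j}=1$. Since each $\tilde{M}_{(\varepsilon_k,\delta_k)}(b)$ depends only on $b$, for $S_i\subseteq A_i$ we have $\Pr(M(b)\in S_i)=\Pr_b^{k_i}(S_i)$ (once the length-$k_i$ prefix lands in $S_i\subseteq A_i$, halting there is forced), and the inner supremum in fact ranges over $S_i\subseteq A_i$. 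Now I would take the supremum over $\{h^{k_i}\}$: any tuple $(S_1,\dots,S_n)$ obtained this way is prefix-disjoint, giving ``$\le$''; conversely, given a prefix-disjoint family $(Q_{k_1},\dots,Q_{k_n})$ with $Q_{k_i}\subseteq\{0,1,2,3\}^{k_i}$, setting $h^{k_i}(q):=\mathbbm{1}[q\in Q_{k_i}]$ forces $Q_{k_i}\subseteq A_i$ for every $i$ (including $i=n$, where $A_n$ is merely the complement of the earlier regions), so $S_i=Q_{k_i}$ is admissible and gives ``$\ge$''. This yields \eqnref{eq:iterative_opt_delta}, and since each $\{0,1,2,3\}^{k_i}$ is finite there are only finitely many admissible families, so the supremum is attained and can be written as the stated maximum.

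\emph{Main obstacle.} The delicate part is the bookkeeping in that last step: verifying that the stop-regions $(A_1,\dots,A_n)$ realizable by some deterministic stopping criteria are exactly the prefix-disjoint families, and in particular that $A_n$ --- which is forced rather than freely chosen --- still automatically contains any prefix-disjoint $Q_{k_n}$. One must also check that identifying length-$k_i$ outputs with run prefixes correctly converts $\Pr(M(b)\in\cdot)$ into $\Pr_b^{k_i}(\cdot)$, which uses the non-adaptivity of the $\tilde{M}_{(\varepsilon_k,\delta_k)}$; the reductions to deterministic $h^{k_i}$ and to a single direction between neighboring databases are routine but worth recording.
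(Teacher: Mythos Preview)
Your proposal is correct and follows essentially the same route as the paper: reduce to randomized response via \Lemmaref{thm:rr_comp_equivalence}, unfold the ODP definition so the supremum over $S$ splits along the partition $\{P_i\}$, identify length-$k_i$ outputs with run prefixes in the halt regions $A_i$, and finally match up the achievable $(S_1,\dots,S_n)$ with the prefix-disjoint families $(Q_{k_1},\dots,Q_{k_n})$ (your construction $h^{k_i}(q)=\mathbbm{1}[q\in Q_{k_i}]$ is exactly what the paper uses for the ``$\ge$'' direction).

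One small wrinkle: your suggested reduction to deterministic $h^{k_i}$ by ``absorbing any internal randomness into the output of $M_{k_i}$'' (as in the proof of \Lemmaref{thm:odp_partition_iterative}) does not quite fit here, because after \Lemmaref{thm:rr_comp_equivalence} the per-iteration mechanisms are pinned to the specific $\tilde{M}_{(\varepsilon_k,\delta_k)}$ with output space $\{0,1,2,3\}$; enlarging their outputs would take you outside that setup. The paper instead observes that any randomized stopping rule is a convex combination of deterministic ones, and since the quantity in \eqref{eq:iterative_mechanism_delta_stopping} is linear in that mixture, the supremum is attained at an extreme (deterministic) point. This is the ``routine'' step you flagged, and the convex-combination argument is the clean way to discharge it.
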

This quantity can be computed by iterating over the exponentially (in \(k_n\)) many possible sets \(Q_{k_i}\). Murtagh and Vadhan \cite{murtagh2016complexity} analyze the special case \(n=1\) and show that already that case is \(\#P\)\hyp complete. Thus, there is no hope for an efficient exact algorithm, but there might exist efficient approximation algorithms.

The main goal of this subsection is to show that one can get better ODP guarantees for iterative mechanisms than the ones from applying \Lemmaref{thm:odp_partition_iterative}. With \Lemmaref{thm:odp_partition_iterative}, we would get the following ODP\hyp\(\delta\):
\begin{align}
    \delta &= \sum_{i=1}^n \max_{Q_{k_i}\in\{0,1,2,3\}^{k_i}} \big[\Pr_0^{k_i}(Q_{k_i}) - e^{\mathcal{E}(P_i)} \Pr_1^{k_i}(Q_{k_i})\big]\nonumber\\
    &= \max_{\substack{Q_{k_i}\in\{0,1,2,3\}^{k_i}\\
    i=1,\dots,n}}
    \sum_{i=1}^n \big[\Pr_0^{k_i}(Q_{k_i}) - e^{\mathcal{E}(P_i)} \Pr_1^{k_i}(Q_{k_i})\big].\label{eq:iterative_nonopt_delta}
\end{align}
This is a direct application of \Lemmaref{thm:odp_partition_iterative} and the optimal composition theorem for adapative, heterogeneous mechanisms by Murtagh and Vadhan \cite{murtagh2016complexity} (without the simplifications of the expression performed by the authors).
Intuitively, \Thmref{thm:iterative_opt_delta} gives us an exact characterization of the sets of outputs that are possible, whereas in \eqnref{eq:iterative_nonopt_delta} we also need to take the maximum over impossible sets of outputs: let \(q_1,\dots,q_{k_i}\) be an output sequence such that the iterative mechanism consisting of a sequence of randomized response mechanisms terminates in iteration \(k_i\). Then it is impossible for the mechanism to output a sequence of length \(>k_i\) with the prefix \(q_1,\dots,q_{k_i}\).

This can lead to larger values for \(\delta\) in \eqnref{eq:iterative_nonopt_delta}.
For example, assume that \(\delta_k>0\) for all \(k=1,\dots,k_n\). We then have \(\Pr(\tilde{M}_{(\varepsilon_k,\delta_k)}(0)=0)=\delta_k > 0\) and \(\Pr(\tilde{M}_{(\varepsilon_k,\delta_k)}(1)=0)=0\) for all \(k=1,\dots,k_n\). Thus, each \(Q_{k_i}\) of the maximizer in \eqnref{eq:iterative_nonopt_delta} contains an output vector consisting of \(k_i\) \(\delta\)'s, whereas for the sets \(Q_{k_i}\) of the maximizer in \eqnref{eq:iterative_opt_delta} it must hold that at most one of them contains an output vector that only consists of \(\delta\)'s, leading to a strictly smaller maximum.

We leave a quantification of the size of the gap between \Lemmaref{thm:odp_partition_iterative} and \Thmref{thm:iterative_opt_delta} for future work, since this requires an efficient approximation algorithm for the maximization problem in \Thmref{thm:iterative_opt_delta}.

\subsection{Comparison with privacy filters}
What we propose --- stopping iterative mechanisms if they do not need more iterations, and thereby saving privacy budget that can be used on other queries --- can also be done with the privacy filters introduced by Rogers et al.\ \cite{rogers2016privacy}. In fact, a similar method for saving privacy budget when stopping early has recently been proposed in this context, though in combination with the related privacy odometers, which track privacy spending \cite{lecuyer2021practical}. The disadvantage of privacy filters is that they cannot simply use any existing DP composition result, but require their own composition theorems. It has been shown that Rényi DP (RDP) composition can be used for privacy filters \cite{feldman2021individual,lecuyer2021practical}, which is the currently best composition result for privacy filters. While RDP composition is a powerful tool, it does not always yield the best composition bounds \cite{bu2020deep}. With ODP, on the other hand, one can make use of any DP composition result via \Lemmaref{thm:odp_partition_iterative}, though at an additional cost in \(\delta\) that depends on the number \(n\) of potential stopping points and their positions.

\section{Recovering already spent privacy budget}
\label{sec:recovering_budget}
Assume that a data analyst invokes a DP mechanism \(M\) on a database \(\x\), which produces an output \(A\sim M(\x)\). For some reason the data analyst is not satisfied with the output. E.g., \(A\) could be a neural network that does not perform much better than random guessing, it could be a logistic regression model whose coefficients are not statistically significant, it could be some statistic with too large a confidence interval, etc. In all of these cases, \(A\) would be essentially useless and the data analyst would not publish it. Thus, it would be desirable if the analyst could get back the privacy budget that they spent on computing \(A\). For this, we could define a new mechanism \(\tilde{M}\) that first computes \(A\sim M(\x)\) and then invokes a test \(M_{\text{test}}\) on \(A\) that checks whether \(A\) should be released or not. If \(A\) should be released, \(\tilde{M}\) returns \(A\), otherwise \(\tilde{M}\) returns the symbol \(\bot\). In the first case, the data analyst would have to pay for the privacy cost of \(A\) and (potentially) the privacy cost of the test \(M_{\text{test}}\), in the second case only for the privacy cost of \(M_{\text{test}}\). \(M_{\text{test}}\) should thus be designed such that it uses up much less privacy budget than \(M\).

Formally, we want \(\tilde{M}\) to be \((\{\Range(M),\{\bot\}\},\mathcal{E},\delta)\)\hyp ODP with \(\mathcal{E}(\{\bot\})<\mathcal{E}(\Range(M))\). The test \(M_{\text{test}}\) is a function of the differentially private output \(A\) of \(M\), and in some cases also of a private database on which to evaluate this output. This second input is not necessary if the test can be performed on \(A\) directly. In the ML setting, there might be a private database \(\x\) that is split up into a training set \(\x^{\text{train}}\) that is used to train a differentially private model, and a test set \(\x^{\text{test}}\) on which \(M_{\text{test}}\) evaluates the model. In the case of a non\hyp\(\bot\)\hyp output, we only have to account for the privacy of \(M_{\text{test}}\) w.r.t.\ its database input \(\x^{\text{test}}\) but not w.r.t.\ \(\x^{\text{train}}\), since its first input, which is based on \(\x^{\text{train}}\), is the already differentially private output of \(M\). Since \(\x^{\text{train}}\) and \(\x^{\text{test}}\) are disjoint, we can compose the guarantees of \(M\) (w.r.t.\ \(\x^{\text{train}}\)) and of \(M_{\text{test}}\) (w.r.t.\ \(\x^{\text{test}}\)) via parallel composition. For the case of a \(\bot\)\hyp output, however, we want to exploit the fact that the output of \(M\) does not get revealed to obtain a better privacy guarantee. We thus need to compute a privacy guarantee of \(M_{\text{test}}(M(\x^{\text{train}}),\x^{\text{test}})\) w.r.t.\ both \(\x^{\text{train}}\) and \(\x^{\text{test}}\).

\subsection{Example: ERM}

\begin{algorithm}[tb]
\caption{\(\ERMOutput(\x^{\text{train}},\varepsilon,l,N,\Lambda)\)}
\label{alg:erm_output_perturbation}
\begin{algorithmic}[1]
\State{Sample random vector \(q\) with element\hyp wise density proportional to \(\exp\big(-\frac{n^{\text{train}}\Lambda\varepsilon}{2}\norm{q}_2\big)\)}
\State{Compute \(p_{\text{min}} = p_{\text{min}}(\x^{\text{train}})\) according to \eqnref{eq:erm_minimization}}
\State{\Return{\(\tilde{p}_{\text{min}} = p_{\text{min}} + q\)}}
\end{algorithmic}
\end{algorithm}

\begin{algorithm}[tb]
\caption{\(\LogRegTest(\x^{\text{train}},\x^{\text{test}},\varepsilon_1,\varepsilon_2,\Lambda,t)\)}
\label{alg:log_reg_with_test}
\begin{algorithmic}[1]
\State{\(\tilde{p}_{\text{min}} = \ERMOutput(\x^{\text{train}},\varepsilon_1,l_{\text{LogReg}},N_{L_2},\Lambda)\)}
\State{\(a = \max\left(\frac{2}{n^{\text{test}}},2\left(\exp\left(\frac{2}{n^{\text{train}}\Lambda}\right) - 1\right)\right)\)}
\State{Sample \(r\sim \Lap(a/\varepsilon_2)\)}
\If{\(s(\tilde{p}_{\text{min}},\x^{\text{test}}) + r \leq t\)}\label{alg_line:comparison_error}
    \State{\Return{\(\tilde{p}_{\text{min}}\)}}
\EndIf
\State{\Return{\(\bot\)}}
\end{algorithmic}
\end{algorithm}

As an example we consider the output perturbation mechanism by Chaudhuri et al.\ \cite{chaudhuri2011differentially} for releasing linear ML models with differential privacy. The class of models that their mechanism applies to includes, among others, logistic regression and (an approximation to) support vector machines (SVMs). Chaudhuri et al.\ assume covariates \(x\in\R^d\), \(\norm{x}_2\leq 1\), and labels \(y\in[-1,1]\). For a set \(\x^{\text{train}}\) of \(n^{\text{train}}\) training records they consider the empirical risk minimization problem
\begin{equation}
\label{eq:erm_minimization}
    p_{\text{min}}(\x^{\text{train}}) = \argmin_p \frac{1}{n^{\text{train}}}\sum_{(x,y)\in \x^{\text{train}}} l(y p^{\text{T}} x) + \Lambda N(p),
\end{equation}
where the minimization is over all parameter vectors \(p\); \(l\) is a loss function; and \(N\) a regularizer with regularization strength \(\Lambda\).

Chaudhuri et al.\ design \Algref{alg:erm_output_perturbation} for privately releasing the optimal parameter vector \(p\) and show that the algorithm fulfills DP:
\begin{theorem}
\label{thm:erm_output_perturbation}
If \(N\) is differentiable and \(1\)\hyp strongly convex, and \(l\) is convex and differentiable with \(\abs{l'(z)}\leq 1\) for all \(z\), then the \(L_2\)\hyp sensitivity of \(p_{\text{min}}\) is at most \(\frac{2}{n^{\text{train}}\Lambda}\), and \Algref{alg:erm_output_perturbation} is \(\varepsilon\)\hyp DP.
\end{theorem}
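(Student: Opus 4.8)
The plan is to follow the original argument of Chaudhuri et al.\ \cite{chaudhuri2011differentially}, splitting the claim into (i) a deterministic bound on the \(L_2\)-sensitivity of \(p_{\text{min}}\) and (ii) a privacy argument for the noise added in \Algref{alg:erm_output_perturbation}. Write \(J(p) = \frac{1}{n^{\text{train}}}\sum_{(x,y)\in\x^{\text{train}}} l(y p^{\text{T}} x) + \Lambda N(p)\) for the objective in \eqnref{eq:erm_minimization}. Since \(N\) is \(1\)-strongly convex and the data term \(\frac{1}{n^{\text{train}}}\sum l(y p^{\text{T}} x)\) is convex and differentiable, \(J\) is \(\Lambda\)-strongly convex and differentiable, hence has a unique minimizer \(p_{\text{min}}\) characterized by \(\nabla J(p_{\text{min}}) = 0\).

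For the sensitivity, let \(\x^{\text{train}}\) and \(\xp^{\text{train}}\) be neighboring training sets that differ in a single record (so \(n^{\text{train}}\) is unchanged), let \(J, J'\) be the corresponding objectives, and set \(p = p_{\text{min}}(\x^{\text{train}})\), \(p' = p_{\text{min}}(\xp^{\text{train}})\) and \(G = J - J'\). First I would show \(\norm{\nabla G(z)}_2 \le 2/n^{\text{train}}\) for all \(z\): \(G\) depends only on the one record in which the two sets differ, so \(\nabla G(z)\) is a difference of two terms of the form \(\tfrac{1}{n^{\text{train}}} y\, l'(y z^{\text{T}} x)\, x\), each of norm at most \(1/n^{\text{train}}\) because \(\norm{x}_2 \le 1\), \(\abs{y} \le 1\) and \(\abs{l'} \le 1\). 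Then, using \(\Lambda\)-strong convexity of \(J\) together with \(\nabla J(p) = 0\) and \(\nabla J(p') = \nabla G(p')\) (the latter because \(\nabla J'(p') = 0\)), I obtain \(\Lambda \norm{p - p'}_2^2 \le \big(\nabla J(p') - \nabla J(p)\big)^{\text{T}}(p' - p) = \nabla G(p')^{\text{T}}(p' - p)\), and Cauchy--Schwarz gives \(\norm{p - p'}_2 \le \norm{\nabla G(p')}_2 / \Lambda \le 2/(n^{\text{train}}\Lambda)\), which is the claimed sensitivity.

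For privacy, observe that the noise vector \(q\) in \Algref{alg:erm_output_perturbation} has density proportional to \(\exp\!\big(-\tfrac{n^{\text{train}}\Lambda\varepsilon}{2}\norm{q}_2\big)\), a multivariate Laplace-type density whose scale \(\tfrac{2}{n^{\text{train}}\Lambda\varepsilon}\) equals \(\Delta_2/\varepsilon\) for the sensitivity \(\Delta_2 = \tfrac{2}{n^{\text{train}}\Lambda}\). Its normalizing constant depends only on \(n^{\text{train}}, \Lambda, \varepsilon\) and not on where the density is centered, so it cancels in the likelihood ratio of the output \(\tilde p_{\text{min}} = p_{\text{min}} + q\) at any point under \(\x^{\text{train}}\) versus \(\xp^{\text{train}}\), leaving \(\exp\!\big(\tfrac{n^{\text{train}}\Lambda\varepsilon}{2}\big(\norm{\tilde p_{\text{min}} - p_{\text{min}}(\xp^{\text{train}})}_2 - \norm{\tilde p_{\text{min}} - p_{\text{min}}(\x^{\text{train}})}_2\big)\big)\). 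The reverse triangle inequality bounds this by \(\exp\!\big(\tfrac{n^{\text{train}}\Lambda\varepsilon}{2}\norm{p_{\text{min}}(\x^{\text{train}}) - p_{\text{min}}(\xp^{\text{train}})}_2\big) \le e^{\varepsilon}\) using the sensitivity bound; integrating this pointwise density ratio over any measurable \(S\) yields \(\Pr(\tilde p_{\text{min}} \in S \mid \x^{\text{train}}) \le e^{\varepsilon} \Pr(\tilde p_{\text{min}} \in S \mid \xp^{\text{train}})\), i.e.\ \(\varepsilon\)-DP.

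The main obstacle is the sensitivity step, and in particular the convexity bookkeeping: one must verify that \(\Lambda\)-strong convexity is inherited from \(\Lambda N\) alone (so that mere convexity of \(l\) suffices), that the minimizer is unique so the first-order condition \(\nabla J(p_{\text{min}}) = 0\) holds, and that ``neighboring'' is read in the record-replacement sense so that \(n^{\text{train}}\) is fixed and \(G\) is supported on a single record (otherwise the constant \(2\) and the scale of the noise would need revisiting). The privacy step is then a routine adaptation of the Laplace-mechanism calculation to the \(\ell_2\) norm, the only subtlety being the translation invariance of the normalizing constant of the noise density.
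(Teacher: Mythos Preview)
Your proposal is correct and essentially reproduces the original Chaudhuri et al.\ argument, which is exactly what the paper defers to: the paper does not give its own proof of this theorem but simply attributes it to \cite{chaudhuri2011differentially}. There is nothing to compare beyond noting that your write-up matches the cited source's strong-convexity sensitivity bound plus the \(\ell_2\)-Laplace density-ratio calculation.
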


We extend \Algref{alg:erm_output_perturbation} with a differentially private test that checks whether the differentially private model returned by \Algref{alg:erm_output_perturbation} performs well enough for the respective application; only in that case will we release the model. For this we examplarily look at the case of logistic regression. The loss function of logistic regression is defined as
\begin{equation*}
    l_{\text{LogReg}}(z) = \ln(1 + e^{-z}),
\end{equation*}
often regularized with \(L_2\)\hyp regularization \(N_{L_2}(p) = \frac{1}{2} \norm{p}_2^2\).
A logistic regression model \(h_p\) with parameter vector \(p\) predicts the probability that the label of a record \(x\) is \(1\) as
\begin{equation*}
    h'_p(x) = \frac{1}{1+\exp(-p^{\text{T}} x)},
\end{equation*}
and the probability that the label is \(-1\) as \(1-h'_p(x)\). To map this output to the interval \([-1,1]\) we define
\begin{equation*}
    h_p(x) = 2 h'_p(x) - 1.
\end{equation*}
The prediction for the label of a record \(x\) is then typically \(\sign(h_p(x))\). Based on these label predictions it would be natural to use the accuracy of the model, i.e., the fraction of correct predictions, as the measure of model performance. However, if \(h_p(x)\) is close to \(0\) on all test records, then a small change in \(p\) might flip all label predictions, which implies that the sensitivity of the accuracy function is large. Since we want to perform a differentially private test on the model performance, we need a measure with smaller sensitivity. The error function \(s\) as defined below has this property.

Let \(\x^{\text{test}}\) be a private test set of size \(n^{\text{test}}\) that is disjoint from the training set. For our test we use the error function
\begin{equation*}
    s(p,\x^{\text{test}}) = \frac{1}{n^{\text{test}}} \sum_{(x,y)\in \x^{\text{test}}} \abs{h_p(x) - y}
\end{equation*}
--- i.e., the mean absolute error made by the logistic regression model ---, which can take values in \([0,2]\).
With this error function we define \Algref{alg:log_reg_with_test}, which computes a differentially private logistic regression parameter vector, but only releases this vector if the model error on the test set is not larger than a threshold \(t\).
In \Appref{sec:proof_log_reg_with_test} we prove the following theorem about the privacy of \Algref{alg:log_reg_with_test}:

\begin{theorem}
\label{thm:log_reg_with_test}
Let \(\varepsilon_1,\varepsilon_2>0\) and let
\begin{equation*}
    a = \max\left(\frac{2}{n^{\text{test}}},2\left(\exp\left(\frac{2}{n^{\text{train}}\Lambda}\right) - 1\right)\right).
\end{equation*}
Then \Algref{alg:log_reg_with_test} is \((\{\R^d,\{\bot\}\},\mathcal{E},0)\)\hyp ODP with
\begin{equation*}
    \mathcal{E}(\R^d) = \max\left(\varepsilon_1, \frac{2/n^{\text{test}}}{a}\varepsilon_2\right)
\end{equation*}
and
\begin{equation*}
    \mathcal{E}(\{\bot\}) = \min(\mathcal{E}(\R^d), \varepsilon_2)
\end{equation*}
w.r.t.\ a change in either \(\x^{\text{train}}\) or \(\x^{\text{test}}\).
\end{theorem}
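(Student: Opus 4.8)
The plan is to reduce to subset differential privacy and then handle the two possible neighboring relations separately. By \Lemmaref{thm:sum_deltas}, to show that $\LogRegTest$ (abbreviated $M$) is $(\{\R^d,\{\bot\}\},\mathcal{E},0)$-ODP it suffices to show that $M$ is $(\R^d,\mathcal{E}(\R^d),0)$-subset differentially private and $(\{\bot\},\mathcal{E}(\{\bot\}),0)$-subset differentially private. Since $\x^{\text{train}}$ and $\x^{\text{test}}$ are disjoint, a single-record change between neighboring databases affects exactly one of them, so I would prove each of the two subset-DP bounds once for a change in $\x^{\text{train}}$ (with $\x^{\text{test}}$ held fixed) and once for a change in $\x^{\text{test}}$ (with $\x^{\text{train}}$ held fixed). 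Two ingredients drive everything. (a) A Laplace threshold bound: for $r\sim\Lap(b)$ and $c,c'\in\R$ with $\abs{c-c'}\leq\kappa$, both $\Pr(r>c)\leq e^{\kappa/b}\Pr(r>c')$ and $\Pr(r\leq c)\leq e^{\kappa/b}\Pr(r\leq c')$, which follows from the pointwise density ratio $e^{\abs{z-z'}/b}$ of $\Lap(b)$. (b) Two sensitivity estimates for the error function $s$: for fixed $p$, a one-record change in $\x^{\text{test}}$ alters a single summand $\abs{h_p(x)-y}\in[0,2]$, so $\abs{s(p,\x^{\text{test}})-s(p,\xp^{\text{test}})}\leq 2/n^{\text{test}}\leq a$; and writing $v=p_{\text{min}}(\xp^{\text{train}})-p_{\text{min}}(\x^{\text{train}})$, \Thmref{thm:erm_output_perturbation} gives $\norm{v}_2\leq 2/(n^{\text{train}}\Lambda)$, whence, using $h_p(x)=2\sigma(p^{\text{T}}x)-1$ for the logistic function $\sigma$, $\norm{x}_2\leq 1$, and the elementary bound $\abs{\sigma(z+\mu)-\sigma(z)}\leq e^{\abs{\mu}}-1$, one gets $\abs{s(p+v,\x^{\text{test}})-s(p,\x^{\text{test}})}\leq 2(e^{2/(n^{\text{train}}\Lambda)}-1)\leq a$ for every $p$.

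For a change in $\x^{\text{train}}$: by \Thmref{thm:erm_output_perturbation} the vector $\tilde p_{\text{min}}$ produced by \Algref{alg:erm_output_perturbation} is $\varepsilon_1$-DP w.r.t.\ $\x^{\text{train}}$, and the remaining steps of \Algref{alg:log_reg_with_test} (drawing $r$, comparing $s(\tilde p_{\text{min}},\x^{\text{test}})+r$ against $t$, and returning $\tilde p_{\text{min}}$ or $\bot$) access $\x^{\text{train}}$ only through $\tilde p_{\text{min}}$, hence are post-processing; so $M$ is $\varepsilon_1$-DP w.r.t.\ $\x^{\text{train}}$ on every output event, which already yields the $\R^d$-bound because $\varepsilon_1\leq\mathcal{E}(\R^d)$. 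For the $\{\bot\}$-bound I need in addition leakage at most $\varepsilon_2$, so that it is at most $\min(\mathcal{E}(\R^d),\varepsilon_2)=\mathcal{E}(\{\bot\})$. I would couple the two executions by sharing all internal randomness (the noise $q$ of \Algref{alg:erm_output_perturbation} and the test noise $r$), which is legitimate since these noise distributions do not depend on the database. Conditioned on $q$, the event $\{M=\bot\}$ is $\{r>t-s(\tilde p_{\text{min}},\x^{\text{test}})\}$, where in the two executions $\tilde p_{\text{min}}$ equals $p_{\text{min}}(\x^{\text{train}})+q$ and $p_{\text{min}}(\xp^{\text{train}})+q$ respectively; by ingredient (b) these two comparison thresholds differ by at most $a$, so ingredient (a) with $b=a/\varepsilon_2$ gives $\Pr(\{M(\x)=\bot\}\mid q)\leq e^{\varepsilon_2}\Pr(\{M(\xp)=\bot\}\mid q)$ for every $q$, and integrating over $q$ yields $\Pr(M(\x)=\bot)\leq e^{\varepsilon_2}\Pr(M(\xp)=\bot)$.

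For a change in $\x^{\text{test}}$: now $\x^{\text{train}}$ is fixed, so $\tilde p_{\text{min}}$ has the same distribution in both executions, and I would condition on $\tilde p_{\text{min}}=p$. Given $p$, for $S\subseteq\R^d$ the event $\{M\in S\}$ equals $\{p\in S\}\cap\{r\leq t-s(p,\x^{\text{test}})\}$, and $\{M=\bot\}$ equals $\{r>t-s(p,\x^{\text{test}})\}$; by ingredient (b) the relevant thresholds differ by at most $2/n^{\text{test}}$, so ingredient (a) with $b=a/\varepsilon_2$ gives both $\Pr(M(\x)\in S\mid p)\leq e^{(2/n^{\text{test}})\varepsilon_2/a}\Pr(M(\xp)\in S\mid p)$ and $\Pr(M(\x)=\bot\mid p)\leq e^{(2/n^{\text{test}})\varepsilon_2/a}\Pr(M(\xp)=\bot\mid p)$. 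Taking expectations over $p$ (identically distributed on both sides) shows $M$ is $(\R^d,\tfrac{2/n^{\text{test}}}{a}\varepsilon_2,0)$- and $(\{\bot\},\tfrac{2/n^{\text{test}}}{a}\varepsilon_2,0)$-subset differentially private w.r.t.\ $\x^{\text{test}}$; since $\tfrac{2/n^{\text{test}}}{a}\varepsilon_2\leq\mathcal{E}(\R^d)$ and $\tfrac{2/n^{\text{test}}}{a}\varepsilon_2\leq\varepsilon_2$ (because $a\geq 2/n^{\text{test}}$), these constants are within the required bounds. Combining the two cases with \Lemmaref{thm:sum_deltas} then completes the argument.

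I expect the main obstacle to be the $\{\bot\}$-bound for a change in $\x^{\text{train}}$: showing that a $\bot$-output leaks only $\varepsilon_2$ (rather than the naive $\varepsilon_1$) about the training set. This is precisely where one must couple on the shared ERM noise so that the two executions differ only through a deterministic shift of the comparison threshold, and then bound that shift by the sensitivity of the mean-absolute-error function $s$ in the model parameters — which is exactly why $s$ is used in place of the accuracy, and why $a$ is defined to dominate $2(e^{2/(n^{\text{train}}\Lambda)}-1)$. Everything else is routine post-processing and Laplace-tail bookkeeping.
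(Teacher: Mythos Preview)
Your proposal is correct and follows essentially the same approach as the paper: both reduce to subset DP on $\R^d$ and $\{\bot\}$ via \Lemmaref{thm:sum_deltas}, handle changes in $\x^{\text{train}}$ and $\x^{\text{test}}$ separately (parallel composition), use post-processing of the $\varepsilon_1$-DP vector $\tilde p_{\text{min}}$ for the $\R^d$-bound under a training-set change, and for the $\{\bot\}$-bound under a training-set change integrate over the ERM noise $q$ and use the sensitivity of $s$ in the parameters (your coupling is exactly the paper's integral over $q$). The only cosmetic difference is that the paper packages the two sensitivity-plus-Laplace computations into a separate lemma about the noisy error $s(p_{\text{min}}(\x^{\text{train}})+Q,\x^{\text{test}})+R$, whereas you inline them.
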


In \Algref{alg:log_reg_with_test} we have a budget \(\varepsilon_1\) for the computation of the noisy parameter vector and a budget \(\varepsilon_2\) for the test. Since we want to save budget in the case of a \(\bot\)\hyp output, we will choose \(\varepsilon_2<\varepsilon_1\). This results in the following corollary:
\begin{corollary}
Let \(\varepsilon_1\geq\varepsilon_2>0\) and let
\begin{equation*}
    a = \max\left(\frac{2}{n^{\text{test}}},2\left(\exp\left(\frac{2}{n^{\text{train}}\Lambda}\right) - 1\right)\right).
\end{equation*}
Then \Algref{alg:log_reg_with_test} is \((\{\R^d,\{\bot\}\},\mathcal{E},0)\)\hyp ODP with
\begin{equation*}
    \mathcal{E}(\R^d) = \varepsilon_1
\end{equation*}
and
\begin{equation*}
    \mathcal{E}(\{\bot\}) = \varepsilon_2
\end{equation*}
w.r.t.\ a change in either \(\x^{\text{train}}\) or \(\x^{\text{test}}\).
\end{corollary}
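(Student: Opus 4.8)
The plan is to derive the corollary directly from \Thmref{thm:log_reg_with_test} by simplifying the two expressions for \(\mathcal{E}\) under the extra hypothesis \(\varepsilon_1\geq\varepsilon_2\); no new privacy argument is needed, since the substantive work is already done in the proof of \Thmref{thm:log_reg_with_test} (deferred to \Appref{sec:proof_log_reg_with_test}).

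First I would observe that the constant
\(a = \max\!\left(\frac{2}{n^{\text{test}}},\,2\left(\exp\left(\frac{2}{n^{\text{train}}\Lambda}\right) - 1\right)\right)\)
satisfies \(a \geq \frac{2}{n^{\text{test}}} > 0\), so that \(\frac{2/n^{\text{test}}}{a}\leq 1\). Combining this with \(\varepsilon_2\leq\varepsilon_1\) gives
\[
  \frac{2/n^{\text{test}}}{a}\,\varepsilon_2 \;\leq\; \varepsilon_2 \;\leq\; \varepsilon_1 ,
\]
hence \(\mathcal{E}(\R^d) = \max\!\left(\varepsilon_1,\, \frac{2/n^{\text{test}}}{a}\varepsilon_2\right) = \varepsilon_1\). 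Substituting this into the expression \(\mathcal{E}(\{\bot\}) = \min(\mathcal{E}(\R^d),\varepsilon_2)\) from \Thmref{thm:log_reg_with_test} and using \(\varepsilon_1\geq\varepsilon_2\) once more yields \(\mathcal{E}(\{\bot\}) = \min(\varepsilon_1,\varepsilon_2) = \varepsilon_2\). The ODP guarantee with these simplified values is then exactly the one asserted by \Thmref{thm:log_reg_with_test}, with the same partition \(\{\R^d,\{\bot\}\}\), the same \(\delta=0\), and the same robustness to a change in either \(\x^{\text{train}}\) or \(\x^{\text{test}}\).

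The main (and only) obstacle is the elementary inequality \(2/(n^{\text{test}} a)\leq 1\), which is immediate from the definition of \(a\) as a maximum that includes the term \(2/n^{\text{test}}\); everything else is substitution. In other words, the corollary merely records the clean form of \Thmref{thm:log_reg_with_test} in the regime \(\varepsilon_1\geq\varepsilon_2\), which is the regime of practical interest since one wants \(\mathcal{E}(\{\bot\})\) to be as small as possible in order to recover as much budget as possible on a \(\bot\)\hyp output.
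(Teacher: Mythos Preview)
Your proof is correct and follows exactly the same approach as the paper: it derives the corollary from \Thmref{thm:log_reg_with_test} using the single observation that \(\frac{2/n^{\text{test}}}{a}\leq 1\), which is immediate from the definition of \(a\). The paper's own proof is simply a terser version of what you wrote.
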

\begin{proof}
The corollary follows from \Thmref{thm:log_reg_with_test} and from the fact that
\begin{equation*}
    \frac{2/n^{\text{test}}}{a} \leq 1.
\end{equation*}
\end{proof}

Thus, in the case where \(\varepsilon_1\geq\varepsilon_2\), we have the entire DP budget available for computing the noisy parameter vector when the test is passed. However, we pay indirectly for the test since we do not use the entire dataset for training, but reserve part of the records for testing.

\xhdr{Influence of noise on the test}
Since noise gets added to the value of the error function, \Algref{alg:log_reg_with_test} will not always return \(\bot\) and save privacy budget when the model's error exceeds the threshold \(t\). In \Figref{fig:erm} we plot the influence of this noise for \(\Lambda=1\), different values of \(\varepsilon_2\) and different sizes of the database. We assume that \(70\%\) of the records are used for training and \(30\%\) are used for testing. The \(y\)\hyp axis shows the \(95\)th percentile of the distribution of the noise that is added to the error. This can be interpreted as follows: if the error exceeds \(t\) plus the value of the \(95\)th percentile, then with probability at least \(95\%\) \Algref{alg:log_reg_with_test} will return \(\bot\), and thus an amount of \(\varepsilon_1-\varepsilon_2\) privacy budget will be saved as compared to releasing the noisy model parameters.

\section{Conclusions}
\label{sec:conclusions}
In this paper, we showed how to exploit the fact that certain DP mechanisms can produce outputs that leak less private information than other outputs. Specifically, we introduced a new composition experiment and {\em a posteriori} analysis techniques that lead to privacy budget savings under composition. We demonstrated the power of our techniques with three examples: (1) an improved analysis of the Sparse Vector Technique and the Propose-Test-Release framework; (2) accounting for the actual number of iterations performed by an iterative mechanism; (3) recovery of already spent privacy budget when a mechanism's output is unsatisfactory.

A data analyst can profit from our privacy budget savings by reducing the amount of noise that they need to add within mechanisms to provide DP, or by invoking more mechanisms.

Future research can utilize our techniques to improve the analysis of existing DP mechanisms under composition or to guide the design of new mechanisms that provide improved privacy--utility tradeoffs. Another promising direction is the combination of our {\em a posteriori} analyses with advanced composition analyses.

\begin{figure}[tb]
    \centering
    \includegraphics[width=0.48\textwidth]{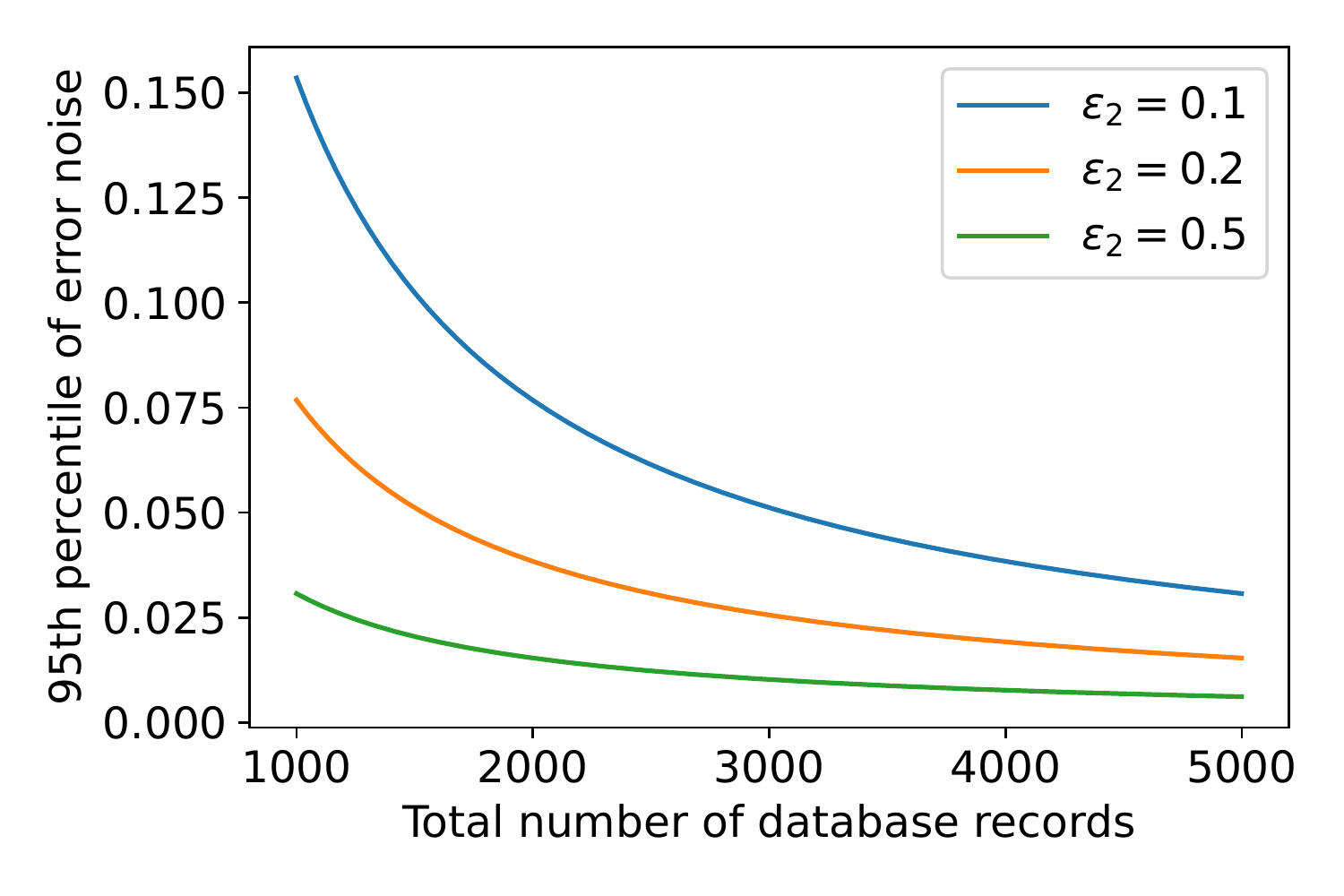}
    \caption{\(95\)th percentile of the noise added to the model's error for \(\Lambda=1\), different values of \(\varepsilon_2\) and different sizes of the database (of which \(70\%\) train, \(30\%\) test data). With probability at least \(95\%\) \Algref{alg:log_reg_with_test} will return \(\bot\) if the error exceeds \(t\) plus the \(95\)th percentile.}
    \label{fig:erm}
\end{figure}

\section*{Acknowledgments}
We would like to thank the anonymous reviewers and Borja Balle for their comments, which in particular helped improve the application examples for ODP.

Bentkamp's research has received funding from the European Research Council (ERC) under the European Union’s Horizon 2020 research and innovation program (grant agreement No. 713999, Matryoshka). It has also been funded by a Chinese Academy of Sciences President’s International Fellowship for Postdoctoral Researchers (grant No. 2021PT0015).
Bindschaedler's work is in part supported by the National Science Foundation under CNS-1933208. Any opinions, findings, and conclusions or recommendations expressed in this material are those of the authors and do not necessarily reflect the views of the National Science Foundation.


\bibliographystyle{abbrv}
\bibliography{references}

\appendix

\section{Towards advanced composition}
\label{sec:adv_comp}
\xhdr{Simple versus advanced composition}
The current ODP framework is suited for improving composition when a small to medium number of mechanisms is composed. When the number of mechanisms is large, though, it is more beneficial to use one of the advanced composition theorems (see \Secref{sec:related}).
Advanced composition decreases the \(\varepsilon\) in the DP guarantee when compared to simple composition, at the cost of increasing the \(\delta\)\hyp term. This increase in \(\delta\) is unacceptably high for short to medium\hyp length compositions, but asymptotically advanced composition is superior to simple composition. To provide sufficient privacy, a common recommendation is to choose \(\delta\ll 1/n\), where \(n\) is the size of the database.

We consider the setting where the privacy guarantees --- and thereby the utilities --- of the mechanisms that are composed are fixed and the data curator has a requirement on how large the value of \(\delta\) of their composition may at most be. In this setting the goal is to choose the composition theorem that yields the smallest \(\varepsilon\) while respecting the requirement on \(\delta\).
For advanced composition theorems such as the optimal advanced composition theorem for the setting of composing arbitrary mechanisms with the same DP guarantees \cite{kairouz2017composition} by Kairouz et al., \(\delta\) cannot be made arbitrarily small if a non\hyp zero improvement in \(\varepsilon\) over simple composition is required (\(i\) in their Thm.~3.3 needs to be at least \(1\) in that case). The shorter the length of the composition, the larger the minimal \(\delta\). Thus, the requirement on \(\delta\) can only be fulfilled if the composition is long enough; otherwise this composition theorem cannot be used.

\Tabref{table:composition_cutoff} shows the minimal length of the composition that is required to achieve a \(\delta\) not greater than a given value and an improvement in \(\varepsilon\) over simple composition when composing \(0.1\)\hyp DP mechanisms via the optimal advanced composition theorem by Kairouz et al.
If we require fewer compositions, then advanced composition does not at the same time yield an improvement in \(\varepsilon\) over simple composition --- or our composition protocol, which is based on simple composition--- and fulfill the requirement on \(\delta\). Since ODP composition improves upon simple composition, it is the best choice in terms of \(\varepsilon\) in this range of composition lengths whenever at least one of the composed mechanisms has a non\hyp trivial ODP guarantee. When the number of compositions is larger, then it depends on the concrete mechanisms whether ODP composition or advanced composition is better. Fixing the mechanisms and letting the number of compositions go to infinity, advanced composition is superior.

\begin{table*}[!tb]
\centering
\begin{tabular}{l|llllllll}
\(\delta\) & \(10^{-5}\) & \(10^{-6}\) & \(10^{-7}\) & \(10^{-8}\) & \(10^{-9}\) & \(10^{-10}\) & \(10^{-11}\) & \(10^{-12}\)\\\hline
\(I\) & \(17\) & \(20\) & \(24\) & \(27\) & \(31\) & \(35\) & \(38\) & \(42\)\\
\end{tabular}
\caption{The minimum number \(I\) of \(0.1\)\hyp DP mechanisms that need to be composed so that a \(\delta\)\hyp value of at most \(\delta\) and a smaller \(\varepsilon\)\hyp value than with simple composition is achieved when using the optimal composition theorem for homogeneous composition \cite{kairouz2017composition}. Below that number of composed mechanisms, advanced composition does not offer an advantage in terms of \(\varepsilon\) over simple composition, while ODP composition can offer an advantage over simple composition. A common recommendation is to choose \(\delta\ll 1/n\), where \(n\) is the size of the database,
or even cryptographically small (corresponding to roughly \(10^{-12}\)).}
\label{table:composition_cutoff}
\end{table*}

\xhdr{Advanced composition for ODP}
While in simple composition, on which our composition theorem is based, privacy in terms of \(\varepsilon\) degrades at a rate of \(\mathcal{O}(\varepsilon k)\), where \(k\) is the number of invoked mechanisms and \(\varepsilon\) the \(\varepsilon\)\hyp part of the DP guarantees of the composed mechanisms --- which we, for simplicity, assume is the same for all mechanisms ---, advanced composition theorems have a more gentle degradation rate of \(\mathcal{O}(\varepsilon^2 k + \varepsilon\sqrt{k})\) (but only yield better results for large enough \(k\)). Thus, it would be desirable to derive an advanced composition theorem in the framework of ODP. Privacy accounting in \Algref{alg:type1_comp} is done by subtracting after each mechanism invocation the amount of privacy budget that this mechanism has used up with its output. However, when we want privacy to degrade at a rate of \(\mathcal{O}(\varepsilon\sqrt{k})\) and want to keep the freedom for the adversary to freely choose mechanisms in each iteration whose privacy guarantees are not fixed beforehand, the amount of privacy degradation that was due to one particular mechanism is not known before \Algref{alg:type1_comp} has terminated.

For this reason, it makes sense to reformulate \Algref{alg:type1_comp} using a \emph{privacy filter}, as introduced by Rogers et al.\ \cite{rogers2016privacy}, which newly evaluates the entire sequence of mechanisms once a new mechanism gets invoked. In \Algref{alg:type1_comp} the adversary is only allowed to invoke mechanisms that fit within the remaining privacy budget. If the budget is used up, they may only invoke \((0,0)\)\hyp DP mechanisms. When using a privacy filter, we do not constrain the adversary in their choice of mechanisms, but instead stop the composition if invoking the mechanism chosen by the adversary would exceed the privacy budget. We define the new composition experiment in \Algref{alg:odp_filter}, and the privacy filter for ODP \(\mathcal{F}_{\varepsilon_t,\delta_t}\), which we term \emph{ODP filter}, as follows:
\begin{definition}[ODP Filter]
Fix \(\varepsilon_t,\delta_t\geq 0\). A function \(\mathcal{F}_{\varepsilon_t,\delta_t}:\R^{2I}_{\geq 0}\rightarrow \{\text{HALT},\text{CONT}\}\) is a \emph{valid ODP filter} if, for all adversaries \(\mathcal{A}\) and for all sets of views \(\mathcal{V}\) of \(\mathcal{A}\) returned by \Algref{alg:odp_filter}, we have that
\begin{equation*}
    \Pr(V^0\in\mathcal{V})\leq e^{\varepsilon_t}\Pr(V^1\in\mathcal{V}) + \delta_t.
\end{equation*}
\end{definition}

\Algref{alg:type1_comp} is equivalent to a \Algref{alg:odp_filter} with the ODP filter that receives as input a sequence \((\varepsilon^1,\delta_1,\dots,\varepsilon^I,\delta_I)\), and returns HALT if \(\sum_{i=1}^I \varepsilon^i > \varepsilon_t\) or \(\sum_{i=1}^I \delta^i > \delta_t\), and CONT otherwise. Due to \Thmref{thm:odpcomp}, this is a valid ODP filter.
In future work, it would be interesting to explore whether there exist valid ODP filters that have an asymptotically better than linear privacy degradation, as it is the case for privacy filters in the non\hyp ODP setting.
As an argument for why this might be the case, consider the mechanism \(M_{\text{toy}}\) from the introduction. Recall that \(M_{\text{toy}}\) flips a coin, and based on the result either invokes the Laplace mechanism on a function of the database, or returns the symbol \(\bot\), i.e., does not reveal any information about the database. If \(M_{\text{toy}}\) is invoked \(k\) times, and in \(k'\) of the cases the output is \(\bot\), then intuitively not more information about the database should have been revealed than when invoking the Laplace mechanism \(k-k'\) times without returning \(\bot\) between some of the invocations. An ODP filter that applies an advanced composition theorem to the composition of the \(k-k'\) Laplace mechanisms and only returns HALT if the resulting composition bound exceeds \((\varepsilon_t,\delta_t)\) should thus be valid.

\begin{algorithm}[!tbh]
\caption{\(\odpfc(\mathcal{A},\mathcal{F}_{\varepsilon_t,\delta_t},b)\)}
\label{alg:odp_filter}
\begin{algorithmic}[1]
\State {Select coin tosses \(R^b_\mathcal{A}\) for \(\mathcal{A}\) uniformly at random.}
\For{\(i=1,\dots,I\)}
    \State{\(\mathcal{A} = \mathcal{A}(R^b_\mathcal{A},\{A^b_j\}_{j=1}^i)\) chooses}
    \StateCont{~\(\bullet\)~ neighboring databases \(\x^{i,0},\x^{i,1}\),}
    \StateCont{~\(\bullet\)~ a triple \((\mathcal{P}_i=\{P_{i,k}\}_{k\in\mathcal{K}_i},\mathcal{E}_i,\delta_i)\), and}
    \StateCont{~\(\bullet\)~ a mechanism \(M_i\) that is \((\mathcal{P}_i,\mathcal{E}_i,\delta_i)\)\hyp ODP}
    \If{\(\mathcal{F}_{\varepsilon_t,\delta_t}(\varepsilon^1,\delta_1,\dots,\varepsilon^{i-1},\delta_{i-1},\sup_{P\in\mathcal{P}_i} \mathcal{E}_i(P),\delta_i,\)\par
        \hskip\algorithmicindent\(0,0,\dots,0,0)=\text{HALT}\)}
        \State{\(A^b_{i},\dots,A^b_I=\bot\)}
        \State{BREAK}
    \EndIf{}
    \State{Sample \(A^b_i = M_i(\x^{i,b})\)}
    \State{Let \(k\) such that \(A^b_i\in P_{i,k}\)}
    \State{Let \(\varepsilon^i=\mathcal{E}_i(P_{i,k})\)}
    \State{\(\mathcal{A}\) receives \(A^b_i\)}
\EndFor
\State{\Return{view \(V^b = (R^b_\mathcal{A}, A^b_1, \dots, A^b_I)\)}}
\end{algorithmic}
\end{algorithm}

\section{Proof of \Thmref{thm:odpcomp}}
\label{sec:proof_main_thm}
\begin{lemma} \label{lem:induction-step}
Let \(\x^0\) and \(\x^1\) be neighboring databases.
Let \(\varepsilon_t\geq 0\) and \(\delta_t \geq 0\).
Let \(M_1\) be a \((\{P_k\}_{k\in\mathcal{K}},\mathcal{E},\delta)\)-output differentially private mechanism with output set \(\mathcal{O}_1\)
such that \(\mathcal{E}(P_k)\leq\varepsilon_t\) for all \(k\)\ and \(\delta\leq\delta_t\).
For each \(v\in \mathcal{O}_1\), let \(M_2^0(v)\) and \(M_2^1(v)\) be random variables with values in 
a set \(\mathcal{O}_2\) such that
for all subsets \(\mathcal{V} \subseteq \mathcal{O}_2\)
and for \(v \in P_k\),
\begin{equation*}
    \Pr(M_2^0(v)\in \mathcal{V})
    \leq e^{\varepsilon_t - \mathcal{E}(P_k)} \Pr(M_2^1(v)\in \mathcal{V}) + \delta_t - \delta.
\end{equation*}
Then, for all \(\mathcal{V} \subseteq \mathcal{O}_1\times \mathcal{O}_2\), we have
\begin{align*}
    &\Pr((M_1(\x^0), M_2^0(M_1(\x^0)))\in \mathcal{V}) \\
    &\leq e^{\varepsilon_t} \Pr((M_1(\x^1), M_2^1(M_1(\x^1)))\in \mathcal{V}) + \delta_t.
\end{align*}
\end{lemma}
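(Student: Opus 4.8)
The plan is to run the standard argument for simple $(\varepsilon,\delta)$-composition (cf.\ Dwork and Lei~\cite{dwork2009differential}), modified so that the amount of budget charged to $M_1$ is $\mathcal{E}(P_k)$ on the block $P_k$ while the continuation $M_2$ is granted the complementary budget $\varepsilon_t-\mathcal{E}(P_k)$. Write $\mu_b$ for the law of $M_1(\x^b)$ on $\mathcal{O}_1$, and, for $v\in\mathcal{O}_1$, write $\nu^b_v$ for the law of $M_2^b(v)$ on $\mathcal{O}_2$. Since the randomness of $M_1$ is independent of that of $M_2^b$, the law of $(M_1(\x^b),M_2^b(M_1(\x^b)))$ is the semidirect product $\mu_b\otimes\nu^b$, so for $\mathcal{V}\subseteq\mathcal{O}_1\times\mathcal{O}_2$ with section $\mathcal{V}_v=\{w:(v,w)\in\mathcal{V}\}$ we have $\Pr((M_1(\x^b),M_2^b(M_1(\x^b)))\in\mathcal{V})=\int_{\mathcal{O}_1}\nu^b_v(\mathcal{V}_v)\,d\mu_b(v)$.

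The workhorse is the elementary equivalence: ``$P(S)\le e^{a}Q(S)+\eta$ for all measurable $S$'' holds iff there is a sub-measure $P^{\dagger}\le P$ (pointwise) with $P^{\dagger}\le e^{a}Q$ and whole-space mass defect $P(\cdot)-P^{\dagger}(\cdot)\le\eta$; one may take $P^{\dagger}=P\wedge e^{a}Q$. Put $\gamma(v)=\mathcal{E}(P_k)$ for $v\in P_k$ and define $\hat\mu_1(S)=\sum_{k}e^{\mathcal{E}(P_k)}\mu_1(S\cap P_k)$; since all $\mathcal{E}(P_k)\le\varepsilon_t$ this is a finite measure (countability of $\mathcal{K}$ is exactly what makes the defining series countably additive), with $\mu_1$-density $e^{\gamma(v)}$. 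The ODP guarantee of $M_1$ says precisely $\mu_0(S)\le\hat\mu_1(S)+\delta$ for all $S$, so $\mu_0^{\dagger}:=\mu_0\wedge\hat\mu_1$ satisfies $\mu_0^{\dagger}\le\mu_0$, $\mu_0^{\dagger}\le\hat\mu_1$, and $\mu_0^{\dagger}(\mathcal{O}_1)=\mu_0(\mathcal{O}_1)-\sup_S(\mu_0(S)-\hat\mu_1(S))\ge 1-\delta$. Likewise, for $v\in P_k$ the hypothesis gives $\nu^0_v(\cdot)\le e^{\varepsilon_t-\mathcal{E}(P_k)}\nu^1_v(\cdot)+(\delta_t-\delta)$, hence $\nu^{0\dagger}_v:=\nu^0_v\wedge e^{\varepsilon_t-\gamma(v)}\nu^1_v$ obeys $\nu^{0\dagger}_v\le\nu^0_v$, $\nu^{0\dagger}_v\le e^{\varepsilon_t-\gamma(v)}\nu^1_v$, and $\nu^{0\dagger}_v(\mathcal{O}_2)\ge 1-(\delta_t-\delta)$.

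Now set $R^{\dagger}=\mu_0^{\dagger}\otimes\nu^{0\dagger}$ and verify three things. (i) $R^{\dagger}\le\mu_0\otimes\nu^0$ pointwise, since $\mu_0^{\dagger}\le\mu_0$ and $\nu^{0\dagger}_v\le\nu^0_v$. (ii) $R^{\dagger}(\mathcal{V})=\int\nu^{0\dagger}_v(\mathcal{V}_v)\,d\mu_0^{\dagger}(v)\le\int e^{\varepsilon_t-\gamma(v)}\nu^1_v(\mathcal{V}_v)\,d\mu_0^{\dagger}(v)\le\int e^{\varepsilon_t-\gamma(v)}\nu^1_v(\mathcal{V}_v)\,d\hat\mu_1(v)=e^{\varepsilon_t}\int\nu^1_v(\mathcal{V}_v)\,d\mu_1(v)$, where the second inequality uses $\mu_0^{\dagger}\le\hat\mu_1$ against a nonnegative integrand and the equality uses $d\hat\mu_1=e^{\gamma(v)}\,d\mu_1$ to cancel $e^{-\gamma(v)}$; hence $R^{\dagger}(\mathcal{V})\le e^{\varepsilon_t}\Pr((M_1(\x^1),M_2^1(M_1(\x^1)))\in\mathcal{V})$. (iii) The total defect is $1-R^{\dagger}(\mathcal{O}_1\times\mathcal{O}_2)=1-\int\nu^{0\dagger}_v(\mathcal{O}_2)\,d\mu_0^{\dagger}(v)\le 1-(1-(\delta_t-\delta))\,\mu_0^{\dagger}(\mathcal{O}_1)\le 1-(1-(\delta_t-\delta))(1-\delta)\le\delta_t$. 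Since $R^{\dagger}\le\mu_0\otimes\nu^0$, the difference $(\mu_0\otimes\nu^0)-R^{\dagger}$ is a nonnegative measure, so for every $\mathcal{V}$, $\Pr((M_1(\x^0),M_2^0(M_1(\x^0)))\in\mathcal{V})=R^{\dagger}(\mathcal{V})+\big[(\mu_0\otimes\nu^0)-R^{\dagger}\big](\mathcal{V})\le e^{\varepsilon_t}\Pr((M_1(\x^1),M_2^1(M_1(\x^1)))\in\mathcal{V})+\delta_t$, which is the claim.

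I expect the delicate point to be the $\delta$-bookkeeping rather than the geometry. The crucial feature of $M_1$'s ODP guarantee used above is that a \emph{single} $\delta$ works for all of $\mathcal{O}_1$ simultaneously: had one instead tried to charge a block-dependent $\delta_k$ on $P_k$ and recombine, the $\delta$-loss would inflate to $\sum_k\delta_k$ (divergent when $\mathcal{K}$ is infinite); symmetrically, the residual budget passed to the continuation must be $(\varepsilon_t-\mathcal{E}(P_k),\delta_t-\delta)$ with that same $\delta$. This is exactly the subtlety behind the error the authors report catching during formalization. A separate, purely technical hurdle is ensuring $v\mapsto\nu^{0\dagger}_v$ is a measurable kernel so that $\mu_0^{\dagger}\otimes\nu^{0\dagger}$ is well defined; in keeping with the paper's convention I would flag this and defer it to the mechanized development.
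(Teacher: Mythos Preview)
Your proof is correct and follows essentially the same Dwork--Lei style argument as the paper: both use the Hahn decomposition of the signed measure $\mu_0-\hat\mu_1$ (your $\mu_0^{\dagger}=\mu_0\wedge\hat\mu_1$ is exactly $\mu_0-\mu_+$ in the paper's notation), integrate the $M_2$ bound against it, and exploit that a single global $\delta$ controls the defect across all blocks. The only cosmetic difference is that the paper works slice by slice over the $P_k$ and caps the inner factor at $1$ rather than constructing your second truncated kernel $\nu^{0\dagger}_v$; your packaging is a bit more symmetric, theirs a bit more elementary, but the content is the same.
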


Note that the two occurrences of \(M_1(\x^0)\) in the equation refer to the same random variable, and likewise for the two occurrences of \(M_1(\x^1)\).

\begin{proof}
Let \(\mathcal{V} \subseteq \mathcal{O}_1\times \mathcal{O}_2\).
We partition \(\mathcal{V}\) into slices
\(\mathcal{V}_k = \mathcal{V} \cap (P_k \times \mathcal{O}_2)\) for~\(k\in\mathcal{K}\).
In the following, we roughly follow a proof of the simple composition theorem for \((\varepsilon,\delta)\)\hyp mechanisms by Dwork and Lei \cite[Lemma~28]{dwork2009differential}.

We define the signed measure
\begin{equation*}
    \mu(S) = \Pr(M_1(\x^0)\in S) - \sum_{k\in\mathcal{K}} e^{\mathcal{E}(P_k)} \Pr(M_1(\x^1)\in S \cap P_k).
\end{equation*}
We denote the positive measure resulting from its Hahn decomposition by \(\mu_+\).
This implies that for all \(S \subseteq \mathcal{O}_1\),
\begin{align*}
    &\Pr(M_1(\x^0)\in S)\\
    &\leq \sum_{k\in\mathcal{K}} e^{\mathcal{E}(P_k)} \Pr(M_1(\x^1)\in S \cap P_k) + \mu_+(S).
\end{align*}
For \(k\in\mathcal{K}\) and \(S \subseteq P_k\), it follows that
\begin{equation} \label{eq:pos_hahn_slice_prop}
    \Pr(M_1(\x^0)\in S) \leq e^{\mathcal{E}(P_k)} \Pr(M_1(\x^1)\in S ) + \mu_+(S).
\end{equation}\

By our assumption on \(M^0_2\) and \(M^1_2\), for every \(v\in P_k\) and
every \(S \subseteq O_1 \times O_2\),
\begin{align}\label{eq:diff_private_composition_min}
    &\Pr((v,M^0_2(v))\in S) = \Pr((v,M^0_2(v))\in S) \land 1\nonumber\\
    &\leq [e^{\varepsilon_t-\mathcal{E}(P_k)} \Pr((v,M^1_2(v))\in S) + \delta_t -\delta] \land 1\nonumber\\
    &\leq [e^{\varepsilon_t-\mathcal{E}(P_k)} \Pr((v,M^1_2(v))\in S)] \land 1 + \delta_t - \delta,
\end{align}
where \(\land\) denotes the minimum operator.

First, we consider a single partition \(P_k\) and its corresponding slice $\mathcal{V}_k$.
We write $V^0$ for $(M_1(\x^0),M^0_2(M_1(\x^0)))$ and 
$V^1$ for $(M_1(\x^1),M^1_2(M_1(\x^1)))$. We have
\begingroup
\allowdisplaybreaks 
\begin{align}
\nonumber&\Pr(V^0\in \mathcal{V}_k)\\
\nonumber&=\Pr((M_1(\x^0),M^0_2(M_1(\x^0)))\in \mathcal{V}_k)\\
\nonumber&= \int_{P_k} \Pr((v,M^0_2(v))\in \mathcal{V}_k) \Pr(M_1(\x^0)\in dv)\\
\nonumber&\overset{\text{(\ref{eq:diff_private_composition_min})}}{\leq}
    \int_{P_k} ([e^{\varepsilon_t - \mathcal{E}(P_k)} \Pr((v,M_2^1(v))\in \mathcal{V}_k)] \land 1 + \delta_t -\delta) \\[-\jot]
\nonumber&\hphantom{\leq}\times \Pr(M_1(\x^0)\in dv)\\
\nonumber&= \int_{P_k} ([e^{\varepsilon_t - \mathcal{E}(P_k)} \Pr((v,M^1_2(v))\in \mathcal{V}_k)] \land 1)\\[-\jot]
\nonumber&\hphantom{\leq}\times\Pr(M_1(\x^0)\in dv) + (\delta_t -\delta) \Pr(M_1(\x^0)\in P_k)\\
\nonumber&\overset{\text{(\ref{eq:pos_hahn_slice_prop})}}{\leq}
    \int_{P_k} ([e^{\varepsilon_t - \mathcal{E}(P_k)} \Pr((v,M^1_2(v))\in \mathcal{V}_k)] \land 1)\\[-\jot]
\nonumber&\hphantom{\leq}\times
    (e^{\mathcal{E}(P_k)} \Pr(M_1(\x^1)\in dv) + \mu_+(dv))\\[-\jot]
\nonumber&\hphantom{\leq}
    + (\delta_t -\delta) \Pr(M_1(\x^0)\in P_k)\\
\nonumber&\leq e^{\varepsilon_t} \int_{P_k} \Pr((v,M^1_2(v))\in \mathcal{V}_k) \Pr(M_1(\x^1)\in dv)\\[-\jot]
\nonumber&\hphantom{\leq}+ \mu_+(P_k) + (\delta_t -\delta) \Pr(M_1(\x^0)\in P_k)\\
\nonumber&\leq e^{\varepsilon_t} \Pr((M_1(\x^1),M^1_2(M_1(\x^1))\in \mathcal{V}_k)\\[-\jot]
\nonumber&\hphantom{\leq}+ \mu_+(P_k) + (\delta_t -\delta) \Pr(M_1(\x^0)\in P_k)\\
\nonumber&= e^{\varepsilon_t} \Pr(V^1\in \mathcal{V}_k)
    + \mu_+(P_k)\\[-\jot]
&\hphantom{=}+ (\delta_t -\delta) \Pr(M_1(\x^0)\in P_k).
    \label{eq:slice}
\end{align}
\endgroup

Next, we consider all partitions \(P_k\) and slices \(\mathcal{V}_k\) together.
Since \(M_1\) is \((\{P_k\}_{k\in\mathcal{K}},\mathcal{E},\delta)\)-ODP,
for all $S\subseteq\mathcal{O}_1$, we have
\begin{equation*}
\Pr(M_1(\x^0) \in S) \leq \delta + \sum_{k\in\mathcal{K}} e^{\mathcal{E}(P_k)} \Pr(M_1(\x^1) \in S\cap P_k),
\end{equation*}
and thus $\mu(S) \leq \delta$. It follows that for all  $S\subseteq\mathcal{O}_1$,
\begin{equation}\label{eq:pos_hahn_prop}
\mu_+(S) \leq  \delta.
\end{equation}
Thus,
\begingroup
\allowdisplaybreaks 
\begin{align*} 
    &\Pr(V^0\in\mathcal{V})
    = \Pr(V^0\in\bigcup_k \mathcal{V}_k)
    = \sum_k \Pr(V^0\in\mathcal{V}_k)\\
    &\overset{\text{(\ref{eq:slice})}}{\leq} 
    \sum_k [e^{\varepsilon_t} \Pr(V^1\in \mathcal{V}_k) + \mu_+(P_k)\\
    &\hphantom{\overset{\text{(\ref{eq:slice})}}{\leq} 
    \sum_k [}+ (\delta_t - \delta) \Pr(M_1(\x^0)\in P_k)]\\
    &= e^{\varepsilon_t} \sum_k \Pr(V^1\in \mathcal{V}_k) + \sum_k \mu_+(P_k) 
    \\[-\jot]
    &\hphantom{\leq}+ \sum_k (\delta_t - \delta) \Pr(M_1(\x^0)\in P_k)\\
    &= e^{\varepsilon_t} \Pr(V^1\in\mathcal{V}) + \mu_+(\mathcal{O}_1)
    + (\delta_t - \delta) \Pr(M_1(\x^0)\in \mathcal{O}_1)\\
    &\overset{\text{(\ref{eq:pos_hahn_prop})}}{\leq} 
    e^{\varepsilon_t} \Pr(V^1\in\mathcal{V}) + \delta + (\delta_t - \delta) \times 1\\
    &= e^{\varepsilon_t} \Pr(V^1\in\mathcal{V}) + \delta_t.
    \tag*{\qedhere}
\end{align*}
\endgroup

\end{proof}

\newcounter{savethmcounter}
\setcounter{savethmcounter}{\value{thmcounter}}
\setcounterref{thmcounter}{thm:odpcomp}
\addtocounter{thmcounter}{-1} 
\begin{theorem}
For every adversary \(\mathcal{A}\) and for every set of views \(\mathcal{V}\) of \(\mathcal{A}\) returned by \Algref{alg:type1_comp} we have that
\begin{equation*}
    \Pr(V^0\in\mathcal{V}) \leq e^{\varepsilon_t} \Pr(V^1\in\mathcal{V}) + \delta_t.
\end{equation*}
\end{theorem}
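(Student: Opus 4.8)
The plan is to prove \Thmref{thm:odpcomp} by induction on the number of rounds $I$, with \Lemmaref{lem:induction-step} supplying the inductive step. First I would eliminate the adversary's randomness: the coins $R^b_\mathcal{A}$ are drawn independently of the bit $b$ and appear verbatim in the view $V^b$, so it suffices to fix an arbitrary value $r$ of these coins and show $\Pr(V^0\in\mathcal{V}\mid R_\mathcal{A}=r)\le e^{\varepsilon_t}\Pr(V^1\in\mathcal{V}\mid R_\mathcal{A}=r)+\delta_t$ for every $r$; integrating over $r$, whose law does not depend on $b$, then gives the theorem. With $r$ fixed the adversary is deterministic, so in round $1$ it commits, independently of $b$, to a mechanism $M_1$, an ODP triple $(\mathcal{P}_1,\mathcal{E}_1,\delta_1)$ with $\sup_{P\in\mathcal{P}_1}\mathcal{E}_1(P)\le\varepsilon_t$ and $\delta_1\le\delta_t$, and a neighboring pair $\x^{1,0},\x^{1,1}$. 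For the base case $I=1$, \Lemmaref{thm:odp_to_dp} tells us that $M_1$, being $(\mathcal{P}_1,\mathcal{E}_1,\delta_1)$-ODP, is $(\sup_P\mathcal{E}_1(P),\delta_1)$-DP and hence $(\varepsilon_t,\delta_t)$-DP; applied to $A^b_1=M_1(\x^{1,b})$ this is exactly the desired bound on the single-output view.

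For the inductive step, suppose the statement holds for composition length $I-1$. I would view the $I$-round run of \Algref{alg:type1_comp} as: sample $A^b_1=M_1(\x^{1,b})$, and then, conditioned on $A^b_1=v$ with $v\in P_{1,k}$, run the remaining $I-1$ rounds, which is precisely the composition experiment of \Algref{alg:type1_comp} with $I-1$ rounds, executed by the (deterministic) adversary that has already been handed $v$ as its first output, with remaining budget $(\varepsilon_t-\mathcal{E}_1(P_{1,k}),\,\delta_t-\delta_1)$; both components are nonnegative because $\mathcal{E}_1(P_{1,k})\le\sup_P\mathcal{E}_1(P)\le\varepsilon_t$ and $\delta_1\le\delta_t$. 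Writing $M^b_2(v)$ for the law of $(A^b_2,\dots,A^b_I)$ given $A^b_1=v$, the induction hypothesis applied to this shorter experiment yields, for every $k$, every $v\in P_{1,k}$, and every set $\mathcal{W}$,
\[
    \Pr(M^0_2(v)\in\mathcal{W})\le e^{\varepsilon_t-\mathcal{E}_1(P_{1,k})}\,\Pr(M^1_2(v)\in\mathcal{W})+(\delta_t-\delta_1).
\]
This is exactly the hypothesis of \Lemmaref{lem:induction-step} with $\delta=\delta_1$; since the view (after conditioning on $R_\mathcal{A}=r$) is $(A^b_1,A^b_2,\dots,A^b_I)=(M_1(\x^{1,b}),M^b_2(M_1(\x^{1,b})))$, that lemma gives $\Pr(V^0\in\mathcal{V}\mid R_\mathcal{A}=r)\le e^{\varepsilon_t}\Pr(V^1\in\mathcal{V}\mid R_\mathcal{A}=r)+\delta_t$, and integrating over $r$ closes the induction.

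The step I expect to be delicate — and, as the remark after \Thmref{thm:odpcomp} makes explicit, the one where an earlier version of the argument went wrong — is the handling of $\delta_1$: it must be subtracted as the same constant across all slices $P_{1,k}$, since \Lemmaref{lem:induction-step} only accepts a per-slice hypothesis whose additive term $\delta_t-\delta$ is slice-independent (the slice-dependence being confined to the multiplicative factor $e^{\varepsilon_t-\mathcal{E}(P_k)}$), which is what makes the Hahn-decomposition bookkeeping inside that lemma work; allowing a slice-dependent $\delta_{1,k}$ breaks the inequality $[cx+d]\wedge 1\le([cx]\wedge 1)+d$ that the lemma relies on. Everything else is routine, modulo the measure-theoretic points — existence of and measurable dependence on $v$ of the conditional laws $M^b_2(v)$, and the identification of the conditional continuation with a genuine instance of the composition experiment at the reduced budget — which the paper explicitly sets aside but which the Lean development pins down.
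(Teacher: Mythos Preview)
Your proposal is correct and follows essentially the same route as the paper: fix the adversary's coins to make it deterministic, induct on the number of rounds with \Lemmaref{thm:odp_to_dp} handling the base case, and in the inductive step identify the continuation after the first output $v\in P_{1,k}$ with a shorter experiment at budget $(\varepsilon_t-\mathcal{E}_1(P_{1,k}),\,\delta_t-\delta_1)$ so that \Lemmaref{lem:induction-step} applies. Your remark about $\delta_1$ having to be slice-independent is exactly the subtlety the paper flags as having been missed in an earlier version.
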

\setcounter{thmcounter}{\value{savethmcounter}}

\begin{proof}

Since the adversary's coins are tossed before any computation on the data is done, we can fix the randomness \(R^b_\mathcal{A}\) of the adversary and thus assume a deterministic adversary.

We proceed by induction on the number of iterations \(I\) of \Algref{alg:type1_comp}.
The theorem holds for \(I=1\) by Lemma~\ref{thm:odp_to_dp}.
By the induction hypothesis,
we can assume that the theorem holds for \(I\) iterations and
we must show that it holds for \(I+1\) iterations.

Since the adversary is deterministic, 
the databases \(\x^{1,0}\) and \(\x^{1,1}\), the triple \((\mathcal{P}_1 = \{P_{1,k}\}_{k\in\mathcal{K}_1},\mathcal{E}_1,\delta_1)\), and the ODP mechanism \(M_1\) are fixed.

Let $v$ be a possible output of $M_1$.
Let $\mathcal{A}_r(v)$ be the adversary 
that behaves like adversary $\mathcal{A}$ after adversary $\mathcal{A}$ has seen $M_1$
produce output $v$.
Let $k \in \mathcal{K}_1$ such that $v \in P_{1,k}$.
We invoke the induction hypothesis using 
$\mathcal{A}_r(v)$ for $\mathcal{A}$,
$\varepsilon_t - \mathcal{E}_1(P_{1,k})$ for $\varepsilon_t$,
and
$\delta_t - \delta_1$ for $\delta_t$
to obtain
\begin{equation*}
  \Pr(V^0_r(v)\in\mathcal{V}_r) 
  \leq e^{\varepsilon_t - \mathcal{E}_1(P_{1,k})} \Pr(V^1_r(v)\in\mathcal{V}_r) + \delta_t - \delta_1,
\end{equation*}
for every set of views $\mathcal{V}_r$ of $\mathcal{A}_r(v)$ returned by \Algref{alg:type1_comp},
where $V^b_r(v)$ is the random variable describing the view of $\mathcal{A}_r(v)$ returned by \Algref{alg:type1_comp} for bit $b$ with 
total budgets $\varepsilon_t - \mathcal{E}_1(P_{1,k})$ and $\delta_t - \delta_1$.

We can thus apply Lemma \ref{lem:induction-step}
using $M_1$ for $M_1$ and $V^b_r(v)$ for $M_2^b(v)$, yielding
\begin{align*}
    &\Pr((M_1(\x^{1,0}), V^0_r(M_1(\x^{1,0})))\in \mathcal{V}) \\
    &\leq e^{\varepsilon_t} \Pr((M_1(\x^{1,1}), V^1_r(M_1(\x^{1,1})))\in \mathcal{V}) + \delta_t.
\end{align*}
Since $V^b = (M_1(\x^{1,b}), V^b_r(M_1(\x^{1,b})))$, this completes the induction step and the proof.
\end{proof}

\section{Proofs from \Secref{sec:comp_direct}}
\newcounter{savethmcounter_app_iteration_length}
\setcounter{savethmcounter_app_iteration_length}{\value{thmcounter}}

We first prove \Lemmaref{thm:post-processing}, which we use in the proof of \Lemmaref{thm:rr_comp_equivalence}.

\setcounterref{thmcounter}{thm:post-processing}
\addtocounter{thmcounter}{-1}
\begin{lemma}
Let \(f=(f_1,\dots,f_{k_n})\) be a randomized function, where \(f_k\) may depend on the output of \(f_1,\dots,f_{k-1}\), and let \(M=\im\big[\{k_i\}_{i=1}^n, \{M_k\}_{k=1}^{k_n}, \{h^{k_i}\circ (f_1,\dots,f_{k_i})\}_{i=1}^{n-1}\big]\) be an iterative mechanism that fulfills \((\mathcal{P}^M=\{P_i^M\}_{i=1}^n,\mathcal{E},\delta)\)\hyp ODP. Assume that the outputs of each mechanism \(M_k\), \(k=2,\dots,k_n\), only depends on the database but not on the outputs of \(M_1,\dots,M_{k-1}\). Then the iterative mechanism \(M'=\im\big[\{k_i\}_{i=1}^n, \{f_k\circ M_k\}_{k=1}^{k_n}, \{h^{k_i}\}_{i=1}^{n-1}\big]\) fulfills \((\mathcal{P}^{M'}=\{P_i^{M'}\}_{i=1}^n,\mathcal{E}',\delta)\)\hyp ODP, where, for \(i=1,\dots,n\), \(\mathcal{E}'(P_i^{M'})=\mathcal{E}(P_i^M)\).
\end{lemma}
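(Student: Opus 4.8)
The plan is to show that $M'$ is nothing but a post-processing of $M$, and then to run the standard argument used for the post-processing lemma of DP \cite[Prop.~2.1]{dwork2014differential}. First I would reduce to the case where $f$ is deterministic: if the $f_k$ use internal coins, these coins are sampled independently of the database, so one can condition on them and average at the end --- exactly the device used for the randomized stopping criteria in the proof of \Lemmaref{thm:odp_partition_iterative} (equivalently, fold the coins of each $f_k$ into the output of the corresponding $M_k$). So assume $f$ deterministic, and define the length-preserving map $g$ on finite sequences by $g(\tau_1,\dots,\tau_\ell) = (\upsilon_1,\dots,\upsilon_\ell)$ with $\upsilon_j = f_j(\tau_j;\upsilon_1,\dots,\upsilon_{j-1})$.

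The structural claim is: $M'(\x)$ is distributed as $g(M(\x))$ for every database $\x$. Granting this, the lemma follows by a routine computation. For neighboring $\x,\xp$ and any $S\subseteq\Range(M')$,
\[
  \Pr(M'(\x)\in S) = \Pr\!\big(M(\x)\in g^{-1}(S)\big) \le \delta + \sum_{i=1}^n e^{\mathcal{E}(P_i^M)}\,\Pr\!\big(M(\xp)\in g^{-1}(S)\cap P_i^M\big),
\]
where the inequality is the $(\mathcal{P}^M,\mathcal{E},\delta)$-ODP guarantee of $M$ applied to the set $g^{-1}(S)$. Because $g$ preserves lengths and $g(\Range(M))\subseteq\Range(M')$, a length-$k_i$ element $m\in\Range(M)$ with $g(m)\in S$ satisfies $g(m)\in S\cap P_i^{M'}$; hence $g^{-1}(S)\cap P_i^M \subseteq g^{-1}(S\cap P_i^{M'})$ and each summand is bounded by $e^{\mathcal{E}(P_i^M)}\Pr(M'(\xp)\in S\cap P_i^{M'}) = e^{\mathcal{E}'(P_i^{M'})}\Pr(M'(\xp)\in S\cap P_i^{M'})$, which is precisely the ODP inequality for $M'$.

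It remains to verify the structural claim, and this is where the hypothesis that every $M_k$ is non-adaptive (its output depends only on $\x$) is indispensable. Running $M$ on $\x$ produces raw outputs $t_k=M_k(\x)$; after iteration $k_i$ it forms $u_j=f_j(t_j;u_1,\dots,u_{j-1})$ and evaluates $h^{k_i}(u_1,\dots,u_{k_i})$. Running $M'$ on $\x$, the $k$-th mechanism $f_k\circ M_k$ first invokes $M_k$, which returns the same $t_k$ (it may ignore the earlier outputs of $M'$ exactly because it is non-adaptive --- without this the $M_k$ would be fed the $u$'s instead of the $t$'s and the two runs would diverge), and then outputs $f_k(t_k;s'_1,\dots,s'_{k-1})$, which by induction on $k$ equals $u_k$. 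Thus $M'$ evaluates the very same $h^{k_i}(u_1,\dots,u_{k_i})$, so $M$ and $M'$ halt at the same iteration $k_{i^*}$ and return $(t_1,\dots,t_{k_{i^*}})$ and $g(t_1,\dots,t_{k_{i^*}}) = (u_1,\dots,u_{k_{i^*}})$ respectively. I expect this bookkeeping --- confirming that the two halting times coincide, checking that $g$ carries the partition cells of $M$ into those of $M'$ (including the $\Range(\cdot)$ qualifiers), and dispatching $f$'s internal randomness as above --- to be essentially the whole content of the proof; the rest is the verbatim post-processing computation.
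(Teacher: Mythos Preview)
Your proposal is correct and follows essentially the same route as the paper: reduce to deterministic $f$, recognize $M'$ as the post-processing $g\circ M$ for a length-preserving map, pull back the event through $g^{-1}$, and apply the ODP inequality of $M$ partition-cell by partition-cell. If anything, your write-up is more careful than the paper's own proof, which compresses the structural claim $\Pr(M'(\x)\in S')=\Pr(M(\x)\in S)$ into a single unjustified equality and never spells out why the non-adaptivity of the $M_k$ is needed; your inductive bookkeeping showing that both runs produce the same $u_k$'s and hence halt at the same $k_{i^*}$ is exactly the missing justification.
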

\begin{proof}
First assume that \(f\) is deterministic. Let \(\x,\xp\) be neighboring databases and let \(S'\subseteq\Range(M')\). Let
\begin{equation*}
    S=\{s\in\Range(M)\mid (f_1,\dots,f_{\dim(s)})(s) \in S'\},
\end{equation*}
where \(\dim(s)\) denotes the dimensionality of the vector \(s\).
Then
\begin{align*}
    &\Pr(M'(\x)\in S') = \Pr(M(\x)\in S)\\
    &\leq \delta + \sum_{i=1}^n e^{\mathcal{E}(P_i^M)} \Pr(M(\x)\in S\cap P_i^M)\\
    &= \delta + \sum_{i=1}^n e^{\mathcal{E}(P_i^M)} \Pr(M'(\x)\in S'\cap P_i^{M'}).
\end{align*}
This proves the statement for deterministic functions \(f\). Every randomized function \(f\) can be written as a random convex combination of deterministic functions. Since every convex combination of \((\mathcal{P},\mathcal{E},\delta)\)\hyp ODP mechanisms fulfills \((\mathcal{P},\mathcal{E},\delta)\)\hyp ODP, the statement follows for randomized functions \(f\).
\end{proof}

\setcounterref{thmcounter}{thm:rr_comp_equivalence}
\addtocounter{thmcounter}{-1}
\begin{lemma}
For any \(\{k_i\}_{i=1}^n\), \(\{(\varepsilon_k,\delta_k)\}_{k=1}^{k_n}\) and any \(\mathcal{E}\) we have that
\begin{align*}
    &\optdelta(\{k_i\}_{i=1}^n, \{(\varepsilon_k,\delta_k)\}_{k=1}^{k_n}, \mathcal{E})\\
    &= \sup_{\{h^{k_i}\}_{i=1}^{n-1}} \optdelta(\{k_i\}_{i=1}^n, \{\tilde{M}_{(\varepsilon_k,\delta_k)}\}_{k=1}^{k_n}, \{h^{k_i}\}_{i=1}^{n-1}, \mathcal{E}).
\end{align*}
\end{lemma}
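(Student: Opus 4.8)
The plan is to prove the equality via two inequalities, following the randomized-response reduction of Kairouz et al.\ and Murtagh and Vadhan. The ``$\geq$'' direction is immediate: the randomized response mechanism $\tilde{M}_{(\varepsilon_k,\delta_k)}$ is itself $(\varepsilon_k,\delta_k)$-differentially private (a routine verification against \Defref{def:dp}), so the sequence $\{\tilde{M}_{(\varepsilon_k,\delta_k)}\}_{k=1}^{k_n}$, together with any stopping criteria $\{h^{k_i}\}_{i=1}^{n-1}$, is one of the configurations over which the supremum defining $\optdelta(\{k_i\}_{i=1}^n,\{(\varepsilon_k,\delta_k)\}_{k=1}^{k_n},\mathcal{E})$ ranges; taking the supremum over $\{h^{k_i}\}_{i=1}^{n-1}$ on the right-hand side gives the inequality.

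For the ``$\leq$'' direction I would fix an arbitrary sequence $\{M_k\}_{k=1}^{k_n}$ of $(\varepsilon_k,\delta_k)$-DP mechanisms, arbitrary stopping criteria $\{h^{k_i}\}_{i=1}^{n-1}$, and an arbitrary pair of neighboring databases $\x^0,\x^1$, and write $M=\im\big[\{k_i\}_{i=1}^n,\{M_k\}_{k=1}^{k_n},\{h^{k_i}\}_{i=1}^{n-1}\big]$. For every iteration $k$ and every history $(s_1,\dots,s_{k-1})$, the mechanism $M_k((s_1,\dots,s_{k-1}),\cdot)$ is $(\varepsilon_k,\delta_k)$-DP, so \Lemmaref{thm:rr_equivalence} yields a function $T_k^{(s_1,\dots,s_{k-1})}$ with $T_k^{(s_1,\dots,s_{k-1})}(\tilde{M}_{(\varepsilon_k,\delta_k)}(b)) \sim M_k((s_1,\dots,s_{k-1}),\x^b)$ for $b=0$ and $b=1$ simultaneously. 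I would then set $f_k(t_1,\dots,t_{k-1},q) = T_k^{(t_1,\dots,t_{k-1})}(q)$ (a legitimate post-processing function, since it depends only on the outputs $t_1,\dots,t_{k-1}$ of $f_1,\dots,f_{k-1}$ and on the fresh input $q$), $\tilde{h}^{k_i} = h^{k_i}\circ(f_1,\dots,f_{k_i})$, and consider the two iterative mechanisms $M'' = \im\big[\{k_i\}_{i=1}^n,\{\tilde{M}_{(\varepsilon_k,\delta_k)}\}_{k=1}^{k_n},\{\tilde{h}^{k_i}\}_{i=1}^{n-1}\big]$ and $M' = \im\big[\{k_i\}_{i=1}^n,\{f_k\circ\tilde{M}_{(\varepsilon_k,\delta_k)}\}_{k=1}^{k_n},\{h^{k_i}\}_{i=1}^{n-1}\big]$.

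The heart of the argument is a short induction on the iteration count: using that the randomized-response coins are independent across iterations together with \Lemmaref{thm:rr_equivalence}, the trajectory $(t_1,\dots,t_{k_n})$ produced inside $M'$ on input bit $b$ is identically distributed to the trajectory of the original mechanisms $(M_1,\dots,M_{k_n})$ run on $\x^b$; since $M'$ and $M$ apply the same stopping criteria to their respective trajectories, this gives $M'(b)\sim M(\x^b)$ for $b=0,1$. Now $M''$ has exactly the form required by \Lemmaref{thm:post-processing} --- its per-iteration mechanisms $\tilde{M}_{(\varepsilon_k,\delta_k)}$ are non-adaptive --- so from the fact that $M''$ is $(\mathcal{P}^{M''},\mathcal{E},\delta)$-ODP with $\delta = \optdelta(\{k_i\}_{i=1}^n,\{\tilde{M}_{(\varepsilon_k,\delta_k)}\}_{k=1}^{k_n},\{\tilde{h}^{k_i}\}_{i=1}^{n-1},\mathcal{E})$ (the infimum in $\optdelta$ is attained, being a closed condition), \Lemmaref{thm:post-processing} gives that $M'$ is $(\mathcal{P}^{M'},\mathcal{E},\delta)$-ODP, and hence by $M'(b)\sim M(\x^b)$ that $M$ satisfies the ODP inequality for the pair $\x^0,\x^1$ (in both orderings) with this $\delta$. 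Since the $\tilde{h}^{k_i}$ are valid stopping criteria and $\x^0,\x^1$, $\{M_k\}_{k=1}^{k_n}$, $\{h^{k_i}\}_{i=1}^{n-1}$ were arbitrary, taking the supremum first over $\x^0,\x^1$, then over the stopping criteria, and finally over the $(\varepsilon_k,\delta_k)$-DP mechanisms yields $\optdelta(\{k_i\}_{i=1}^n,\{(\varepsilon_k,\delta_k)\}_{k=1}^{k_n},\mathcal{E}) \leq \sup_{\{h^{k_i}\}_{i=1}^{n-1}}\optdelta(\{k_i\}_{i=1}^n,\{\tilde{M}_{(\varepsilon_k,\delta_k)}\}_{k=1}^{k_n},\{h^{k_i}\}_{i=1}^{n-1},\mathcal{E})$.

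I expect the main obstacle to be reconciling the adaptivity of the $M_k$ with \Lemmaref{thm:post-processing}, which is stated only for non-adaptive per-iteration mechanisms: the resolution is to push all of the history dependence into the history-dependent post-processing maps $f_k = T_k^{(\cdot)}$, so that the reconstruction of the original adaptive run becomes entirely a post-processing of a non-adaptive randomized-response run. A secondary subtlety requiring care is that $\optdelta$ without a fixed pair of databases is a supremum over all neighboring pairs, whereas \Lemmaref{thm:rr_equivalence} produces its functions $T_k$ relative to one fixed pair; this is why the pair $\x^0,\x^1$ must be fixed at the outset and the supremum over pairs taken only at the end, which is legitimate precisely because the single $T_k$ delivered by \Lemmaref{thm:rr_equivalence} handles $b=0$ and $b=1$ at once.
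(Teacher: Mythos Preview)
Your proposal is correct and follows essentially the same route as the paper: the ``$\geq$'' direction by noting that $\tilde{M}_{(\varepsilon_k,\delta_k)}$ is itself $(\varepsilon_k,\delta_k)$-DP, and the ``$\leq$'' direction by fixing a neighboring pair, invoking \Lemmaref{thm:rr_equivalence} to obtain history-dependent simulators $T_k^{(\cdot)}$, packaging them into post-processing maps (your $f_k$, the paper's $\hat{T}_k$), and then applying \Lemmaref{thm:post-processing} to transfer the ODP guarantee from the non-adaptive randomized-response run $M''$ to the post-processed run $M'$, which is distributionally identical to $M$ on the fixed pair. You are in fact more careful than the paper on two points it glosses over: that all adaptivity must be absorbed into the $f_k$'s so that \Lemmaref{thm:post-processing} applies, and that the simulators are built relative to one fixed neighboring pair, so the supremum over pairs must be taken last.
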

\begin{proof}[Proof of \Lemmaref{thm:rr_comp_equivalence}]
Since, for \(k=1,\dots,k_n\), \(\tilde{M}_{(\varepsilon_k,\delta_k)}\) fulfills \((\varepsilon_k,\delta_k)\)\hyp DP, we have that
\begin{align*}
    &\optdelta(\{k_i\}_{i=1}^n, \{(\varepsilon_k,\delta_k)\}_{k=1}^{k_n}, \mathcal{E})\\
    &= \sup_{\substack{\{M_k\}_{k=1}^{k_n}\\ \{h^{k_i}\}_{i=1}^{n-1}}} \{\optdelta(\{k_i\}_{i=1}^n, \{M_k\}_{k=1}^{k_n}, \{h^{k_i}\}_{i=1}^{n-1}, \mathcal{E})\\
    &\hphantom{= \sup_{\substack{\{M_k\}_{k=1}^{k_n}\\ \{h^{k_i}\}_{i=1}^{n-1}}} \{}
    \mid M_k \text{ is } (\varepsilon_k,\delta_k)\text{-DP}, k=1,\dots,k_n\}\\
    &\geq \sup_{\{h^{k_i}\}_{i=1}^{n-1}} \optdelta(\{k_i\}_{i=1}^n, \{\tilde{M}_{(\varepsilon_k,\delta_k)}\}_{k=1}^{k_n}, \{h^{k_i}\}_{i=1}^{n-1}, \mathcal{E}).
\end{align*}
For the converse, define an iterative algorithm \(M=\im\big[\{k_i\}_{i=1}^n, \{M_k\}_{k=1}^{k_n}, \{h^{k_i}\}_{i=1}^{n-1}\big]\), where, for \(k=1,\dots,k_n\), \(M_k\) fulfills \((\varepsilon_k,\delta_k)\)\hyp DP. Since we allow adaptive composition, \(M_k\) may depend on the outputs of \(M_1,\dots,M_{k-1}\), which we denote by \(s_1,\dots,s_{k-1}\). We write \(M_k^{s_1,\dots,s_{k-1}}\) when we want to make this dependency explicit. Fix a pair of neighboring databases \(\x^0,\x^1\). Due to \Lemmaref{thm:rr_equivalence} there exists, for every \(k\) and for every sequence of previous outputs \(s_1,\dots,s_{k-1}\), a function \(T_k^{s_1,\dots,s_{k-1}}\) such that \(T_k^{s_1,\dots,s_{k-1}}(\tilde{M}_{(\varepsilon_k,\delta_k)}(b))\) follows the same distribution as \(M_k^{s_1,\dots,s_{k-1}}(\x^b)\) for \(b=0,1\). In following Murtagh and Vadhan, we define a function \(\hat{T}(z_1,\dots,z_{k_n})\), where \(z_1,\dots,z_{k_n}\in\{0,1,2,3\}\), as follows:
\begin{equation*}
    \hat{T}_k(z_k) = T_k^{T_1(z_1),\dots,T_{k-1}(z_{k-1})}(z_k).
\end{equation*}
Then, for \(b=0,1\),
\begin{equation*}
    M(\x^b)=\im\big[\{k_i\}_{i=1}^n, \{M_k\}_{k=1}^{k_n}, \{h^{k_i}\}_{i=1}^{n-1}\big](\x^b)
\end{equation*}
follows the same distribution as
\begin{align*}
    M'(\x^b)\defeq \im\big[&\{k_i\}_{i=1}^n, \{\hat{T}_k\circ \tilde{M}_{(\varepsilon_k,\delta_k)}\}_{k=1}^{k_n},\\
    &\{h^{k_i}\}_{i=1}^{n-1}\big](\x^b).
\end{align*}
Let
\begin{align*}
    M''=\im\big[&\{k_i\}_{i=1}^n, \{\tilde{M}_{(\varepsilon_k,\delta_k)}\}_{k=1}^{k_n},\\
    &\{h^{k_i}\circ (\hat{T}_1,\dots,\hat{T}_{k_i})\}_{i=1}^{n-1}\big].
\end{align*}
From \Lemmaref{thm:rr_comp_equivalence} it follows that any ODP guarantee that holds for \(M''\) also hold for \(M'\) and thus for \(M\). Hence,
\begin{align*}
    &\optdelta(\{k_i\}_{i=1}^n, \{M_k\}_{k=1}^{k_n}, \{h^{k_i}\}_{i=1}^{n-1}, \mathcal{E})\\
    &\leq \optdelta(\{k_i\}_{i=1}^n, \{\tilde{M}_{(\varepsilon_k,\delta_k)}\}_{k=1}^{k_n},\\
    &\hphantom{\leq \optdelta(}\{h^{k_i}\circ (\hat{T}_1,\dots,\hat{T}_{k_i})\}_{i=1}^{n-1}, \mathcal{E}).
\end{align*}
Taking the supremum over the mechanisms and the stopping criteria on both sides yields
\begin{align*}
    &\optdelta(\{k_i\}_{i=1}^n, \{(\varepsilon_k,\delta_k)\}_{k=1}^{k_n}, \mathcal{E})\\
    &= \sup_{\substack{\{M_k\}_{k=1}^{k_n}\\ \{h^{k_i}\}_{i=1}^{n-1}}} \{\optdelta(\{k_i\}_{i=1}^n, \{M_k\}_{k=1}^{k_n}, \{h^{k_i}\}_{i=1}^{n-1}, \mathcal{E})\\
    &\hphantom{= \sup_{\substack{\{M_k\}_{k=1}^{k_n}\\ \{h^{k_i}\}_{i=1}^{n-1}}} \{}
    \mid M_k \text{ is } (\varepsilon_k,\delta_k)\text{-DP}, k=1,\dots,k_n\}\\
    &\leq \sup_{\hat{T},\{h^{k_i}\}_{i=1}^{n-1}} \optdelta(\{k_i\}_{i=1}^n, \{\tilde{M}_{(\varepsilon_k,\delta_k)}\}_{k=1}^{k_n},\\
    &\hphantom{\leq \sup_{\hat{T},\{h^{k_i}\}_{i=1}^{n-1}} \optdelta(}
    \{h^{k_i}\circ (\hat{T}_1,\dots,\hat{T}_{k_i})\}_{i=1}^{n-1}, \mathcal{E})\\
    &= \sup_{\{h^{k_i}\}_{i=1}^{n-1}} \optdelta(\{k_i\}_{i=1}^n, \{\tilde{M}_{(\varepsilon_k,\delta_k)}\}_{k=1}^{k_n}, \{h^{k_i}\}_{i=1}^{n-1}, \mathcal{E}).
\end{align*}
\end{proof}

\setcounterref{thmcounter}{thm:iterative_opt_delta}
\addtocounter{thmcounter}{-1}
\begin{theorem}
Let \(\{k_i\}_{i=1}^n\) be numbers of iterations, let \(\{(\varepsilon_k,\delta_k)\}_{k=1}^{k_n}\) be DP parameters and let \(\mathcal{E}\) be a function that assigns \(\varepsilon\)\hyp values to outputs of different lengths. For \(i=1,\dots,n\), define, for every \(Q\in\{0,1,2,3\}^{k_i}\) and for \(b=0,1\),
\begin{equation*}
    \Pr_b^{k_i}(Q) = \Pr\big((\tilde{M}_{(\varepsilon_1,\delta_1)}(b),\dots, \tilde{M}_{(\varepsilon_{k_i},\delta_{k_i})}(b)) \in Q\big).
\end{equation*}
For sets \(Q_{k_i}\in\{0,1,2,3\}^{k_i}\) and \(Q_{k_j}\in\{0,1,2,3\}^{k_j}\) with \(i<j\), write, with a slight abuse of notation, \(Q_{k_i}\cap Q_{k_j}=\emptyset\) if \(q_{1,\dots,k_i}\neq q'_{1,\dots,k_i}\) for all \(q\in Q_{k_i},\ q'\in Q_{k_j}\).
Then
\begin{align*}
    &\optdelta(\{k_i\}_{i=1}^n, \{(\varepsilon_k,\delta_k)\}_{k=1}^{k_n}, \mathcal{E})\\
    &= \max_{\substack{Q_{k_i}\in\{0,1,2,3\}^{k_i}\\
    i=1,\dots,n\\
    Q_{k_j}\cap Q_{k_l}=\emptyset\text{ for all } j<l}}
    \sum_{i=1}^n \big[\Pr_0^{k_i}(Q_{k_i}) - e^{\mathcal{E}(P_i)} \Pr_1^{k_i}(Q_{k_i})\big].
\end{align*}
\end{theorem}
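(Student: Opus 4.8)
The plan is to build on \Lemmaref{thm:rr_comp_equivalence}, which already reduces the quantity in question to $\sup_{\{h^{k_i}\}_{i=1}^{n-1}}\optdelta(\{k_i\}_{i=1}^n,\{\tilde M_{(\varepsilon_k,\delta_k)}\}_{k=1}^{k_n},\{h^{k_i}\}_{i=1}^{n-1},\mathcal{E})$, so that everything now happens inside the explicit randomized-response mechanism $M=\im[\{k_i\}_{i=1}^n,\{\tilde M_{(\varepsilon_k,\delta_k)}\}_{k=1}^{k_n},\{h^{k_i}\}_{i=1}^{n-1}]$, whose output set $\bigcup_{i=1}^n\{0,1,2,3\}^{k_i}$ is finite. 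As in the proof of \Lemmaref{thm:odp_partition_iterative}, I would first note that it suffices to take the stopping criteria $h^{k_i}$ deterministic, since a randomized $h$ makes $M$ a convex combination of iterative mechanisms with deterministic stopping rules and ODP is preserved under convex combinations. For a fixed deterministic $h$, unfolding \Defref{def:odp} shows that $\optdelta(\dots,h,\dots)$ is the supremum over subsets $S\subseteq\Range(M)$ of $\Pr(M(\x^0)\in S)-\sum_{i=1}^n e^{\mathcal{E}(P_i)}\Pr(M(\x^1)\in S\cap P_i)$, where $\x^0,\x^1$ denote the bit-inputs $0$ and $1$; the other ordering of the neighbours need not be considered, by the relabelling symmetry $0\leftrightarrow 3$, $1\leftrightarrow 2$ of the distribution of $\tilde M_{(\varepsilon,\delta)}$.

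Next I would slice $S$ by output length: writing $Q_{k_i}\subseteq\{0,1,2,3\}^{k_i}$ for $S\cap P_i$, we have $S=\bigsqcup_{i=1}^n Q_{k_i}$. The identity to establish is
\begin{equation*}
    \Pr(M(\x^b)\in S\cap P_i)=\Pr_b^{k_i}(Q_{k_i})\qquad(b=0,1).
\end{equation*}
This holds because a string $q$ of length $k_i$ lies in $\Range(M)$ precisely when none of its prefixes $q_{1,\dots,k_j}$ with $j<i$ triggers $h^{k_j}$ and either $i=n$ or $q$ triggers $h^{k_i}$; for such $q$ the event $\{M(\x^b)=q\}$ therefore coincides with the event that the first $k_i$ independent randomized responses equal $q$, of probability $\Pr_b^{k_i}(\{q\})$, and summing over $q\in Q_{k_i}$ gives the claim. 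Consequently $\optdelta(\dots,h,\dots)=\sup_S\sum_{i=1}^n\big[\Pr_0^{k_i}(Q_{k_i})-e^{\mathcal{E}(P_i)}\Pr_1^{k_i}(Q_{k_i})\big]$.

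The main step is then to identify, as $h$ and $S$ both vary, exactly which families $(Q_{k_i})_{i=1}^n$ of sets $Q_{k_i}\subseteq\{0,1,2,3\}^{k_i}$ occur: I claim they are precisely the \emph{prefix-disjoint} families, those with $Q_{k_j}\cap Q_{k_l}=\emptyset$ for $j<l$ in the theorem's sense. For one direction, if $q\in Q_{k_j}$ and $q'\in Q_{k_l}$ with $j<l$ shared the prefix $q_{1,\dots,k_j}=q'_{1,\dots,k_j}$, then $q$, being a length-$k_j$ output with $k_j<k_l\le k_n$ and hence $j<n$, forces $h^{k_j}(q)=1$, so $M$ always halts at iteration $k_j$ on runs whose first $k_j$ responses are $q$, contradicting that $q'$ of length $k_l$ is also an output. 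For the converse, given a prefix-disjoint family, set $h^{k_i}(p)=1$ exactly for $p\in Q_{k_i}$; prefix-disjointness ensures that for $q\in Q_{k_i}$ and $j<i$ the length-$k_j$ prefix $q_{1,\dots,k_j}$ is not in $Q_{k_j}$, so $M$ does not halt before iteration $k_i$ on such a run, and hence every element of $Q_{k_i}$ is a length-$k_i$ output of $M$; taking $S=\bigcup_{i=1}^n Q_{k_i}$ then gives $S\subseteq\Range(M)$ with $S\cap P_i=Q_{k_i}$ for all $i$. Combining with the previous paragraph, $\sup_{\{h^{k_i}\}}\optdelta(\dots,h,\dots)$ equals the supremum of $\sum_{i=1}^n[\Pr_0^{k_i}(Q_{k_i})-e^{\mathcal{E}(P_i)}\Pr_1^{k_i}(Q_{k_i})]$ over prefix-disjoint families, which is a maximum since there are finitely many such families; this is exactly \eqnref{eq:iterative_opt_delta}. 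The hard part is this last correspondence, together with the probability collapse in the displayed identity; everything else is bookkeeping.
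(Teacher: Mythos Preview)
Your proposal is correct and follows essentially the same strategy as the paper: reduce to randomized response via \Lemmaref{thm:rr_comp_equivalence}, reduce to deterministic stopping criteria, and then identify the families $(Q_{k_i})_i$ that arise from pairs $(h,S)$ precisely with the prefix-disjoint families. Your presentation differs only cosmetically---you handle the deterministic reduction by the convex-combination-preserves-ODP argument (as in the proof of \Lemmaref{thm:post-processing}) rather than the paper's explicit expansion of $R(\{\tilde h^{k_j}\},q)$, and you make explicit the $0\leftrightarrow 3,\ 1\leftrightarrow 2$ symmetry that justifies treating only one ordering of the neighbouring bits, a point the paper leaves tacit.
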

\begin{proof}
From \Lemmaref{thm:rr_comp_equivalence} we know that we only have to compute
\begin{equation}
    \label{eq:optdelta_rr}
    \sup_{\{h^{k_i}\}_{i=1}^{n-1}} \optdelta(\{k_i\}_{i=1}^n, \{\tilde{M}_{(\varepsilon_k,\delta_k)}\}_{k=1}^{k_n}, \{h^{k_i}\}_{i=1}^{n-1}, \mathcal{E}).
\end{equation}
Fix a sequence of (potentially randomized) stopping criteria \(\{\tilde{h}^{k_i}\}_{i=1}^{n-1}\). For \(q\in\{0,1,2,3\}^{k_i}\), write \(\Pr_b^{k_i}(q)=\Pr_b^{k_i}(\{q\})\) and
\begin{align*}
    &R(\{\tilde{h}^{k_j}\}_{j=1}^i,q)\\
    &= \Pr(\tilde{h}^{k_j}(q_{1,\dots,k_j})=0 \text{ for all } j<i,\ \tilde{h}^{k_i}(q)=1).
\end{align*}
Then \(\optdelta(\{k_i\}_{i=1}^n, \{\tilde{M}_{(\varepsilon_k,\delta_k)}\}_{k=1}^{k_n}, \{\tilde{h}^{k_i}\}_{i=1}^{n-1}, \mathcal{E})\) is the minimal value \(\tilde{\delta}\) such that, for all \(Q_{k_i}\in\{0,1,2,3\}^{k_i}\), \(i=1,\dots,n\),
\begin{align*}
    &\sum_{i=1}^n \sum_{q\in Q_{k_i}} R(\{\tilde{h}^{k_j}\}_{j=1}^i,q) \Pr_0^{k_i}(q)\\
    &\leq \tilde{\delta} + \sum_{i=1}^n \sum_{q\in Q_{k_i}} e^{\mathcal{E}(P_i)} R(\{\tilde{h}^{k_j}\}_{j=1}^i,q) \Pr_1^{k_i}(q),
\end{align*}
i.e.,
\begin{equation}
\label{eq:iterative_mechanism_delta_stopping}
    \sum_{i=1}^n \sum_{q\in Q_{k_i}} R(\{\tilde{h}^{k_j}\}_{j=1}^i,q) \big(\Pr_0^{k_i}(q) - e^{\mathcal{E}(P_i)} \Pr_1^{k_i}(q)\big)
    \leq \tilde{\delta}.
\end{equation}
Let \(H\) be the set of all sequences \(\{h^{k_i}\}_{i=1}^{n-1}\) of deterministic stopping criteria. Since the domain and the range of all stopping criteria are finite sets, \(H\) is also a finite set. We can thus write
\begin{align*}
    &\sum_{i=1}^n \sum_{q\in Q_{k_i}} R(\{\tilde{h}^{k_j}\}_{j=1}^i,q) \big(\Pr_0^{k_i}(q) - e^{\mathcal{E}(P_i)} \Pr_1^{k_i}(q)\big)\\
    &=\sum_{\{h^{k_i}\}_{i=1}^{n-1} \in H} \sum_{i=1}^n \sum_{q\in Q_{k_i}} \big[R(\{h^{k_j}\}_{j=1}^i,q)\\
    &\hphantom{=\big[} \Pr(\{\tilde{h}^{k_i}\}_{i=1}^{n-1} = \{h^{k_i}\}_{i=1}^{n-1}) \big(\Pr_0^{k_i}(q) - e^{\mathcal{E}(P_i)} \Pr_1^{k_i}(q)\big)\big]\\
    &\leq \sum_{\{h^{k_i}\}_{i=1}^{n-1} \in H} \sum_{i=1}^n \sum_{q\in Q_{k_i}} \big[\max_{\{\bar{h}^{k_i}\}_{i=1}^{n-1} \in H} R(\{\bar{h}^{k_j}\}_{j=1}^i,q)\\
    &\hphantom{=\big[} \Pr(\{\tilde{h}^{k_i}\}_{i=1}^{n-1} = \{h^{k_i}\}_{i=1}^{n-1}) \big(\Pr_0^{k_i}(q) - e^{\mathcal{E}(P_i)} \Pr_1^{k_i}(q)\big)\big]\\
    &= \sum_{i=1}^n \sum_{q\in Q_{k_i}} \big[\max_{\{\bar{h}^{k_i}\}_{i=1}^{n-1} \in H} R(\{\bar{h}^{k_j}\}_{j=1}^i,q)\\
    &\hphantom{= \sum_{i=1}^n \sum_{q\in Q_{k_i}} \big[} \big(\Pr_0^{k_i}(q) - e^{\mathcal{E}(P_i)} \Pr_1^{k_i}(q)\big)\\
    &\hphantom{= \sum_{i=1}^n \sum_{q\in Q_{k_i}} \big[} \sum_{\{h^{k_i}\}_{i=1}^{n-1} \in H} \Pr(\{\tilde{h}^{k_i}\}_{i=1}^{n-1} = \{h^{k_i}\}_{i=1}^{n-1})\big]\\
    &= \sum_{i=1}^n \sum_{q\in Q_{k_i}} \max_{\{\bar{h}^{k_i}\}_{i=1}^{n-1} \in H} R(\{\bar{h}^{k_j}\}_{j=1}^i,q)\\
    &\hphantom{= \sum_{i=1}^n \sum_{q\in Q_{k_i}}} \big(\Pr_0^{k_i}(q) - e^{\mathcal{E}(P_i)} \Pr_1^{k_i}(q)\big).
\end{align*}
Thus, the worst\hyp case in terms of \(\delta\) is achieved for a sequence of deterministic stopping criteria. In \eqnref{eq:optdelta_rr} we therefore only have to take the supremum over deterministic stopping criteria, and can replace it with a maximum, since the set \(H\) of these criteria is finite.
For deterministic stopping criteria \(\{h^{k_i}\}_{i=1}^{n-1}\), the values 
\begin{align*}
    &R(\{h^{k_j}\}_{j=1}^i,q)\\
    &= \Pr(h^{k_j}(q_{1,\dots,k_j})=0 \text{ for all } j<i,\ h^{k_i}(q)=1)
\end{align*}
are each either \(0\) or \(1\). We have, using \(\mathbf{1}\) as the indicator function:

\begingroup
\allowdisplaybreaks 
\begin{align}
    &\max_{\{h^{k_i}\}_{i=1}^{n-1}} \optdelta(\{k_i\}_{i=1}^n, \{\tilde{M}_{(\varepsilon_k,\delta_k)}\}_{k=1}^{k_n}, \{h^{k_i}\}_{i=1}^{n-1}, \mathcal{E})\nonumber\\
    &=\max_{\{h^{k_i}\}_{i=1}^{n-1}\in H}
    \max_{\substack{Q_{k_i}\in\{0,1,2,3\}^{k_i}\\
    i=1,\dots,n}}\nonumber\\
    &\hphantom{=} \sum_{i=1}^n \sum_{q\in Q_{k_i}} \big[\mathbf{1}(h^{k_j}(q_{1,\dots,k_j})=0\nonumber\\
    &\hphantom{= \sum_{i=1}^n \sum_{q\in Q_{k_i}} \big[} \text{for all } j<i,\ h^{k_i}(q)=1)\label{eq:optdelta_final_eq_1}\\
    &\hphantom{=\sum_{i=1}^n \sum_{q\in Q_{k_i}} \big[}\big(\Pr_0^{k_i}(q) - e^{\mathcal{E}(P_i)} \Pr_1^{k_i}(q)\big)\big]\nonumber\\
    &=\max_{\substack{Q_{k_i}\in\{0,1,2,3\}^{k_i}\\
    i=1,\dots,n\\
    Q_{k_j}\cap Q_{k_l}=\emptyset\text{ for all } j<l}}
    \sum_{i=1}^n \sum_{q\in Q_{k_i}}
    \big[\Pr_0^{k_i}(q) - e^{\mathcal{E}(P_i)} \Pr_1^{k_i}(q)\big]\label{eq:optdelta_final_eq_2}\\
    &=\max_{\substack{Q_{k_i}\in\{0,1,2,3\}^{k_i}\\
    i=1,\dots,n\\
    Q_{k_j}\cap Q_{k_l}=\emptyset\text{ for all } j<l}}
    \sum_{i=1}^n
    \big[\Pr_0^{k_i}(Q_{k_i}) - e^{\mathcal{E}(P_i)} \Pr_1^{k_i}(Q_{k_i})\big].\nonumber
\end{align}
\endgroup
The equality between \eqnref{eq:optdelta_final_eq_1} and \eqnref{eq:optdelta_final_eq_2} holds because of the following:

\(\text{(\ref{eq:optdelta_final_eq_1})}\leq\text{(\ref{eq:optdelta_final_eq_2})}\) since for every sequence of tests \(\{h^{k_i}\}_{i=1}^{n-1}\) and for every sequence of sets \(\{Q_{k_i}\}_{i=1}^n\) we get the same quantity in \eqnref{eq:optdelta_final_eq_2} as in \eqnref{eq:optdelta_final_eq_1} by choosing that sequence of subsets of the sets \(\{Q_{k_i}\}_{i=1}^n\) that excludes those elements \(q\) for which
\begin{equation*}
    \mathbf{1}(h^{k_j}(q_{1,\dots,k_j})=0 \text{ for all } j<i,\ h^{k_i}(q)=1)=0.
\end{equation*}
This sequence of subsets fulfills \(Q_{k_j}\cap Q_{k_l}=\emptyset\text{ for all } j<l\) because of the following observation: let \(q\in\{0,1,2,3\}^{k_i}\) for some \(i>1\). If \(h^{k_l}(q_{1,\dots,k_l}) = 1\) for some \(l<i\), then
\begin{equation*}
    \mathbf{1}(h^{k_j}(q_{1,\dots,k_j})=0 \text{ for all } j<i,\ h^{k_i}(q)=1) = 0.
\end{equation*}
Hence, the union of the sequence of subsets will not include elements \(q_{1,\dots,k_l}\) and \(q_{1,\dots,k_i}\), i.e., elements where one is the prefix of the other, at the same time.

\(\text{(\ref{eq:optdelta_final_eq_2})}\leq\text{(\ref{eq:optdelta_final_eq_1})}\) since for every sequence of sets \(\{Q_{k_i}\}_{i=1}^n\) that fulfills \(Q_{k_j}\cap Q_{k_l}=\emptyset\text{ for all } j<l\), we can define a sequence of tests \(\{h^{k_i}\}_{i=1}^{n-1}\) such that, for every \(i=1,\dots,n\) and every \(q\in Q_{k_i}\), we have \(h^{k_j}(q_{1,\dots,k_j})=0 \text{ for all } j<i\) and \(h^{k_i}(q)=1\).
\end{proof}

\setcounter{thmcounter}{\value{savethmcounter_app_iteration_length}}
\label{sec:proofs_comp_direct}

\section{Proof of \Thmref{thm:log_reg_with_test}}
\newcounter{savethmcounter_app_log_reg}
\setcounter{savethmcounter_app_log_reg}{\value{thmcounter}}
\setcounterref{thmcounter}{thm:log_reg_with_test}
\addtocounter{thmcounter}{-1}

\begin{theorem}
Let \(\varepsilon_1,\varepsilon_2>0\) and let
\begin{equation*}
    a = \max\left(\frac{2}{n^{\text{test}}},2\left(\exp\left(\frac{2}{n^{\text{train}}\Lambda}\right) - 1\right)\right).
\end{equation*}
Then \Algref{alg:log_reg_with_test} is \((\{\R^d,\{\bot\}\},\mathcal{E},0)\)\hyp ODP with
\begin{equation*}
    \mathcal{E}(\R^d) = \max\left(\varepsilon_1, \frac{2/n^{\text{test}}}{a}\varepsilon_2\right)
\end{equation*}
and
\begin{equation*}
    \mathcal{E}(\{\bot\}) = \min(\mathcal{E}(\R^d), \varepsilon_2)
\end{equation*}
w.r.t.\ a change in either \(\x^{\text{train}}\) or \(\x^{\text{test}}\).
\end{theorem}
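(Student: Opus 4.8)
The plan is to reduce the claim to two subset-differential-privacy statements and then apply \Lemmaref{thm:sum_deltas} with zero per-part \(\delta\): it suffices to show that \(\LogRegTest\) is \((\R^d,\mathcal{E}(\R^d),0)\)- and \((\{\bot\},\mathcal{E}(\{\bot\}),0)\)-subset differentially private, in each case w.r.t.\ a change of one record in the training set (test set fixed) and w.r.t.\ a change of one record in the test set (training set fixed) --- four statements in all. Write \(\delta_0=2/(n^{\text{train}}\Lambda)\), so that \(a=\max(2/n^{\text{test}},\,2(e^{\delta_0}-1))\), and let \(r\sim\Lap(a/\varepsilon_2)\) be the noise of \Algref{alg:log_reg_with_test}. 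I will use three ingredients. (i) The hypotheses of \Thmref{thm:erm_output_perturbation} hold for logistic regression (\(N_{L_2}\) is differentiable and \(1\)-strongly convex; \(l_{\text{LogReg}}\) is convex and differentiable with \(\abs{l_{\text{LogReg}}'(z)}=1/(1+e^{z})\le 1\)), so \(p_{\text{min}}\) has \(L_2\)-sensitivity at most \(\delta_0\) and \(\ERMOutput(\cdot,\varepsilon_1,l_{\text{LogReg}},N_{L_2},\Lambda)\) is \(\varepsilon_1\)-DP in its training-set argument. (ii) Writing \(\sigma(u)=1/(1+e^{-u})\), a short computation gives \(\abs{\sigma(u)-\sigma(u')}\le e^{\abs{u-u'}}-1\); since \(h_p(x)=2\sigma(p^{\text{T}}x)-1\) and \(\norm{x}_2\le 1\), this yields \(\abs{s(p,\x^{\text{test}})-s(p',\x^{\text{test}})}\le 2(e^{\norm{p-p'}_2}-1)\), which is \(\le a\) whenever \(\norm{p-p'}_2\le\delta_0\); and replacing one record of \(\x^{\text{test}}\) moves \(s(p,\cdot)\) by at most \(2/n^{\text{test}}\le a\), since each summand lies in \([0,2]\). (iii) The Laplace shift bound: if \(\abs{u-u'}\le\Delta\), then for every event \(E\), \(\Pr(u+r\in E)\le e^{\varepsilon_2\Delta/a}\Pr(u'+r\in E)\).

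\emph{Change in the training set.} Fix \(\x^{\text{test}}\). Then \(\LogRegTest(\cdot,\x^{\text{test}})\) is \(\ERMOutput(\cdot,\varepsilon_1,\dots)\) composed with a randomized map that reads only \(\tilde p_{\text{min}}\), the public parameters, and \(\x^{\text{test}}\) (draw \(r\), test \(s(\tilde p_{\text{min}},\x^{\text{test}})+r\le t\), then output \(\tilde p_{\text{min}}\) or \(\bot\)); by the post-processing property of DP \cite[Prop.~2.1]{dwork2014differential} and (i), \(\LogRegTest(\cdot,\x^{\text{test}})\) is \(\varepsilon_1\)-DP in its training-set argument. As \(\mathcal{E}(\R^d)\ge\varepsilon_1\), this gives the \(\R^d\)-statement and also bounds the \(\bot\)-probability by a factor \(e^{\varepsilon_1}\). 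For a sharper bound on the \(\bot\)-probability, couple the noise \(q\) of \(\ERMOutput\) (whose law \(\nu\) is data-independent) across the two neighboring training sets, and put \(p_1=p_{\text{min}}(\x^{\text{train}})\), \(p_2=p_{\text{min}}(\xp^{\text{train}})\), so \(\norm{p_1-p_2}_2\le\delta_0\). Then
\begin{equation*}
\Pr\big(\LogRegTest(\x^{\text{train}},\x^{\text{test}})=\bot\big)=\int\Pr_r\big(r>t-s(p_1+q,\x^{\text{test}})\big)\,d\nu(q),
\end{equation*}
and likewise with \(p_2\) in place of \(p_1\). Since \(\norm{(p_1+q)-(p_2+q)}_2=\norm{p_1-p_2}_2\le\delta_0\), ingredient (ii) bounds the gap between the two thresholds by \(2(e^{\delta_0}-1)\le a\), so (iii) applied pointwise in \(q\) and then integrated gives a factor \(e^{\varepsilon_2\cdot 2(e^{\delta_0}-1)/a}\le e^{\varepsilon_2}\). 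Combining the two bounds, the \(\bot\)-probability changes by at most \(e^{\min(\varepsilon_1,\varepsilon_2)}\), and \(\min(\varepsilon_1,\varepsilon_2)\le\min(\mathcal{E}(\R^d),\varepsilon_2)=\mathcal{E}(\{\bot\})\) because \(\mathcal{E}(\R^d)=\max(\varepsilon_1,\tfrac{2/n^{\text{test}}}{a}\varepsilon_2)\ge\varepsilon_1\).

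\emph{Change in the test set.} Now \(\tilde p_{\text{min}}\) has the same law \(\mu\) in both worlds, so for \(S\subseteq\R^d\),
\begin{equation*}
\Pr\big(\LogRegTest(\x^{\text{train}},\x^{\text{test}})\in S\big)=\int_S\Pr_r\big(r\le t-s(p,\x^{\text{test}})\big)\,d\mu(p),
\end{equation*}
and (iii) with \(\Delta=\abs{s(p,\x^{\text{test}})-s(p,\xp^{\text{test}})}\le 2/n^{\text{test}}\), integrated over \(p\), gives a factor \(e^{(2/n^{\text{test}})\varepsilon_2/a}=e^{\frac{2/n^{\text{test}}}{a}\varepsilon_2}\le e^{\mathcal{E}(\R^d)}\); the same computation with the event \(r>t-s(p,\x^{\text{test}})\) treats \(S=\{\bot\}\), and \(\frac{2/n^{\text{test}}}{a}\varepsilon_2\le\min(\mathcal{E}(\R^d),\varepsilon_2)=\mathcal{E}(\{\bot\})\) because \(a\ge 2/n^{\text{test}}\). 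Having verified all four subset-DP statements, \Lemmaref{thm:sum_deltas} with the partition \(\{\R^d,\{\bot\}\}\) and zero per-part \(\delta\) yields the asserted \((\{\R^d,\{\bot\}\},\mathcal{E},0)\)-ODP guarantee.

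The main obstacle is the \(\bot\)-output under a change in the training set: plain post-processing gives only the bound \(\varepsilon_1\), which is insufficient when \(\varepsilon_2<\varepsilon_1\). The fix is the noise-coupling identity above, which makes precise that on a \(\bot\)-output the noisy model is never released, so the only leakage flows through the \(\Lap(a/\varepsilon_2)\)-noised comparison, whose effective sensitivity is at most \(a\) by the continuity of \(s\). The delicate steps are the pointwise-in-\(q\) coupling of \(\ERMOutput\)'s noise and the sigmoid estimate in (ii).
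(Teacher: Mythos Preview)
Your proposal is correct and follows essentially the same approach as the paper. The paper packages your ingredients (ii)--(iii) into a separate lemma asserting that \(s(p_{\min}(\x^{\text{train}})+Q,\x^{\text{test}})+R\) is \(\varepsilon_2\)-DP in the training set and \(s(p,\x^{\text{test}})+R\) is \((\tfrac{2/n^{\text{test}}}{a}\varepsilon_2)\)-DP in the test set, and then argues the two subset-DP bounds via parallel composition; your explicit noise-coupling integral over \(q\) and the integration over \(\mu\) are precisely what underlies that lemma, and your sigmoid bound \(\abs{\sigma(u)-\sigma(u')}\le e^{\abs{u-u'}}-1\) is the same estimate the paper derives.
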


\setcounter{thmcounter}{\value{savethmcounter_app_log_reg}}

The main ingredient for the proof of \Thmref{thm:log_reg_with_test} is the following lemma about the error function \(s\):
\begin{lemma}
\label{thm:error_function}
Let \(\varepsilon_1, \varepsilon_2>0\). Let \(Q\) be a random vector with element\hyp wise density proportional to
    \begin{equation*}
        \exp\left(-\frac{n^{\text{train}}\Lambda\varepsilon_1}{2}\norm{q}_2\right).
    \end{equation*}
Let further
\begin{equation*}
    a = \max\left(\frac{2}{n^{\text{test}}},2\left(\exp\left(\frac{2}{n^{\text{train}}\Lambda}\right) - 1\right)\right),
\end{equation*}
and let \(R\) be distributed according to \(\Lap(a/\varepsilon_2)\).
Then:
\begin{enumerate}
    \item \label{enum:error_function_1} Let \(\x^{\text{test}}\) be a fixed database.
    The function
    \begin{equation*}
        g_1(\x^{\text{train}}) = s(p_{\text{min}}(\x^{\text{train}}) + Q,\x^{\text{test}}) + R
    \end{equation*}
    fulfills \(\varepsilon_2\)\hyp DP.
    \item \label{enum:error_function_2} Let \(p\) be a fixed vector. The function
    \begin{equation*}
        g_2(\x^{\text{test}}) = s(p,\x^{\text{test}}) + R
    \end{equation*}
    fulfills \(\left(\frac{2/n^{\text{test}}}{a}\varepsilon_2\right)\)\hyp DP.
\end{enumerate}
\end{lemma}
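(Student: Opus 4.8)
The plan is to read $g_1$ and $g_2$ as instances of the Laplace mechanism: each is a bounded-sensitivity function of the relevant database plus $\Lap(a/\varepsilon_2)$ noise, and the two parts of the lemma differ only in which database varies and hence in which sensitivity bound must be weighed against the scale $a/\varepsilon_2$.

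I would handle the second claim first, as it is the simpler one. With $p$ fixed, $s(p,\cdot)$ is an average over the test records of the terms $|h_p(x)-y|$, each of which lies in $[0,2]$ since $h_p(x)\in[-1,1]$ and $y\in[-1,1]$; replacing one record of $\x^{\text{test}}$ therefore changes $\sum_{(x,y)\in\x^{\text{test}}}|h_p(x)-y|$ by at most $2$, so the $L_1$-sensitivity of $\x^{\text{test}}\mapsto s(p,\x^{\text{test}})$ is at most $2/n^{\text{test}}$. Thus $g_2(\x^{\text{test}})=s(p,\x^{\text{test}})+R$ is the Laplace mechanism with scale $a/\varepsilon_2$ on a function of sensitivity $\le 2/n^{\text{test}}$, which is $\big(\tfrac{2/n^{\text{test}}}{a}\varepsilon_2\big)$-DP --- exactly part~\ref{enum:error_function_2} --- and since $a\ge 2/n^{\text{test}}$ this is in particular $\varepsilon_2$-DP.

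For the first claim I would condition on the auxiliary noise $Q$ and bound, uniformly in the realization $q$, the $L_1$-sensitivity of the deterministic function $\phi_q(\x^{\text{train}})=s\big(p_{\text{min}}(\x^{\text{train}})+q,\x^{\text{test}}\big)$. By \Thmref{thm:erm_output_perturbation}, neighboring training sets satisfy $\norm{p_{\text{min}}(\x^{\text{train}})-p_{\text{min}}(\xp^{\text{train}})}_2\le\tfrac{2}{n^{\text{train}}\Lambda}$. Moreover, writing $\sigma(z)=1/(1+e^{-z})$ so that $h_p(x)=2\sigma(p^{\text{T}}x)-1$, for any $p,p'$ and any covariate with $\norm{x}_2\le 1$ we have $|h_p(x)-h_{p'}(x)|=2|\sigma(p^{\text{T}}x)-\sigma((p')^{\text{T}}x)|\le 2\big(e^{|(p-p')^{\text{T}}x|}-1\big)\le 2\big(e^{\norm{p-p'}_2}-1\big)$, using that $\sigma$ is $\tfrac14$-Lipschitz together with $e^t-1\ge t$; averaging over the test records gives $|s(p,\x^{\text{test}})-s(p',\x^{\text{test}})|\le 2\big(e^{\norm{p-p'}_2}-1\big)$. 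Chaining the two estimates, $\phi_q$ has sensitivity at most $2\big(e^{2/(n^{\text{train}}\Lambda)}-1\big)\le a$, uniformly in $q$, so for each fixed $q$ the map $\x^{\text{train}}\mapsto\phi_q(\x^{\text{train}})+R$ is a Laplace mechanism of sensitivity $\le a$ with scale $a/\varepsilon_2$, hence $\varepsilon_2$-DP. Finally $g_1(\x^{\text{train}})$ is the mixture of these mechanisms over the law of $Q$, which does not depend on $\x^{\text{train}}$, and a convex combination of $\varepsilon_2$-DP mechanisms is $\varepsilon_2$-DP; this gives part~\ref{enum:error_function_1}.

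I expect the only step needing real care to be the per-record estimate $|h_p(x)-h_{p'}(x)|\le 2\big(e^{\norm{p-p'}_2}-1\big)$ and the accompanying bookkeeping that ties it to the constant $a$, so that part~\ref{enum:error_function_1} yields exactly the factor $\varepsilon_2$ rather than something worse; once that bound is in place, both parts are routine applications of the Laplace-mechanism guarantee, the sensitivity bound for $p_{\text{min}}$ from \Thmref{thm:erm_output_perturbation}, and closure of differential privacy under convex combinations. A secondary point is to fix the intended neighboring relation (a single record replaced), under which the constants $2/n^{\text{test}}$ and $\tfrac{2}{n^{\text{train}}\Lambda}$ are the clean ones.
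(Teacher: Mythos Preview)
Your proposal is correct and follows essentially the same route as the paper: for part~\ref{enum:error_function_2} you bound the sensitivity of $s(p,\cdot)$ by $2/n^{\text{test}}$ and apply the Laplace mechanism, and for part~\ref{enum:error_function_1} you condition on $Q=q$, use the $L_2$-sensitivity bound on $p_{\text{min}}$ from \Thmref{thm:erm_output_perturbation} to control $|h_{p+q}(x)-h_{p'+q}(x)|$ by $2(e^{2/(n^{\text{train}}\Lambda)}-1)\le a$, apply the Laplace mechanism for each fixed $q$, and then mix over the database-independent law of $Q$. The only cosmetic difference is that the paper obtains the sigmoid-difference bound by a direct algebraic manipulation of $\frac{1}{1+e^{-w_1}}-\frac{1}{1+e^{-w_1+w_2}}$, whereas you invoke the $\tfrac14$-Lipschitz property of $\sigma$ together with $e^t-1\ge t$; your argument in fact yields the tighter estimate $\tfrac12(e^{\|p-p'\|_2}-1)$, but the weaker bound you state is all that is needed to match $a$.
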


\begin{proof}[Proof of \Thmref{thm:log_reg_with_test}]
In the case where the noisy value of the error function is not larger than the threshold in \Algref{alg:log_reg_with_test}, the adversary learns the noisy model parameters and thereby implicitly the result of the threshold comparison in line~\ref{alg_line:comparison_error}. If the noisy error exceeds the threshold, the adversary only learns the result of the threshold comparison, i.e., a subset of the information of the first case. Thus, a DP guarantee for \Algref{alg:log_reg_with_test} can be given by composing the guarantee of \Algref{alg:erm_output_perturbation} with a guarantee for the comparison. Since the comparison is a post\hyp processing of the noisy error, and the vector input to the error function is already differentially private, we can instead use the guarantee for \(g_1\). \Algref{alg:erm_output_perturbation} only accesses \(\x^{\text{train}}\) and \(g_1\) only accesses \(\x^{\text{test}}\), and thus we can use parallel composition. This yields a DP\hyp guarantee of
\begin{equation*}
    \max\left(\varepsilon_1, \frac{2/n^{\text{test}}}{a}\varepsilon_2\right),
\end{equation*}
and the same subset DP\hyp guarantees for the sets \(\R^d\) and \(\{\bot\}\) according to \Lemmaref{thm:dp_to_odp}.

To get a refined privacy guarantee for the case of a \(\bot\)\hyp output, we use the fact that in this case the adversary does not learn the noisy parameter vector but only the result of the comparison. As above, we can instead assume that the adversary receives the noisy error directly. We hence compute a DP guarantee for the computation of the error function \(s\) in line~\ref{alg_line:comparison_error} w.r.t.\ a change of one record in either \(\x^{\text{train}}\) or \(\x^{\text{test}}\). This is given as the maximum of the DP guarantees of \(g_1\) and \(g_2\), which is
\begin{equation*}
    \max\left(\varepsilon_2,\frac{2/n^{\text{test}}}{a}\varepsilon_2\right) = \varepsilon_2.
\end{equation*}
Thus, \Algref{alg:log_reg_with_test} is \((\{\bot\},\tilde{\varepsilon},0)\)\hyp subset DP for
\begin{equation*}
    \tilde{\varepsilon} = \min\left(\max\left(\varepsilon_1, \frac{2/n^{\text{test}}}{a}\varepsilon_2\right),\varepsilon_2\right).
\end{equation*}
\end{proof}

\begin{proof}[Proof of \Lemmaref{thm:error_function}]
\leavevmode

\xhdr{Privacy w.r.t.\ \(\x^{\text{train}}\)}
We first analyze the sensitivity of the (non\hyp noisy) error function \(s\) w.r.t. \(\x^{\text{train}}\) when adding a fixed vector \(q\) to the parameter vector. For neighboring databases \(\x_0^{\text{train}},\ \x_1^{\text{train}}\), and \(p=p_{\text{min}}(\x_0^{\text{train}}),\ p'=p_{\text{min}}(\x_1^{\text{train}})\) we have
\begin{align*}
    &\abs{s(p+q,\x^{\text{test}}) - s(p'+q,\x^{\text{test}})}\\
    &= \frac{1}{n^{\text{test}}} \sum_{(x,y)\in \x^{\text{test}}} \abs{\abs{h_{p+q}(x) - y} - \abs{h_{p'+q}(x) - y}}.
\end{align*}
For any real numbers \(u, u', v\) it holds that
\begin{equation}
\label{eq:abs_ineq}
    \abs{\abs{u-v}-\abs{u'-v}}\leq \abs{u-u'},
\end{equation}
which can easily be shown by a case\hyp by\hyp case analysis.
Further, we know from \Thmref{thm:erm_output_perturbation} that \(\norm{p-p'}_2\leq \frac{2}{n^{\text{train}}\Lambda}\) and hence, by the Cauchy\hyp Schwarz inequality,
\begin{align}
    \abs{p^{\text{T}} x - p'^{\text{T}} x} &= \abs{(p-p')^{\text{T}} x}\nonumber\\
    &\leq \norm{p-p'}_2 \norm{x}_2\nonumber\\
    &\leq \frac{2}{n^{\text{train}}\Lambda} \times 1.\label{eq:cauchy-schwarz}
\end{align}
Let \(w_1=(p+q)^{\text{T}} x\) and \(w_2=\frac{2}{n^{\text{train}}\Lambda}\). Then
\begingroup
\allowdisplaybreaks
\begin{align*}
    &\abs{\abs{h_{p+q}(x) - y} - \abs{h_{p'+q}(x) - y}}\\
    &\overset{\text{(\ref{eq:abs_ineq})}}{\leq} \abs{h_{p+q}(x) - h_{p'+q}(x)}\\
    &= 2 \left\lvert\frac{1}{1+\exp(-(p+q)^{\text{T}} x)} - \frac{1}{1+\exp(-(p'+q)^{\text{T}} x)}\right\rvert\\
    &\overset{\text{(\ref{eq:cauchy-schwarz})}}{\leq} 2 \left\lvert\frac{1}{1+\exp(-(p+q)^{\text{T}} x)}\right.\\
    &\hphantom{\overset{\text{(\ref{eq:cauchy-schwarz})}}{\leq}2 \left\lvert\right.} \left.-\frac{1}{1+\exp\left(-(p+q)^{\text{T}} x + \frac{2}{n^{\text{train}}\Lambda}\right)}\right\rvert\\
    &= 2 \left(\frac{1}{1+e^{-w_1}} - \frac{1}{1+e^{-w_1 + w_2}}\right)\\
    &= 2\left(\frac{e^{w_1}}{e^{w_1} + 1} - \frac{e^{w_1}}{e^{w_1} + e^{w_2}}\right)\\
    &= 2\frac{e^{w_1}(e^{w_2}-1)}{(e^{w_1} + 1)(e^{w_1} + e^{w_2})}\\
    &\leq 2\frac{e^{w_1}(e^{w_2}-1)}{(e^{w_1} + 1)e^{w_1}}\\
    &= 2\frac{e^{w_2}-1}{e^{w_1} + 1}\\
    &\leq 2(e^{w_2}-1)\\
    &= 2\left(\exp\left(\frac{2}{n^{\text{train}}\Lambda}\right) - 1\right).
\end{align*}
\endgroup
Hence,
\begin{align*}
    &\abs{s(p+q,\x^{\text{test}}) - s(p'+q,\x^{\text{test}})}\\
    &\leq 2\left(\exp\left(\frac{2}{n^{\text{train}}\Lambda}\right) - 1\right).
\end{align*}
Let \(C\) be a set of real numbers and denote by \(f_R\) the density of \(R\). According to the Laplace mechanism, we have
\begin{align*}
    &\int_{c\in C} f_R(c-s(p+q,\x^{\text{test}})) d c\\
    &\leq \exp\left(\varepsilon_2 \frac{2\left(\exp\left(\frac{2}{n^{\text{train}}\Lambda}\right) - 1\right)}{a}\right)\\
    &\hphantom{\leq} \int_{c\in C} f_R(c-s(p'+q,\x^{\text{test}})) d c\\
    &\leq e^{\varepsilon_2} f_R(c-s(p'+q,\x^{\text{test}})) d c.
\end{align*}
Therefore, writing \(f_Q\) for the density of \(Q\):
\begin{align*}
    &\Pr(s(p+Q,\x^{\text{test}}) + R \in C)\\
    &= \int_{-\infty}^{\infty} f_Q(q) \int_{c\in C} f_R(c-s(p+q,\x^{\text{test}})) d c d q\\
    &\leq \int_{-\infty}^{\infty} f_Q(q) \int_{c\in C} e^{\varepsilon_2} f_R(c-s(p'+q,\x^{\text{test}})) d c d q\\
    &= e^{\varepsilon_2} \Pr(s(p'+Q,\x^{\text{test}}) + R \in C).
\end{align*}

\xhdr{Privacy w.r.t.\ \(\x^{\text{test}}\)}
Since \(y\in[-1,1]\) and \(h_p(x) \in [-1,1]\), the sensitivity of \(s(p,\x^{\text{test}})\) (with fixed vector \(p\)) w.r.t.\ \(\x^{\text{test}}\) is \(\frac{2}{n^{\text{test}}}\). Thus, adding noise from \(\Lap\left(\frac{2}{n^{\text{test}}}/\varepsilon_2\right)\) would make the output of the error function \(\varepsilon_2\)\hyp DP. Since we add noise from \(\Lap(a/\varepsilon_2)\) instead, the output is \(\left(\frac{2/n^{\text{test}}}{a}\varepsilon_2\right)\)\hyp DP.
\end{proof}
\label{sec:proof_log_reg_with_test}

\end{document}